\newcommand*{\otoprule}{\midrule[\heavyrulewidth]}
\newcommand*{\cmark}{\ding{51}}
\newcommand*{\xmark}{\ding{55}}
\tikzset{
	state/.style=
    {circle, draw, align=center, auto, initial text={}}, 
	>=stealth,
	loopright/.style={loop,looseness=5,out=35, in=-35},
	loopleft/.style={loop,looseness=5,out=215, in=145},
	loopabove/.style={loop,looseness=5,out=125, in=55},
	loopbelow/.style={loop,looseness=5,out=-55, in=-125}
}
\tikzstyle{cirre}=[draw=green!60!red, fill=green!20!white,circle,minimum size=1.4em,inner sep=0em]                                                                                                      
\tikzstyle{cir1re}=[draw=red!80!violet, fill=red!20!white,circle,minimum size=1.4em,inner sep=0em]  
\tikzstyle{cir}=[draw=violet, fill=red!20!white,circle,minimum size=1.4em,inner sep=0em]                                                                                                      
 \tikzstyle{dia}=[draw=green!80!red, fill=green!20!white, diamond,minimum size=1.4em,inner sep=0.1em]                                                                                                      
\tikzstyle{cir1}=[draw=violet, fill=violet!20!white,circle,minimum size=1.4em,inner sep=0em]
\tikzstyle{background}=[rectangle,fill=gray!10, inner sep=0.1cm, rounded corners=0mm]
\tikzstyle{loc}=[draw,rectangle,minimum size=1.4em,inner sep=0em]
\tikzstyle{trans}=[-latex, rounded corners]
\tikzstyle{trans2}=[-latex, dashed, rounded corners]
\tikzset{snake it/.style={decorate, decoration=snake}}%
\tikzstyle{inv}=[]
\tikzstyle{varpass}=[]
\crefname{section}{Sect.}{Sect.}
\Crefname{section}{Section}{Sections}
\crefname{listing}{List.}{List.}
\crefname{listing}{Listing}{Listings}
\Crefname{listing}{Listing}{Listings}
\def\myshift#1{\raisebox{1.5ex}}
\newcommand{\until}{\mathbin{\mathbf{{U}}}}
\newcommand{\release}{\mathbin{\mathbf{{R}}}}
\newcommand{\since}{\mathbin{\mathbf{{S}}}}
\newcommand{\trigger}{\mathbin{\mathbf{{T}}}}
\newcommand{\tctl}{\textmd{\textup{\textsf{TCTL}}}}
\newcommand{\mtl}{\textmd{\textup{\textsf{MTL}}}}
\newcommand{\ltl}{\textmd{\textup{\textsf{LTL}}}}
\newcommand{\stl}{\textmd{\textup{\textsf{STL}}}}
\newcommand{\mitlpp}{\textmd{\textup{\textsf{MITPPL}}}}
\newcommand{\tptl}{\textmd{\textup{\textsf{TPTL}}}}
\newcommand{\uptl}{\textmd{\textup{\textsf{UPTL}}}}
\newcommand{\mightyppl}{\textmd{\textup{\textsc{MightyPPL}}}}
\newcommand{\mightyl}{\textmd{\textup{\textsc{MightyL}}}}
\newcommand{\mitl}{\textmd{\textup{\textsf{MITL}}}}
\newcommand{\mtlpp}{\textmd{\textup{\textsf{MTLPPL}}}}
\newcommand{\qtwomlo}{\mathsf{Q2MLO}}
\newcommand*{\gta}[1]{\mathsf{GTA}\text{#1}}
\newcommand*{\ta}[1]{\mathsf{TA}\text{#1}}
\newcommand*{\dc}[1]{} %
\newcommand*{\simplify}[1]{\overline{#1}}
\newcommand*{\newneg}[1]{\neg{\overline{#1}}}
\newcommand{\expspace}{\mathsf{EXPSPACE}}
\newcommand{\pspace}{\mathsf{PSPACE}}
\newcommand*{\AP}{\mathsf{AP}}
\newcommand{\F}{\mathcal{F}}
\newcommand{\diamondminus}{%
  \sbox0{$\lozenge$}%
  \usebox0\kern-.5\wd0\clap{\raisebox{.1ex}{\scalebox{.7}[1]{$-$}}}\kern.5\wd0%
}
\DeclareMathOperator{\past}{\once}
\DeclareMathOperator{\Boxminus}{\oset{\leftarrow}{\LTLsquare}}
\DeclareMathOperator{\nm}{\oset[-1pt]{\leftarrow}{\LTLcircle}}
\DeclareMathOperator{\nex}{\LTLcircle}
\newcommand*\Transitions{\Delta}
\newcommand{\R}{\mathbb{R}_{\geq 0}}
\newcommand{\N}{\mathbb{N}}
\newcommand{\I}{\mathbb{I}}
\renewcommand{\max}{\mathsf{max}}
\newcommand{\Guards}{\mathcal{G}}
\newcommand*\sem[1]{\ensuremath{\llbracket#1\rrbracket}}
\definecolor{saffron}{rgb}{1.0,0.49,0.0}
\newcommand{\oset}[3][0ex]{%
  \mathrel{\mathop{#3}\limits^{
    \vbox to#1{\kern-2\ex@
    \hbox{$\scriptstyle#2$}\vss}}}}
\newcommand{\Pnkern}{%
  \mkern-2mu
}
\DeclareMathOperator{\eventually}{\LTLdiamond}
\DeclareMathOperator{\fut}{\LTLdiamond}
\DeclareMathOperator{\once}{\oset[-1pt]{\leftarrow}{\LTLdiamond}}
\DeclareMathOperator{\globally}{\LTLsquare}
\DeclareMathOperator{\PnF}{\mathbf{P \Pnkern n}}
\DeclareMathOperator{\PnO}{\oset[-1pt]{\leftarrow}{\mathbf{P \Pnkern n}}}
\DeclareMathOperator{\dualPnF}{\vphantom{\mathbf{P \Pnkern n}}\smash{{\mathbf{P \Pnkern n}}^{\mathrlap{\sim}}}}
\DeclareMathOperator{\dualPnO}{\vphantom{\mathbf{P \Pnkern n}}\smash{\oset[-1pt]{\leftarrow}{\mathbf{P \Pnkern n}}}^{\mathrlap{\sim}}}
\newtheorem{lemma}{Lemma}
\newtheorem{theorem}{Theorem}
\newtheorem{remark}{Remark}
\newtheorem{example}{Example}
\title{$\mightyppl$: Verification of  \mitl{} with Past and Pnueli Modalities}
\author{%
Hsi-Ming Ho (University of Sussex, Brighton, United Kingdom)\\%
Shankara Narayanan Krishna (Indian Institute of Technology Bombay, Mumbai, India)\\%
Khushraj Madnani (Indian Institute of Technology Guwahati, Guwahati, India)\\%
Rupak Majumdar (MPI-SWS, Kaiserslautern, Germany)\\%
Paritosh Pandya (Indian Institute of Technology Bombay, Mumbai, India)}
\date{}
\begin{document}

\maketitle

\begin{abstract}
\emph{Metric Interval Temporal Logic} ($\mitl{}$) is a popular formalism for specifying properties of reactive systems with timing constraints.
Existing approaches to using $\mitl{}$ in verification tasks, however, have notable drawbacks: they either support only limited fragments of the logic or allow for only incomplete verification.
This paper introduces $\mightyppl{}$, a new tool for translating 
formulae in 
\emph{Metric Interval Temporal Logic with Past and Pnueli modalities} ($\mitlpp{}$) over the pointwise semantics 
into timed automata. $\mightyppl{}$
 enables \emph{satisfiability} and \emph{model checking} of a  
much more expressive 
 specification logic over both finite and infinite words 
  and incorporates a number of performance optimisations, including a novel symbolic encoding of transitions and a symmetry reduction technique that leads to an exponential improvement in the number of reachable discrete states. For a given $\mitlpp{}$ formula, $\mightyppl{}$ can generate either a network of timed automata or a single timed automaton that is language-equivalent and compatible with multiple verification back-ends, including  \textsc{Uppaal},  \textsc{TChecker}, and \textsc{LTSmin}, which supports multi-core model checking.
  We evaluate the performance of the toolchain  
across various case studies and configuration options. 
\end{abstract}

\section{Introduction}

Real-time logics provide a formal framework for specifying and reasoning about time-dependent behaviours of reactive systems (see, e.g.,~\citep{AluHen92, AH93, H98, bouyer2009model, bouyer2017timed}).
\emph{Metric Interval Temporal Logic} ($\mitl{}$)~\citep{AFH96} is a  prominent real-time logic which  extends \emph{Linear Temporal Logic} ($\ltl{}$)~\citep{4567924} with constructs that associate \emph{time intervals} with temporal operators. 
For example, the property `each request must be followed by an acknowledgement within $5$ seconds' can be written as
\[
\globally (\textit{req} \implies \eventually_{[0, 5]} \textit{ack}).
\]
Thanks to their familiar $\ltl$-like syntax that appeals to practitioners,
$\mitl{}$ and its signal-based variant \emph{Signal Temporal Logic} ($\stl{}$)~\citep{maler2004monitoring} are now widely used in the design and analysis of cyber-physical systems (CPSs) in various safety-critical application domains such as automotives~\citep{jin2014powertrain, deshmukh2017robust}, robotics~\citep{raman2014model}, medical monitoring~\citep{roohi2018parameter}, smart grids~\citep{beg2018signal} and so on. 

As in conventional discrete-time settings, a fundamental challenge in the algorithmic verification of real-time systems is to balance \emph{expressiveness} (the ability to specify sophisticated properties) with \emph{decidability} and \emph{tractability} (the feasibility of verification tasks). 
For $\mitl{}$, \emph{satisfiability} and \emph{model checking} are both decidable with reasonable computational complexity ($\expspace{}$- or $\pspace{}$-complete, depending on the timing constraints allowed in formulae~\citep{AFH96, raskin1997state, H98}). 
However, despite numerous advances in \emph{incomplete} verification methods such as \emph{monitoring}~\citep{donze2013efficient, nivckovic2020rtamt} and \emph{falsification}~\citep{annpureddy2011s, abbas2013probabilistic, waga2020falsification},
\emph{complete} verification methods are often overlooked in practice due to lack of tool support. 
We also note that $\mitl{}$ itself has limited expressive power as it cannot express some natural properties. For example, the `counting' property `$p$ must occur at least twice within the next $10$ time units' can be expressed in the decidable real-time logic $\qtwomlo{}$~\citep{Hirshfeld2004}  but not in $\mitl{}$~\citep{AH93, Rabinovich2007, bouyer2010expressiveness, concur11}.

\paragraph{Pnueli modalities.}
Research has shown that $\mitl{}$ can be extended with a more general form of counting while preserving decidability~\citep{rabinovichY}. This is achieved with \emph{Pnueli modalities}, $\PnF_I(\varphi_1, \dots, \varphi_k)$, which specifies that a sequence of events in an interval $I$ must satisfy $\varphi_1, \dots, \varphi_k$
in the given order. To highlight their practical utility, we now consider an example.

\begin{example}[Adapted from~\citep{concur25}]\label{ex:food}
 Consider a city with three eateries: a pizzeria (serving pizzas, denoted by $P$), a burger joint (serving burgers, $B$), and a cafe (serving coffees, $C$).
Two locations, $L1$ and $L2$, represent the origin points for customer orders. 
 The delivery driver starts at a designated initial location $L0$.
 See~\cref{fig:icon-delivery-map} for a map where $L0$ is the leftmost location, $L2$ is the rightmost location, and $L1$ is the bottom-right location. 
We write $K{:}L$ (where $K \in \{P, B, C\}$ and $L \in \{L1, L2\}$) for an order from location $L$ for item $K$. For example, $P{:}L1$ denotes that $L1$ has ordered a pizza. The property `once $K{:}L$ occurs, the delivery driver should  pick up the item $K$ and deliver it to $L$ within the next $15$ minutes' can be written as
\[
\globally \big(
K{:}L \implies \PnF_{[0,15]}(K, L) \big)  \;.
\]

\begin{figure}[!htbp]
\centering
\scalebox{0.8}{\begin{tikzpicture}[->, auto, node distance=1.5cm, line width=0.3mm,
  location/.style={circle, draw=black, minimum size=11mm, thick, fill=blue!10, font=\footnotesize},
  eatery/.style={circle, draw=black, minimum size=11mm, thick, fill=orange!20, font=\footnotesize},
  font=\small]

\node (q0) [location, initial, initial where=left, initial distance=0.3cm, initial text={}] at (-5,0) {\faMotorcycle};
\node (p)  [eatery] at (-2.5,2) {\faPizzaSlice};
\node (c)  [eatery] at (1,2)    {\faCoffee};
\node (b)  [eatery] at (-2,-2)  {\faHamburger};
\node (l1) [location] at (1.5,-2) {\faHome};
\node (l2) [location] at (4,0)    {\faHome};

\draw (q0) to node[midway, above left] {5} (p);
\draw (q0) to node[midway, below left] {7} (b);
\draw (p)  to node[midway, above] {3} (c);
\draw (p)  to node[midway, left] {3} (b);
\draw (b)  to node[midway, below] {1} (c);
\draw (b)  to node[midway, below right] {2} (l1);
\draw (c)  to node[midway, above right] {3} (l2);
\draw (l2) to node[midway, right] {1} (l1);
\draw (l1) to[bend left=45] node[midway, below right] {3} (q0);

\end{tikzpicture}}
\caption{City map with eateries and other locations. Icons denote eateries (\faPizzaSlice, \faHamburger, \faCoffee), driver's initial location (\faMotorcycle), and customer locations (\faHome). Edges represent shortest travel times.} %
\label{fig:icon-delivery-map}
\end{figure}
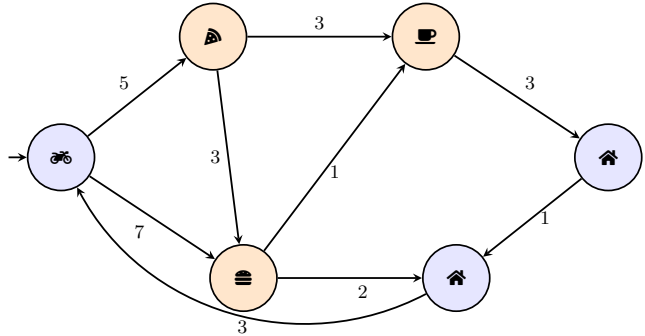

\end{example}

\paragraph{Going from Logics to Automata.}
In this paper, we consider the satisfiability and model-checking problems for a very expressive extension of $\mitl{}$ called \emph{Metric Interval Temporal Logic with Past and Pnueli Modalities} ($\mitlpp{}$), originally proposed in~\citep{rabin, Rabinovich2007} and recently studied in~\citep{concur25}. $\mitlpp{}$ is \emph{expressively complete} for many other proposals of strong decidable real-time logics, including $\qtwomlo{}$~\citep{Hirshfeld2004} and unilateral $\tptl{}$~\citep{concur23}. 
For example, the property `each acknowledgement is 
preceded by a request at most $5$ seconds ago
and followed by a three-step locking process in the next $10$ seconds' can be written using past and Pnueli modalities as 
\[
\globally \big(\textit{ack} \implies \once_{[0, 5]} \textit{req} \land \PnF_{[0, 10]}(\textit{lock}_1, \textit{lock}_2, \textit{lock}_3)\big).
\]
We propose a construction from $\mitlpp{}$ formulae into language-equivalent \emph{timed automata} ($\ta{s}$) over \emph{timed words}, which can be analysed by a number of existing $\ta{}$-based tools, including \textsc{LTSmin}~\citep{KanLaa15}, \textsc{Uppaal}~\citep{behrmann2007uppaal} and \textsc{TChecker}~\citep{TChecker}.\footnote{In the literature, this is commonly referred to as the \emph{pointwise} or \emph{event-based} semantics~\citep{AD94, ouaknine2007decidability} where behaviours are timed words (time-stamped event sequences).}
The fully compositional construction is based on translating subformulae into  \emph{tester automata}~\citep{kesten1998algorithmic, pnueli2008merits}, which can be combined into a single monolithic $\ta{}$ with the standard product construction, if required.
In contrast to incomplete methods like~\citep{bersani2016tool, bae2019bounded}, automata-theoretic approaches to satisfiability and model checking of (linear-time) temporal logic specifications (also used by tools like LTL2BA~\citep{gastin-oddoux} and Spot~\citep{duret2022spot} for $\ltl{}$) guarantees soundness and completeness. While handling timed logics, we also need to take care of time intervals in modalities. We discuss approaches for these next.

\paragraph{General time intervals.}
 In $\mitl{}$, modalities (temporal operators) decorated with \emph{unilateral} time intervals (i.e.~time intervals of the form $\langle l, \infty)$ or $[0, u \rangle$) can be translated into tester automata without too much difficulty~\citep{raskin1997state, geilen2003improved}. However, 
  for practical applications, general time intervals such as $\langle l, u \rangle$ are very handy (see e.g.,~\citep{bae2019bounded}). 
  The main challenge is that the automata constructions for general intervals are much more complicated and difficult to implement, due to the intricate nature of handling a potentially unbounded number of \emph{obligations} (assertions about the future or the past).
   For example, if an automaton \emph{guesses} that $\eventually_{[10, 20]} q$ is satisfied at points $t_1$ and $t_2$ where $t_1 < t_2$, then the corresponding \emph{obligations} are that there must be $q$-events
 in both $[t_1 + 10, t_1 + 20]$ and $[t_2 + 10, t_2 + 20]$.
  From a theoretical perspective, there is also a strict expressiveness gap between general interval modalities and unilateral interval modalities---%
  $\mitl{}$ modalities with general time intervals  
 cannot be expressed in $\mitl{}$ using only modalities with $\langle l, \infty)$ and $[0, u \rangle$ over timed words~\citep{icalp-raskin, raskin-thesis}. %
For this reason, specific constructions for modalities with $\langle l, u \rangle$ were proposed in~\citep{brihaye2013mitl, brihaye2014mitl} and later improved and  implemented in the tool $\mightyl{}$~\citep{DBLP:conf/cav/BrihayeGHM17}.
The main technical idea of $\mightyl{}$ is to maintain an abstraction of \emph{overlapping} obligations using only a bounded number of clocks.
In the previous example if (say) $5 < t_2 - t_1 < 6$, then the  automaton chooses, non-deterministically, that the two obligations will be satisfied by either (i) a single $q$-event in $[t_2 + 10, t_1 + 20]$; or (ii) two individual $q$-events in $[t_1 + 10, t_2 + 10)$ and $(t_1 + 20, t_2 + 20]$. 
Representing this abstraction, however, leads to an exponential blow-up in the discrete state space. Experimental results suggest that more than a minimal use of general time intervals is infeasible with $\mightyl{}$.

\paragraph{Obligations and sequentialisation.} 
In this paper, we take a cleaner and more modular approach to handle $\mitl{}$ modalities with $\langle l, u \rangle$. 
Rather than encoding how overlapping obligations will be satisfied, we keep track of a bounded set of \emph{non-overlapping} obligations. 
More specifically, inspired by a surprising connection between Pnueli modalities with unilateral time intervals and $\mitl{}$ modalities with general time intervals $\langle l, u \rangle$, we propose a better abstraction where (i) obligations are merged so they do not overlap with others, and (ii) all obligations are \emph{of the same form} and will be satisfied in the same manner. 
This enables us to break down the task further and use a dedicated (smaller) component automaton for each obligation, instead of manipulating all obligations in a single monolithic tester automaton.
While delegating the validation of each obligation to a component automaton simplifies the construction considerably---for example, we do not need operations like shifting or renaming of clocks~\citep{maler2005real, akshay2024mitl}---naively composing these automata via their product can cause a state-space explosion (exponential in the number of components).
Inspired by similar ideas in the context of concurrent software verification~\citep{lal2009reducing, fischer2013cseq, chaki2011time}, we \emph{sequentialise} the 
obligations, 
i.e.~enforce that all incoming obligations are handled by these component automata in a specific order.
This ensures that the
size (number of reachable locations) of \emph{the product of all the component automata} (which, apart from some additional atomic propositions, is language-equivalent to a monolithic tester automaton for the same subformula) is 
\emph{polynomial} in the magnitude of the constants $\langle l, u \rangle$.

\paragraph{Implementation.}
 We have implemented the described construction in the tool $\mightyppl{}$, which supports multiple output formats and features a symbolic back-end.  It can be used as a stand-alone tool or in a tool chain to verify 
real-time systems (modeled as $\ta{s}$) against $\mitlpp$ specifications.
This is the very first tool, to the best of our knowledge, for satisfiability and model checking of $\mitl{}$ extended with past and Pnueli modalities. 
In addition to the optimisations found in $\mightyl{}$ and a more modular architecture with broader back-end compatibility, $\mightyppl{}$ incorporates the following implementation techniques for better efficiency: \begin{inparaenum} 
\item Instead of representing the truth values of atomic propositions explicitly, which leads to an exponential blow-up in the number of transitions or locations, $\mightyppl{}$ uses \emph{symbolic values} to synchronise transitions labelled by Boolean formulae.
\item The product of tester automata can be constructed directly by the tool, generating a single monolithic timed automaton for the entire formula; to minimise the number of locations, only the forward- and backward-reachable parts of the state space are constructed.
\end{inparaenum}   
We provide a comprehensive experimental evaluation of $\mightyppl{}$ and show that it achieves significant performance gains over $\mightyl{}$, in some cases by more than two orders of magnitude.

\paragraph{Related work.} 
The idea of adding timing constraints into temporal logics dates back to the early 1990s ($\mtl{}$~\citep{koymans} and $\tctl{}$~\citep{alur1993model}).
Early research mainly focusses on \emph{discrete-time} settings, as the standard verification problems in more natural and general \emph{dense-time} settings are mostly undecidable, due to the fact that `punctual' constraints can be used to encode computations of Turing machines~\citep{AH93} (a notable exception is the decidability of finite-word satisfiability for the future fragment of $\mtl{}$~\citep{ouaknine2007decidability}). 
$\mitl{}$ is the most notable fragment of $\mtl{}$ that is decidable in a dense-time setting, based on the idea of prohibiting the punctual time intervals. 
But as we mentioned earlier, there are natural `counting' properties that are not expressible in $\mitl{}$.
Also, as opposed to $\ltl{}$, $\mitl{}$ with past modalities is strictly more expressive than the future fragment of $\mitl{}$~\citep{bouyer2010expressiveness}.
$\qtwomlo{}$~\citep{Hirshfeld2004} and $\mitl{}$ with counting modalities~\citep{hunter} support these features, but as far as we are aware, only incomplete approaches based on bounded model checking are available for these logics (e.g.,~\citep{bersani2016tool}). The logic $\mitlpp{}$ that we consider in this paper subsumes these logics.
Another orthogonal development is to extend $\mitl{}$ with predicates over real-valued variables, obtaining \emph{Signal Temporal Logic} ($\stl{}$~\citep{maler2004monitoring}), which has gained considerable interest in the past decade. 
$\stl{}$ is often paired with a quantitative or robust interpretation that measures the degree of satisfaction~\citep{fainekos2006robustness,donze2013efficient}, making it well-suited for runtime monitoring and falsification.

Another parallel line of development in the modelling and verification of real-time systems is based on the formal analysis of timed automata. Zone-based abstractions (see~\citep{bouyer2022zone} for a recent survey) have been implemented in practical tools from the early 1990s, most notably \textsc{Uppaal}~\citep{Behrmann2006}, which has become the de-facto standard and as a result, its model file format is also supported by many other tools such as \textsc{LTSmin} (which supports model checking of timed automata against LTL formulae, and also multi-core model checking). More recently, \textsc{TChecker}~\citep{TChecker}
has emerged as a modern and flexible open-source alternative for the analysis of timed automata, based on advanced zone abstraction techniques. 
Compared with tools for model-checking (untimed) reactive systems like SPIN~\citep{DBLP:journals/tse/Holzmann97} or NuSMV~\citep{cimatti2002nusmv}, which supports $\ltl{}$, most $\ta{}$-based tools only support very limited real-time specifications. For example, \textsc{Uppaal} supports only a limited fragment of $\tctl{}$.

The successful idea of automata-theoretic model checking~\citep{DBLP:conf/banff/Vardi95} has led to robust tools and algorithms for $\ltl{}$. To adopt the idea to timed settings, we need a reliable approach that translates $\mitl{}$ into timed automata. The earliest translation from $\mitl$ to $\ta{s}$ 
is described in~\citep{AFH96}. Since it is very involved, there have been other proposals as well, e.g.,~\citep{maler2006mitl, nivckovic2010mtl, d2021clock}.
But these constructions are based on the \emph{continuous} or \emph{state-based} semantics (where behaviours are \emph{signals}), which is not directly compatible with the back-ends such as \textsc{Uppaal} or \textsc{TChecker}.
Moreover, to the best of our knowledge, these constructions have never been implemented. 
$\mightyl{}$~\citep{DBLP:conf/cav/BrihayeGHM17} is the only implementation of logic-to-$\ta{}$ construction that we are aware of. It is also based on timed words, but it only supports the future fragment of $\mitl{}$,  encodes the truth assignments to atomic propositions explicitly, 
and as we have mentioned earlier there is an exponential blow-up in the discrete state space when a  general time interval is used,
which severely limits its practical performance when it is used with tools like Uppaal and LTSmin.
 A more recent proposal is~\citep{akshay2024mitl}, which is similar in spirit to~\citep{maler2006mitl} but formulated in the framework of \emph{generalized timed automata} over timed words~\citep{akshay2023unified}. This construction, however, makes heavy use of the machinery of \emph{future clocks} and thus not compatible with standard $\ta{}$-based tools. 
Our tool $\mightyppl{}$ outputs standard `vanilla' $\ta{s}$~\citep{AD94} that work with all the $\ta{}$-based tools seamlessly.

\paragraph{Summary.} Our contributions are as follows: \begin{inparaenum}
\item We propose a fully compositional construction from $\mitlpp{}$, one of the most expressive decidable real-time logics, to timed automata. The construction supports both the future and past variants of $\mitl{}$ modalities and Pnueli modalities in a uniform manner. 
\item We propose a new abstraction for handling obligations in the tester automata for $\mitl{}$ modalities associated with intervals of the form $\langle l, u \rangle$. In particular, the obligations can be handled by a bounded number of \emph{identical} simple component automata, and a novel sequentialisation technique is applied to achieve an exponential improvement in the number of reachable locations over the state-of-the-art approach implemented in \textsc{MightyL}.
\item We present a complete implementation of the discussed construction and present comprehensive emprical results, which shows the efficiency of our approach.
\end{inparaenum}

The rest of this paper is organised as follows.
\cref{sec:prelim} provides necessary background on the syntax and semantics of $\mitlpp{}$.
\cref{sec:past.and.pnueli} explains the compositional construction with a focus on the tester automata construction for past modalities.
\cref{sec:general}  explains the construction for $\mitl{}$ modalities with general $\langle l , r \rangle$ intervals and the gadgets used for sequentialisation.
\cref{sec:exp} presents experimental results over a wide range of benchmarks, including both existing ones from the literature and several new ones we propose.
Finally, \cref{sec:cn} presents conclusions and discusses future directions.

\section{Preliminaries}
\label{sec:prelim}

Let $\R$ and $\N$ respectively represent the set of non-negative reals and naturals (including 0). 
Let $\langle$ denote left open `$($' or left closed `$[$', and  $\rangle$  denote right open `$)$' or right closed `$]$'. Let $\I$ 
denote the set of all intervals $\langle l, u\rangle$ for $l \leq  u$, $l \in \N$, $u \in \N \cup \{\infty\}$,
and $\I_0$ the set of all intervals $[0, u \rangle$ for $0 \leq  u$, $u \in \N$. %
Let $\AP$ be a finite set of atomic propositions, and let $\Sigma_{\AP}=2^{\AP}$ be the finite alphabet that contains all the subsets of $\AP$. 
An infinite (resp.~finite) \emph{timed word} $\rho$ over $\Sigma_{\AP}$ 
is an infinite (resp.~finite) sequence of \emph{events} (pairs of letters and \emph{timestamps}) $\rho=(\sigma_1, \tau_1)(\sigma_2, \tau_2) \dots$ where $\sigma_i \in \Sigma_{\AP}, \tau_i \in \R$, $\tau_1 = 0$, and $\tau_i \le \tau_{i+1}$ for all \emph{positions} $i> 0$. For example, 
$(\{p,q\},0)(\emptyset,1.1)(\{p\},2.1)(\{q\},2.1)$ is a finite timed word over the set of atomic propositions $\AP=\{p,q\}$.
The set of all infinite (resp.~finite) timed words over $\Sigma$ is denoted $T\Sigma^{\omega}$ (resp.~$T\Sigma^{\ast}$).

An infinite timed word is called \emph{Zeno} if
the sequence $(\tau_i)_{i \geq 0}$ converges, and non-Zeno otherwise. We restrict ourselves to non-Zeno infinite timed words (henceforth simply referred to as `timed words'), which is the usual convention, as Zeno words allow infinite actions  within a finite duration, which does not model a natural behaviour.

    \begin{remark}
Since the semantics of the real-time logics we discuss in this paper depend solely on the relative distances between timestamps, it is without loss of generality to assume that every timed word begins with $\tau_1 = 0$. This normalisation, also employed in~\citep{Wilke}, simplifies the presentation.
    \end{remark}

\subsection{Metric Temporal Logic with Past and Pnueli modalities, $\mtlpp$.}
\label{sec:mtl}
Logic $\mtlpp{}$ is an extension of the classical \emph{Metric Temporal Logic} ($\mtl{}$)~\citep{koymans} with past and Pnueli modalities~\citep{Rabinovich}. Formulae of  $\mtlpp$ over a set of atomic propositions $\AP$  are defined  as follows:
\[
\varphi:=\top~\mid~p~\mid~\neg \varphi~\mid~\varphi_1 \wedge \varphi_2~\mid~\varphi_1 \until_I \varphi_2~\mid~\varphi_1 \since_I \varphi_2
\mid \PnF_J(\varphi_1, \ldots, \varphi_k) \mid \PnO_J(\varphi_1, \ldots, \varphi_k)
\]
where $p \in \AP$, $I$ is an interval in $\I$, and $J$ is an interval in $\I_0$.
The other Boolean operators are defined as usual:
$\bot \equiv \neg\top$,
$\varphi_1\lor\varphi_2 \equiv \neg(\neg\varphi_1\land\neg\varphi_2)$,
and $(\varphi_1 \implies \varphi_2) \equiv \lnot\varphi_1 \lor \varphi_2$.
Given a timed word $\rho=(\sigma_1,\tau_1)(\sigma_2,\tau_2)\dots$ over $\Sigma_{\AP}$ and a \emph{position} $i \in \N_{>0}$,  
we define the \emph{pointwise semantics} of $\mtlpp$ formulae inductively 
as follows:
\begin{itemize}
 \item $\rho, i  \models  \top$; \\
 \item $\rho, i  \models  p$ \text{iff } $p \in \sigma_i$; \\
 \item $\rho, i  \models  \neg \varphi ~  \text{iff }  \rho, i \not\models \varphi$; \\
\item  $\rho, i  \models  \varphi_1 \wedge \varphi_2 ~ \text{iff }  \rho, i \models \varphi_1 \text{ and } \rho, i \models \varphi_2$; \\
  \item $\rho, i  \models  \varphi_1 \until_I \varphi_2 $ iff   $\exists j > i$  s.t.\  $\tau_j - \tau_i \in I,$ $\rho, j \models \varphi_2,$  and 
                 $\forall i < k < j, \rho, k  \models \varphi_1$; \\
      \item $\rho, i  \models  \varphi_1 \since_I \varphi_2 $ iff    $\exists  j < i$  s.t.\  $\tau_i - \tau_j \in I$, $\rho, j \models \varphi_2,$ and 
                 $\forall j < k < i, \rho, k  \models \varphi_1$;\\
\item  $\rho, j\models \PnF_J(\varphi_1, \ldots \varphi_k)$ iff $\exists{i_k{>}i_{k-1} {>}{\ldots}{>}i_1{>}j} $
s.t. $\forall 1\le n \le k, {\tau_{i_n}{-}\tau_j{\in}J}$, $\rho, i_n{\models}\varphi_n$; and \\
\item  $\rho, j \models \smash{\PnO_J}(\varphi_1, \ldots \varphi_k)$ 
iff $\exists{i_k{<}i_{k-1} {<}{\ldots}{<}i_1{<}j}$ s.t. $\forall 1\le n \le k, 
{\tau_{j}{-}\tau_{i_n}{\in}J}$, and  $\rho, i_n{\models}\varphi_n$.
\end{itemize}
We define the \emph{timed language} of $\varphi$ as $\sem{\varphi} = \{\rho | \rho, 1 \models \varphi\}$. 
The usual derived operators  $\eventually_I$ (eventually), $\globally_I$ (globally),  $\past_I$ (past),  $\Boxminus_I$ (globally in the past), 
$\nex$ (next) and  $\nm$ (previous) are defined in terms of $\until$ and $\since$ as follows: 
$$\fut_I \varphi \equiv \top \until_I \varphi, \Box_I \varphi \equiv \neg \fut_I \neg \varphi, \past_I \varphi \equiv \top \since_I \varphi, \Boxminus_I \varphi \equiv \neg \past_I \neg \varphi, \nex \varphi \equiv \bot \until_{[0,\infty)} \varphi, \nm \varphi \equiv \bot \since_{[0,\infty)} \varphi.$$ 
We also define some additional dual operators as follows. 
\begin{IEEEeqnarray*}{rClrCl}
\varphi_1 \release_I \varphi_2 & \equiv & \neg ((\neg \varphi_1) \until_I (\neg \varphi_2)), &
\varphi_1 \trigger_I \varphi_2 & \equiv & \neg ((\neg \varphi_1) \since_I (\neg \varphi_2)), \\ 
\dualPnF_{J}(\varphi_1, \varphi_2, \dots ) & \equiv & \neg \PnF_J( \neg \varphi_1, \neg 
\varphi_2,\dots), \quad & \dualPnO_{J}(\varphi_1, \varphi_2, \dots ) & \equiv & \neg \PnO_J( \neg \varphi_1, \neg 
\varphi_2,\dots)
\end{IEEEeqnarray*}
 We omit the subscript when the intervals are $[0,\infty)$. For examples, $\varphi \until_{[0,\infty)} \psi$ and $\varphi \since_{[0,\infty)} \psi$ are written as $\varphi \until \psi$ and $\varphi \since \psi$, respectively.

The subclass \emph{Metric Interval Temporal Logic with Past and Pnueli modalities}, written $\mitlpp$, consists of all $\mtlpp$ formulae where all intervals
$I$ are non-singular (i.e., of the form $\langle l, u \rangle$, where $l<u$). 
\emph{Metric Temporal Logic} ($\mtl{}$) \citep{koymans} and \emph{Metric Interval Temporal Logic} ($\mitl{}$) \citep{AFH96} 
can be seen as fragments of $\mtlpp$ (resp., $\mitlpp$) without Pnueli modalities $\PnF_J$ and $\PnO_J$. 
The \emph{unilateral} fragment of all these logics consist of the fragment in which
all intervals $I$ are either of the form $[0, u\rangle$ or $\langle l, \infty)$.

\begin{example}
To illustrate  Pnueli modalities, we have $\rho,1 \models \PnF_{[0,2)}(p,q,r)$ for
\[
\rho{=}(\{p\},0)(\{p\},0.5)(\emptyset, 0.9)(\{q,r\},1.1)(\{p,q,r\},1.8)\dots
\] 
since  $\tau_5-\tau_1, \tau_4-\tau_1, \tau_2-\tau_1  \in (0,2)$ and 
$\rho, 2 \models p$, $\rho,4 \models q$ and $\rho, 5 \models r$. However, $\rho', 1 \not \models \varphi$ 
for
\[
\rho'{=}(\{p\},0)(\{r\},0.1)(\{q\},1.1)(\{p,q\},1.9)(\emptyset, 2)\dots \;.
\]
\end{example}
\begin{remark}
    Notice that we work with the `strict' semantics for the temporal operators which is more expressive than the `non-strict' semantics.  In the non-strict semantics, $\varphi_1 \until_I \varphi_2$, when asserted at a position $i$ of $\rho$, checks if  there exists $j \geq i$ 
    where $\varphi_2$ holds, and in case $j > i$, checks whether $\varphi_1$ holds at all the intermediate positions $i \leq k < j$. 
    The strict choice simplifies the presentation, as strict $\until$ and $\since$ can express the next ($\nex$) and previous ($\nm$) operator as shown above, unlike their non-strict versions. 
\end{remark}

\subsection{Timed Automata ($\ta{}$)}
We give a concise definition of timed automata, focussing on \emph{generalised B\"uchi acceptance} for technical convenience; one can reduce such automata to 
classical timed B\"uchi automata~\citep{AD94} via a  standard construction~\citep{Courcoubetis1992}.
Let $X$ be a finite set of \emph{clocks} (variables taking values from $\R$).
A \emph{valuation} $\nu$ for $X$ maps each clock $x \in X$ to a value in $\R$.
We denote by $\mathbf{0}$ the valuation that maps every clock to $0$.
The set $\Guards(X)$ of \emph{clock constraints} $g$ over $X$ is generated
by the grammar $$g:= \top\mid g\land g \mid x\bowtie c, ~\text{where}~
{\bowtie}\in \{{\leq},{<},{\geq},{>}\},~ x\in X,~\text{and}~ c\in\N$$

We write $x \in g$ if $x \bowtie c$ appears as a conjunct of $g$, and in this case $g(x)$ for the interval that corresponds to the clock constraints on $x$.
The satisfaction relation $\nu \models g$ is
defined in the usual way. For instance, for a valuation $\nu$ where $\nu(x)=1.1, \nu(y)=2$, $g = (x < 2 \wedge y = 2)$ and $g' = (x \geq 5 \wedge x < 6)$, we have $x, y \in g$, $g(x) = [0, 2)$, $g(y) = [2, 2]$, $\nu \models g$. Likewise, $y, z \notin g'$, $g'(x) = [5, 6)$, and $\nu \not \models g'$.
For $t\in\R$, we let $\nu +t$ be the valuation defined by $(\nu +t)(x) = \nu (x)+t$ for all $x\in X$. For $\lambda \subseteq X$, we let $v [\lambda \leftarrow 0]$ be the valuation defined by $(\nu[\lambda \leftarrow 0])(x) = 0$ if $x\in \lambda$, and
$(\nu[\lambda \leftarrow 0])(x) = \nu(x)$ otherwise.
A \emph{timed automaton} ($\ta{}$) over a finite alphabet $\Sigma$ is a tuple
$\mathcal{A} = \langle \Sigma, S, s_0, X, \Transitions, \F \rangle$ where $S$ is a finite set of
locations, $s_0 \in S$ is the initial location, $X$ is a finite set of clocks,
$\Transitions \subseteq S \times \Sigma \times \Guards(X) \times
2^X \times S$ is the transition relation,
and $\F=\{F_1,\dots,F_n\}$, with $F_i\subseteq S$ for all $i$, $1\leq i \leq n$,
is a \emph{generalised B\"uchi acceptance condition}, i.e.~a set of sets of final locations.

A \emph{state} of $\mathcal{A}$ is a pair $(s, \nu)$ consisting of a location $s \in S$, and a valuation $\nu$ for $X$.
A \emph{run} $r$ of $\mathcal{A}$ on a timed word $(\sigma_1,\tau_1)(\sigma_2,\tau_2)\cdots\in T\Sigma^\omega$ is an alternating
sequence of states and transitions
\[
r = (s_0,\nu_0) \xrightarrow{(s_0,\sigma_{1},g_1,\lambda_1,s_{1})}(s_1,\nu_1)
\xrightarrow{(s_1,\sigma_{2},g_2,\lambda_2,s_{2})}
\cdots
\]
where 
\begin{inparaenum}
\item[(i)]
$\nu_0=\mathbf{0}$, and 
\item[(ii)] for each $i\geq 0$,
there is a transition $(s_i,\sigma_{i+1},g_{i+1},\lambda_{i+1},s_{i+1}) \in \Delta$
such that %
$\nu_i +(\tau_{i+1}-\tau_i)\models g_{i+1}$ (let $\tau_0=0$) and
$\nu_{i+1} =(\nu_i +(\tau_{i+1}-\tau_i))[\lambda_{i+1} \leftarrow 0]$ (note that since $\tau_1 = 0$, we necessarily have $\nu_1 = \mathbf{0}$). 
\end{inparaenum}
For each position $i > 0$ of $r$, we write $r(i) = (s_i, \nu_i)$. 
 A run of $\mathcal{A}$ is \emph{accepting} iff the set of locations it visits infinitely often contains at least
one location from each $F_i$, $1\leq i\leq n$. A timed word is \emph{accepted} by $\mathcal{A}$ iff $\mathcal{A}$ has an accepting run on it.
We denote by $\sem{\mathcal{A}}$ (the timed language of $\mathcal{A}$) the set of all timed words accepted by $\mathcal{A}$. 
The class of languages accepted by $\ta{s}$ is called \emph{timed regular 
languages}. 
We also define finite-word acceptance in the usual way (where $\mathcal{F}$ is a singleton $\{ F \}$, and a run is accepting if it ends up in a location in $F$), and accordingly the class of \emph{finite-word timed regular languages}.

For two $\ta{s}$, $\mathcal{A}^1 = \langle \Sigma,S^1,s_0^1,X^1,\Transitions^1,\F^1 \rangle$ and
$\mathcal{A}^2 = \langle \Sigma,S^2,s_0^2,X^2,\Transitions^2,\F^2 \rangle$ over a
common alphabet $\Sigma$, the synchronous product 
$\mathcal{A}^1 \times \mathcal{A}^2$ is defined as the $\ta{}$ $\langle \Sigma,S,s_0,X,\Transitions,\F \rangle$
where 
\begin{enumerate}
\item[(i)]  $S=S^1 \times S^2$, $s_0 = (s_0^1, s_0^2)$, and $X = X^1 \cup X^2$;
\item[(ii)]  $((s^1_1,s^2_1),\sigma,g,\lambda,(s^1_2,s^2_2))\in\Transitions$
iff there exists $(s^1_1,\sigma,g^1,\lambda^1,s^1_2)\in\Transitions^1$
and $(s^2_1,\sigma,g^2,\lambda^2,s^2_2)\in\Transitions^2$ such that
$g=g^1\land g^2$ and $\lambda = \lambda^1 \cup \lambda^2$; and
\item[(iii)]  let $\F^1={\{F_1^1,\dots,F_n^1\}}$, $\F^2={\{F_1^2,\dots,F_m^2\}}$, then
$\F = \{F_1^1 \times S^2,\dots,F_n^1\times S^2, S^1 \times
F_1^2,\dots,S^1\times F_{m}^2\}$.
\end{enumerate}
Note that we have $\sem{\mathcal{A}^1 \times \mathcal{A}^2} {=} \sem{\mathcal{A}^1} \cap \sem{\mathcal{A}^2}$.
This generalises to the product of more than two $\ta{s}$.

\subsection{Compositional Translation from Temporal Logics to Automata}
Temporal logics such as $\ltl$ ($\mitl{}$) provide a concise and declarative way to specify (timing) requirements. 
However, for algorithmic verification tasks like model checking or satisfiability, it is often more effective to work with operational models such as (timed) automata. This motivates translations from logical specifications to automata-based representations, enabling the use of well-established automata-theoretic techniques and tools such as \textsc{Uppaal}~\citep{behrmann2007uppaal}, \textsc{TChecker}~\citep{TChecker}, etc. 
 We now briefly recall the methodology developed in the state of the art translation from timed logics to timed automata  
  \citep{DBLP:conf/cav/BrihayeGHM17} where,  future $\mitl{}$ (with only $\until_I$ modalities) is translated  to a network of $\ta{s}$. 
This is reminiscent of the \emph{stratification} method~\citep{manna1989completing, burch1992symbolic, clarke1994another, kesten1998algorithmic} for untimed temporal logics, also pioneered by Pnueli.

\paragraph{Triggers and Tester Automata.}\label{para:triggers}

Given a formula $\varphi$ over a set of atomic propositions $\AP$, we assume without loss of generality that $\varphi$ is written in \emph{negation normal form}---that is, negation is applied only to atomic propositions. In the case where $\varphi$ is a future $\mitl{}$ formula, this means that only the Boolean connectives $\wedge$, $\vee$ and future temporal modalities $\eventually_I$, $\globally_I$, $\until_I$, and $\release_I$ are used.

Let $\Psi$ denote the set of temporal subformulae of $\varphi$ whose outermost operator is a temporal modality $\triangledown_I$—specifically, either $\globally_I$, $\eventually_I$, $\until_I$, or $\release_I$. For instance,  if  $\varphi=\Big(p \until_{[0, 2)}\big((q \until_{[3, 5]} r) \wedge s\big)\Big)$, then $\Psi = \{\varphi, \kappa\}$ where $\kappa=(q \until_{[3, 5]} r)$.
For each such temporal subformula $\psi \in \Psi$, we introduce a fresh atomic proposition $p_\psi$, referred to as the \emph{trigger} for $\psi$, and let $\AP_\Psi = \{ p_\psi \mid \psi \in \Psi \}$. In this case, $\AP_{\Psi}=\{p_{\varphi}, p_{\kappa}\}$. 
For every subformula $\phi$ of $\varphi$, define $\overline{\phi}$ as the formula obtained by replacing each of its \emph{top-level} temporal subformulae $\psi$ (i.e.~$\psi$ is not a strict subformula of $\psi'$ where $\psi' \in \Psi$) with its corresponding trigger $p_\psi$. For example,  $\overline{\varphi} = p_\varphi$, $\overline{\big((q \until_{[3, 5]} r) \wedge s \big)} = p_\kappa \land s$, and $\overline{\kappa} = p_\kappa$.
Intuitively, this operation abstracts $\phi$ into its \emph{propositional skeleton} $\overline{\phi}$.
Now we construct a new formula $\varphi'$ over the extended atomic propositions $\AP \cup \AP_\Psi$ that is \emph{equi-satisfiable} with the original formula $\varphi$, i.e.~$\varphi$ is satisfiable if and only if $\varphi'$ is satisfiable. The formula $\varphi'$ is defined as the conjunction of:
\begin{itemize}
  \item the propositional skeleton of $\varphi$, i.e. $\overline{\varphi}$, and
  \item one formula $\globally (p_\psi \implies \triangledown_I(\overline{\phi_1}, \dots, \overline{\phi_n}))$ for each $\psi = \triangledown_I(\phi_1, \dots, \phi_n) \in \Psi$.
\end{itemize}
In our example $\varphi'$ is $p_{\varphi} \wedge \globally(p_\varphi \implies p \until_{[0, 2)} (p_{\kappa} \wedge s)) \wedge  \globally(p_{\kappa} \implies q \until_{[3, 5]} r)$.
In general,
\[
\varphi' := \overline{\varphi} \land \bigwedge_{\{ \psi = \triangledown_I(\phi_1, \dots, \phi_n) \in \Psi \}} \globally (p_\psi \implies \triangledown_I(\overline{\phi_1}, \dots, \overline{\phi_n})) \;.
\]
For each $\globally(p_\psi \implies \triangledown_I(\dots))$, we construct a corresponding \emph{tester $\ta{}$} $\mathcal{C}_\psi$ that accepts exactly those behaviors satisfying the constraint. That is,
\[
\sem{\mathcal{C}_\psi} = \sem{\globally(p_\psi \implies \triangledown_I(\overline{\phi_1}, \dots, \overline{\phi_n}))} \;.
\]
 Intuitively, these are called tester automata because $\mathcal{C}_\psi$ tests at every position where $p_\psi$ holds, whether the subformula $\triangledown_I(\dots)$ is satisfied or not.\footnote{In the terminology of~\citep{pnueli2008merits}, these are called \emph{positive} testers.} 
  In addition, we construct a simple automaton $\mathcal{C}_{\overline{\varphi}}$ which enforces that the purely propositional formula $\overline{\varphi}$ holds at the initial time instant. Finally, the automaton for $\varphi'$ is obtained as the synchronous product:
$\mathcal{C}_{\varphi'} := \mathcal{C}_{\overline{\varphi}} \times (\times_{\psi \in \Psi} \mathcal{C}_\psi)$, 
which accepts the same language as $\sem{\varphi'}$, and hence is equisatisfiable to the original formula $\varphi$.

\section{Past $\mitl{}$ and Pnueli Modalities}
\label{sec:past.and.pnueli}
In this section, 
we take the approach that \emph{past and future are reflections of each other} and extend the construction of \citep{DBLP:conf/cav/BrihayeGHM17} to handle past and Pnueli modalities. 
In fact, it is possible to accommodate any temporal operator $\triangledown_I$, as long as for each subformula $\triangledown_I(\phi_1, \dots, \phi_n)$ of the entire formula $\varphi$ (assumed to be in negation normal form), we can construct a corresponding tester $\ta{}$ $\mathcal{C}_\psi$ such that $\sem{\mathcal{C}_\psi} = \sem{\globally (p_\psi \implies \triangledown_I(\overline{\phi_1}, \dots, \overline{\phi_n}))}$. In what follows, we focus on tester $\ta{s}$ for past $\mitl{}$ modalities and Pnueli modalities.

\subsection{Tester Automata for Past $\mitl{}$ Modalities with Unilateral Intervals}

Intuitively, tester $\ta{s}$ for past $\mitl{}$ modalities just need to `remember' what happened up until the current point and `output' accordingly; for past $\mitl{}$ modalities with \emph{unilateral} intervals this can be simple (see, e.g.,~\citep{ferrere2019real}). For instance, for $p \since_{[0, 2]} q$, the tester $\ta{}$ resets a clock $x$ whenever it reads a $q$-event, and $p \since_{[0, 2]} q$ will continue to hold until either $\neg p$ holds or $x$ exceeds $2$, whichever happens earlier. In order to deal with more sophisticated past modalities in a principled manner, here we make use of a folklore result in the theory of timed automata, namely that the \emph{class of (finite-word) timed regular languages are closed under reversal}.
 This result, essentially based on the simple idea of `reversing the arrows' in the corresponding $\ta{s}$, enables us to construct the tester $\ta{s}$ for past modalities in a principled manner.
Formally, given a finite timed word $\rho=(\sigma_1, \tau_1)(\sigma_2, \tau_2) \dots (\sigma_{n-1}, \tau_{n-1})(\sigma_n, \tau_n)$ where $\tau_1=0$, 
the \emph{reverse} of $\rho$, denoted $\rho^R$, is the finite timed  word $\rho^R=(\sigma_n,0)(\sigma_{n-1}, \tau_n-\tau_{n-1}) \dots 
(\sigma_2, \tau_n-\tau_2)(\sigma_1, \tau_n-\tau_1)$. Given a finite-word timed language $L \subseteq T\Sigma^*$,
we write $L^R=\{\sigma^R \mid \sigma \in L\}$ for the reverse finite-word timed language of $L$.

 \smallskip 
 Let $\varphi$ be any $\mitlpp{}$ formula, and $\Psi$ be a set of all its temporal subformulae.
Consider a temporal subformula $\psi$ of the form $\oset[-1pt]{\leftarrow}{\triangledown}_I\!\!(\phi_1, \ldots, \phi_n)$, where $\oset[-1pt]{\leftarrow}{\triangledown}_I$ denotes a past modality such as $\past_I$, $\Boxminus_I$, $\PnO_I$, $\dualPnO_I$, $\since_I$, or $\trigger_I$. Let $(\sigma_1', \tau_1), (\sigma_2', \tau_2), \dots, (\sigma_i', \tau_i), \dots$ be an infinite timed word over the alphabet $\Sigma_{\AP \cup \AP_{\Psi}}$. %
As $\oset[-1pt]{\leftarrow}{\triangledown}_I$ is a past modality, if $\oset[-1pt]{\leftarrow}{\triangledown}_I\!\!(\overline{\phi_1}, \dots, \overline{\phi_n})$ holds at position $i$, then $\oset[-1pt]{\rightarrow}{\triangledown}_I\!\!(\overline{\phi_1}, \dots, \overline{\phi_n})$, where
$\oset[-1pt]{\rightarrow}{\triangledown}_I$ is the \emph{finite-word} future
counterpart of $\oset[-1pt]{\leftarrow}{\triangledown}_I$, 
must be satisfied by 
the finite timed word over $\Sigma_{\AP \cup \AP_{\Psi}}$ obtained by reversing
$(\sigma_1', \tau_1) \dots (\sigma_i', \tau_i)$,
i.e.~the prefix read thus far, and vice versa.
In other words, the
tester $\ta{}$ for $\oset[-1pt]{\leftarrow}{\triangledown}_I\!\!(\overline{\phi_1}, \dots, \overline{\phi_n})$
can be obtained by `reversing' a $\ta{}$
that accepts the finite-word timed language of 
$\globally (p_\psi \implies \oset[-1pt]{\rightarrow}{\triangledown}_I\!\!(\overline{\phi_1}, \dots, \overline{\phi_n}))$.\footnote{For the past modalities $\oset[-1pt]{\leftarrow}{\triangledown}_I$ considered in this paper, the tester $\ta{s}$ for $\oset[-1pt]{\rightarrow}{\triangledown}_I$
(the \emph{finite-word} future
counterpart of $\oset[-1pt]{\leftarrow}{\triangledown}_I$)
can be obtained from the infinite-word ones with some trivial modifications in acceptance conditions.}
More generally, we have the following lemma.

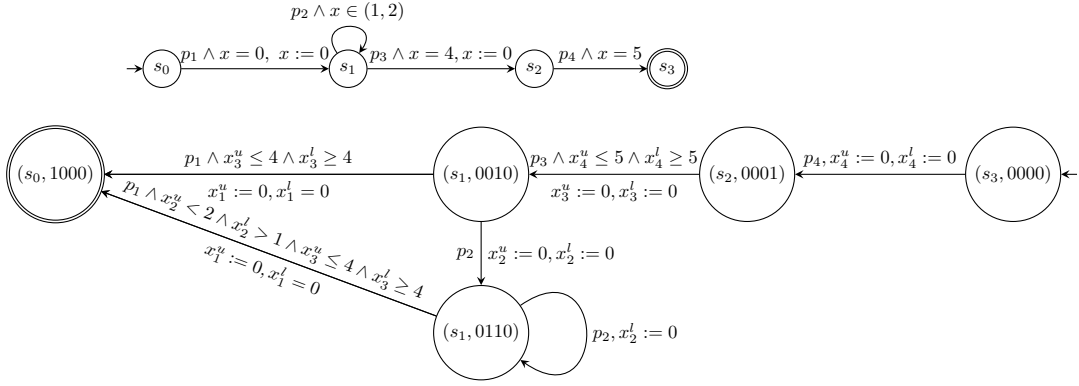
\begin{figure}[!htbp]
\centering
  \begin{tikzpicture}[->, transform shape, scale=0.7, node distance=4cm]
   \node (q0) [state,  initial, initial where=left, initial distance=0.3cm, initial text={}] at (-5,0) {$s_0$};
    \node (q1) [state] at (-1.5,0) {$s_1$};
    \node (q2) [state] at (2,0) {$s_2$};
    \node (q3) [state, accepting] at (4.5,0) {$s_3$};
    \draw   (q0) edge[] node[above]{$p_1 \land x=0,~ x:=0$}  (q1);
    \draw   (q1) edge[loopabove] node[above]{$p_2 \land x \in (1,2)$}  (q1);
    \draw   (q1) edge[] node[above]{$p_3 \land x=4, x:=0$}  (q2);
\draw   (q2) edge[] node[above]{$p_4 \land x=5$}  (q3);

    \node (q0) [state, accepting] at (-7,-2) {$(s_0,1000)$};
    \node (q1) [state] at (1,-2) {$(s_1,0010)$};
    \node (q11) [state] at (1,-5) {$(s_1,0110)$};
        \node (q2) [state] at (6,-2) {$(s_2,0001)$};
    \node (q3) [state, initial, initial where=right, initial distance=0.3cm, initial text={}] at (11,-2) {$(s_3,0000)$};
    \draw   (q1) edge[] node[above, sloped]{$p_1 \land x^{u}_3\le 4 \land x^{l}_3 \ge 4$} (q0);
    
    \draw   (q1) edge[] node[below, sloped]{$x^u_1 := 0, x^l_1 = 0$} (q0);
    
    \draw   (q11) edge[] node[above, sloped]{$p_1 \land x^{u}_2 < 2 \land x^{l}_2>1 \land x^{u}_3\le 4 \land x^{l}_3 \ge 4$} (q0);
    
    \draw   (q11) edge[] node[below, sloped]{$x^u_1 := 0, x^l_1 = 0$} (q0);
    
    \draw   (q11) edge[loopright] node[right]{$p_2, x^l_2:=0$}  (q11);
    \draw   (q1) edge[] node[left]{{$p_2$}} node[right]{{$x^u_2 := 0, x^{l}_2 := 0$}}  (q11);
    
    \draw (q2) edge[]
  node[above] {$p_3 \land x^u_{4} \leq 5 \land x^l_4 \geq 5$}
  node[below] {$x^u_3 := 0,x^{l}_3 := 0$}
  (q1);
\draw   (q3) edge[] node[above]{$p_4, x^{u}_4:=0, x^l_4:=0$}  (q2);
 \end{tikzpicture}
  \caption{A $\ta{}$ with finite-word acceptance condition (above) and the corresponding `reverse' $\ta{}$ (below). The $\ta{}$ above has a single clock $x$ and four transitions (numbered $1$ to $4$ from left to right), thus %
  we have the clocks $x_i^u, x_i^{\ell}$ for $1 \leq i \leq 4$ and each location has a bit-vector of length $4$.}
  \label{eg-reverse}
\end{figure}

\begin{lemma}\label{lem:reversal}
Finite-word timed regular languages are closed under reversal.
\end{lemma}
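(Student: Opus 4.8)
The plan is to give a direct automaton construction that ``reverses the arrows'' of a given timed automaton, as illustrated in \cref{eg-reverse}. Let $\mathcal{A} = \langle \Sigma, S, s_0, X, \Transitions, \F \rangle$ be a TA with finite-word acceptance such that $\sem{\mathcal{A}} = L$; I build a TA $\mathcal{A}^R$ with $\sem{\mathcal{A}^R} = L^R$. The skeleton of $\mathcal{A}^R$ is obtained by installing, for each forward transition $(s,\sigma,g,\lambda,s') \in \Transitions$, a backward transition from $s'$ to $s$ that reads the same letter $\sigma$, by making every former accepting location reachable from a fresh initial location, and by turning the (unique) $s_0$ into the sole accepting location. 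The whole difficulty lies in re-engineering the clock constraints and resets, because a forward clock records the time elapsed since its last reset---a constraint about the \emph{past}---whereas after reversal the very same elapsed time becomes a constraint about the \emph{future}, which clocks cannot measure directly.

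The key observation that makes the construction work is that the time difference between any two events is preserved in magnitude under reversal, and that the roles of guard-check and reset are \emph{exchanged}. Concretely, suppose along a forward run a clock is reset at a transition firing at time $a$ and tested by a guard $x \bowtie c$ at a later transition firing at time $b$, imposing $b - a \bowtie c$. In the reversed word the tested transition is read first and the resetting transition later, so I let $\mathcal{A}^R$ reset a fresh clock exactly where $\mathcal{A}$ performed the \emph{test}, and check the accumulated value of that fresh clock exactly where $\mathcal{A}$ performed the \emph{reset}; the reverse-elapsed time between these two events is again $b - a$, so the same constraint is enforced. Since a single forward reset may be referenced by several later guards, I index the fresh clocks by the transitions of $\mathcal{A}$ and, as in \cref{eg-reverse}, keep a separate ``upper'' clock $x_i^u$ and ``lower'' clock $x_i^\ell$ for each, so that the two sides of an interval $\langle l, u\rangle$ can be enforced independently---this is what lets constraints that recur along a self-loop (where the upper bound binds at the reverse-first and the lower bound at the reverse-last occurrence) be captured faithfully by resetting $x_i^u$ once but $x_i^\ell$ on every occurrence.

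To make the pairing of each forward test with the forward reset it refers to unambiguous, I augment each location of $\mathcal{A}^R$ with a bit-vector recording which fresh clocks are currently ``live,'' i.e.~correspond to a test already seen in reverse whose matching reset has not yet been encountered; a live clock is discharged (its constraint verified) precisely at the backward transition corresponding to the relevant forward reset. Because there are only finitely many transitions, this requires only finitely many fresh clocks and at most exponentially many bit-vectors, so $\mathcal{A}^R$ is a bona fide TA. Correctness then follows by exhibiting a bijection between accepting runs of $\mathcal{A}$ on $\rho$ and accepting runs of $\mathcal{A}^R$ on $\rho^R$: the location sequence is reversed, the initial-to-accepting endpoints are swapped, and each clock-constraint obligation of one run is matched, via the magnitude-preservation of time differences, to the dual obligation of the other (the normalisation $\tau_1 = 0$ being consistent on both sides, since $\rho^R$ also starts at timestamp $0$).

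I expect the main obstacle to be the careful matching of guard-checks to their governing resets---tracking clock lifetimes correctly through the bit-vector so that no test is ever verified against the wrong reset---together with the bookkeeping for constraints tested repeatedly along loops, where one must show that the $u/\ell$ split and the reset discipline enforce exactly the conjunction of the original per-step constraints. Once this correspondence is set up, the underlying timing arithmetic is routine.
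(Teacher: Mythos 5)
Your proposal is correct and follows essentially the same route as the paper's proof: reverse the transitions, swap initial and accepting locations, exchange the roles of guard-checks and resets by introducing per-transition upper/lower clocks ($x^u_\delta$ reset at the reverse-first occurrence, $x^\ell_\delta$ at every occurrence), and use a bit-vector in the locations to record which accumulated constraints are live until discharged at the reverse image of the original reset. The run-by-run correspondence you sketch is exactly the argument the paper carries out via its $\textsf{LastReset}$ and $\textsf{AllUses}$ bookkeeping.
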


\begin{proof}
 
Given a $\ta{}$ $\mathcal{A} = \langle \Sigma, S, s_0, X, \Transitions, \F \rangle$ where $\F = \{ F \}$ with $F \subseteq S$ (without loss of generality, we assume that every transition from $s_0$ resets all clocks and $|F| = 1$), 
we construct a $\ta{}$ $\mathcal{A}^\textit{R} = \langle \Sigma, S^\textit{R}, s_0^\textit{R}, X^\textit{R}, \Transitions^\textit{R}, \F^\textit{R} \rangle$
such that $\sem{\mathcal{A}^\textit{R}} = \sem{\mathcal{A}}^\textit{R}$, where 
$\sem{\mathcal{A}}^\textit{R}$ is the reverse finite-word timed language of 
$\sem{\mathcal{A}}$. 
The key idea is to swap the roles of the initial and final locations
and remember 
which clock constraints have been `hit' in the discrete state space. Specifically, each location in $S^\textit{R}$ is of the form $(s, \mathbf{b})$ where $s \in S$ and $\mathbf{b}$ is a zero-initialised two-dimensional bit-array indexed by $x \in X$ and $\delta \in \Transitions$.
That is, for each $x \in X$ and each transition $\delta \in \Delta$, we have a bit corresponding to the pair $(x,\delta)$. 

For each $\delta = (s, \sigma, g, \lambda, t) \in \Transitions$, we have
$2^{\lvert X \rvert \cdot \lvert \Delta \rvert}$ reversed transitions $\delta^\textit{R}_\mathbf{b} = ((t, \mathbf{b}), \sigma, g^R, \lambda^R, (s, \mathbf{b'})) \in \Transitions^\textit{R}$ 
(one for each possible $\mathbf{b}$)
where each $\delta^\textit{R}_\mathbf{b}$ sets $\mathbf{b'}[x, \delta] = 1$ for all $x \in X$ with 
$x \in g$.
For each $x \in X$ with $x \in g$, 
we introduce 
in 
$X^\textit{R}$ 
two new clocks
$x_\delta^\textit{u}$, $x_\delta^\textit{l}$
(for the upper- and lower-bound of $g(x)$, respectively):
$x_\delta^\textit{u}$ is reset when
one of the $\delta^\textit{R}_\mathbf{b}$'s is first taken,
and $x_\delta^\textit{l}$ is reset
every time when one of the $\delta^\textit{R}_\mathbf{b}$'s 
is taken. 
Finally, for each $x \in X$ with $x \in \lambda$, $\delta^R_\mathbf{b}$ checks the `accumulated' clock constraint indicated by $\mathbf{b}$, sets $\mathbf{b'}[x, \delta'] = 0$ for all $\delta' \in \Transitions$ where $\delta' \neq \delta$, and also sets $\mathbf{b'}[x, \delta] = 0$ if $x \notin g$. Having defined the reverse TA, before providing the formal proof that this accepts the reverse language, we look at an example. 
\smallskip 

\noindent{\bf{On an Example}}.
See \cref{eg-reverse} for an illustrative example. The given $\ta{}$ has a single clock $x$ and 4 transitions. Therefore, $\mathbf{b}$
is a row vector of dimension $1 \times 4$.  Let us number the transitions in $\ta{}$ as follows : $s_0$ to $s_1$ as 1, 
the loop on $s_1$ as 2, the transition from $s_1$ to $s_2$ as 3, and the transition from $s_2$ to $s_3$ as 4.  
The initial state of the reversed automaton is $(s_3,0000)$ where $s_3$ is the final state of the 
$\ta{}$ we started with, and 0000 signifies that no transition has fired yet.
We reverse the transition (numbered 4) from $s_2$ to $s_3$ involving guard $x=5$  as follows. 
We add the transition $(s_3,0000)$ to $(s_2,0001)$ resetting $x_4^u$ and $x_4^{l}$. 
The fourth bit  1 in $(s_2,0001)$ signifies that the original transition (numbered 4) 
from $s_2$ to $s_3$ involved a guard on $x$. The idea is that 
a transition outgoing from $(s_2,0001)$ will check a guard on  $x_4^u$ and $x_4^{l}$
if the corresponding actual transition entering $s_2$ in $\ta{}$ had a reset on $x$. 
In our case, the $\ta{}$ has a reset on $x$ on the transition entering $s_2$
from $s_1$, while it has the guard $x=5$ on the transition from
$s_2$ to $s_3$. When moving out from $(s_2,0001)$ to $(s_1,0010)$, we check  $x_4^u \leq 5$ and $x_4^{l} \geq 5$; recall that these clocks were  reset 
on entering $(s_2,0001)$. This results in the guards $x_4^{u} \leq 5 \wedge x_4^{l} \geq 5$
decorating the transition from $(s_2,0001)$ to $(s_1, 0010)$. The third bit  1
in $(s_1, 0010)$ signifies the guard $x=4$ in the 3rd transition (from $s_1$ to $s_2$)  
in the original $\ta{}$. The clocks $x_3^u$ and $x_3^{l}$ 
are reset while going from $(s_2,0001)$ to $(s_1, 0010)$; the transition from $(s_1,0010)$ to $(s_0,1000)$ checks 
$x_3^{u} \leq 4 \wedge x_3^{l} \geq 4$ since $x$ was reset on the transition from $s_0$ to $s_1$ in $\ta{}$.

Note that guards are checked only on the reverse transitions where there was a clock reset 
in the original transition; likewise, $x_{\delta}^l, x_{\delta}^u$ are reset on a reverse transition when the original transition numbered $\delta$ 
has a guard on $x$.  
For instance, the loop on $s_1$ in $\ta{}$ does not reset $x$, but has a guard on $x$. Correspondingly, 
in the reversed $\ta{}$, the transition from $(s_1, 0010)$ to $(s_1, 0110)$ has no guards but resets $x_2^{u}$ and $x_2^{l}$. 
The second and third bits 1 in the target $(s_1, 0110)$ represent guards on $x$ in the second and third transitions in $\ta{}$ outgoing from $s_1$;
notice that correspondingly we have the two clock resets (on $x_3^u, x_3^l$ and  $x_2^u, x_2^l$) on which time is accumulating. 
Likewise, the loop at $(s_1, 0110)$  has no guards, and only resets $x_2^{l}$. 
From $(s_1, 0110)$, the transition to $(s_0, 1000)$ checks guards on both $x_2^u, x_2^l$ and $x_3^u, x_3^l$, since 
the original transition (numbered 1 from $s_0$ to $s_1$) has a reset on $x$. Likewise, the transition from 
$(s_1, 0010)$  to $(s_0, 1000)$ has guards on $x_3^u, x_3^l$.
Notice the resets of  $x_1^u, x_1^l$ on both transitions entering $(s_0,1000)$
corresponding to the guard on $x$ on the transition (numbered 1) from $s_0$ to $s_1$.

We now prove that $\sem{\mathcal{A}}^\textit{R}$ is the reverse finite-word timed language of 
$\sem{\mathcal{A}}$. 
Consider
a finite timed word $\rho = (\sigma_1,\tau_1)(\sigma_2,\tau_2)\cdots(\sigma_n, \tau_n) \in \sem{\mathcal{A}}$ and
a finite accepting run 
\[
r = (s_0,\nu_0) \xrightarrow{\delta_1 = (s_0,\sigma_{1},g_1,\lambda_1,s_{1})}(s_1,\nu_1)
\xrightarrow{\cdots}
\cdots
\xrightarrow{\cdots}
(s_{n-1},\nu_{n-1}) \xrightarrow{\delta_n = (s_{n-1},\sigma_{n},g_n,\lambda_n,s_{n})}(s_n,\nu_n)
\]
of $\mathcal{A}$ on $\rho$
where $s_n \in F$ (note that $x \in \lambda_1$ for every $x \in X$, by assumption).
Given the run $r$, for each $x \in X$ and $i \in \{1, \dots, n\}$, we define 
\[
\textsf{LastReset}^r_x(i) =
\begin{cases}
0 & \text{if } i = 1 \,, \\
j_\textit{max} & \text{if } i > 1 \text{ and } j_\textit{max} \text{ is the largest } j, 1 \leq j < i \text{ with } x \in \lambda_j \,.
\end{cases}
\]
Intuitively, at each position $i$ of the run $r$, and for each clock $x$, $\textsf{LastReset}^r_x(i)$ is the latest position $j < i $ where $x$ has been reset.
By the definition of $r$, for any $x \in X$ and $i \in \{1, \dots, n\}$ with $x \in g_i$ and $\textsf{LastReset}^r_x(i) = j$,
we have $\tau_i - \tau_j \in g_i(x)$ (let $\tau_0 = 0$).
Now for each $x \in X$ and $j \in \{1, \dots, n\}$, we define
\[
\textsf{AllUses}^r_x(j) =
\begin{cases}
\emptyset & \text{if } x \notin \lambda_j \,, \\
\{\, i \mid j < i \leq n \text{ and } x \in g_i \text{ and }  
\textsf{LastReset}^r_x(i) = j
\,\} & \text{if } x \in \lambda_j \,.
\end{cases}
\]
$\textsf{AllUses}^r_x(j)$  captures all positions $i > j$ of the run $r$ where the clock $x$ is used in the guard 
$g_i$, provided $x$ was last reset at $j < i$. That is, $x \in g_i$, $x$ is reset at position $j$, and 
for all $j < k < i$, $x$ has not been reset.

Now consider $\rho^R=(\sigma_n,0)(\sigma_{n-1}, \tau_n-\tau_{n-1}) \dots 
(\sigma_2, \tau_n-\tau_2)(\sigma_1, \tau_n-\tau_1)$. We argue that there is a finite accepting run
\[
r^\textit{R} = ((s_n,\mathbf{b_0}), \mu_0) \xrightarrow{\delta^R_{n, \mathbf{b_0}}} ((s_{n-1}, \mathbf{b_1}),\mu_1)
\xrightarrow{\cdots}
\cdots
\xrightarrow{\cdots}
((s_{1}, \mathbf{b_{n-1}}),\mu_{n-1}) \xrightarrow{\delta^R_{1, \mathbf{b_{n-1}}}}((s_0, \mathbf{b_n}),\mu_n)
\]
of $\mathcal{A}^R$ on $\rho^R$ 
where $\mathbf{b_0}$ contains all zeroes and $\mu_0$ maps all clocks in $X^R$ to $0$.
In fact, it is clear from the construction of $\mathcal{A}^R$ that $\mathbf{b_1}, \dots, \mathbf{b_n}$ and $\mu_1, \dots, \mu_n$ are uniquely determined by $\delta^R_{n, \mathbf{b_0}}, \dots \delta^R_{1, \mathbf{b_{n-1}}}$, and thus we only need to show that $\mu_{n-j} + (\tau_{j+1} - \tau_{j}) \models g^R_{j}$ (the clock constraint on $\delta^R_{j, \mathbf{b_{n-j}}}$) for each $j \in \{1, \dots, n - 1 \}$  (by construction $g^R_n = \top$).
If, for example, $x^u_\delta \leq b \land x^l_\delta \geq a$
appears in $g^R_j$
for some $x \in X$ and $\delta = (s, \sigma, g, \lambda, t) \in \Transitions$ (i.e.~$g(x) = [a, b]$),
then the following conditions hold:
\begin{itemize}
    \item $x \in \lambda_j$ (the set of clocks reset on $\delta_{j}$); %
    \item $\mathbf{b_{n - j}}[x, \delta] = 1$;
    \item There is at least one $i \in \textsf{AllUses}^r_x(j)$ such that $\delta_i = \delta$.
    \item For any $i \in \textsf{AllUses}^r_x(j)$ such that $\delta_i = \delta$, we have $\tau_i - \tau_j \in [a, b]$.
\end{itemize}
Let 
$k = \max \{\, i \in \textsf{AllUses}^r_x(j) \mid  \delta_i = \delta \,\}$ 
and $m = \min \{\, i \in \textsf{AllUses}^r_x(j) \mid  \delta_i = \delta \,\}$. By construction, $(\mu_{n-j} + (\tau_{j+1} - \tau_{j}))(x^u_\delta) = \tau_k - \tau_j$ and
$(\mu_{n-j} + (\tau_{j+1} - \tau_{j}))(x^l_\delta) = \tau_m - \tau_j$. It follows that $\mu_{n-j} + (\tau_{j+1} - \tau_{j}) \models x^u_\delta \leq b \land x^l_\delta \geq a$, and the same argument applies to any $x \in X$ and $\delta \in \Delta$. For the other direction, assume that there is a finite accepting run $r^R$ (as above) of $\mathcal{A}^R$ on $\rho^R$. The claim holds by noting that 
$\tau_k - \tau_j \in [a, b]$ and $\tau_m - \tau_j \in [a, b]$ implies $\tau_i - \tau_j \in [a, b]$
for any $i \in \{\, i \in \textsf{AllUses}^r_x(j) \mid  \delta_i = \delta \,\}$.
 \end{proof}

\begin{remark}
    From~\citep{Wilke}, we know that the class of  
 timed regular languages (i.e.~accepted by $\ta{s}$) is precisely characterised by a monadic second-order logic $\mathcal{L}\!\!\oset[-1pt]{\leftrightarrow}{d}$ with relative distance formulae of the form $\oset[-1pt]{\leftarrow}{d}\!\!(X, x) \sim c$ and
 $\oset[-1pt]{\rightarrow}{d}\!\!(x, X) \sim c$, where
 $X$ can only appear in the outermost existential second-order quantifiers. This result carries over to the case of \emph{finite-word} timed regular languages and suggests a way to `reverse' a given $\ta{}$ with finite-word acceptance: we first compute the equivalent $\mathcal{L}\!\!\oset[-1pt]{\leftrightarrow}{d}$ formula $\vartheta$. Then, we obtain another $\mathcal{L}\!\!\oset[-1pt]{\leftrightarrow}{d}$ formula 
 $\vartheta^{\textit{R}}$
 for  the reverse language by inverting all the order predicates and relative distance formulae in $\vartheta$, e.g., $x < y$ becomes $y < x$ and 
 $\oset[-1pt]{\leftarrow}{d}\!\!(X, x) < c$ becomes 
 $\oset[-1pt]{\rightarrow}{d}\!\!(x, X) < c$.
 Finally, we convert $\vartheta^\textit{R}$ back into a $\ta{}$.
 The last step, however, involves removing all the `future' relative distance formulae $\oset[-1pt]{\rightarrow}{d}\!\!(x, X) \sim c$ (see details in~\citep{Wilke}) and may potentially result in a non-elementary blow-up.  Our construction above operates directly on $\ta{s}$ to avoid this blow-up; similar ideas have been used in~\citep{AlurH92, alur1999event}.

\end{remark}
For a given $\ta{}$, the lemma above yields a reverse $\ta{}$ with a larger number of locations and more clocks in general. 
For our purpose, however, with a bit more care we can construct tester $\ta{s}$ of roughly the same sizes as their future counterparts, sometimes even smaller due to the simpler acceptance conditions. For example, 
in~\cref{fig:until_l_infty} we have a tester $\ta{}$ for 
  $\phi_1\until_{\geq l} \phi_2$, assuming the finite-word semantics (a similar $\ta{}$ can be found in~\citep{DBLP:conf/cav/BrihayeGHM17}, but the one here is based on the more expressive strict semantics of $\until_I$). The running obligation\footnote{Recall that an obligation is an assertion which must hold in the future / past.} in this example is ``after $x$ reaches $l$, there will be a point where $\phi_2$ holds''.  
  The idea of the tester $\ta{}$ is to reset the clock $x$ and move from $s_0$ to $s_1$, when the trigger $p_\psi$ is first set to $\top$. While the $\ta{}$ is in $s_1$, the clock $x$ is reset when $p_\psi$ is set to $\top$---updating the existing obligation---and $\overline{\phi_1}$ must hold continuously, unless the existing obligation is satisfied at the same time ($p_\psi \land \overline{\phi_2} \land x \geq l$).
  Since $[l, \infty)$ is unilateral, the $\ta{}$ only needs to maintain a \emph{single} obligation; $x$ is reset 
each time $p_\psi$ is $\top$. 
    The tester $\ta{}$ for 
  $\phi_1\since_{\geq l} \phi_2$ (\cref{fig:since.l.infty}) is exactly the same with the transitions reversed and the clock constraints / resets swapped; in this case, we only have a lower-bound constraint, so a single clock $x$ is sufficient.
  There is no need to use a bit-array to record whether $x$ has been reset before, as it is clear from the graph that $x \geq l$ can only happen after $x := 0$.
  The tester $\ta{s}$ for other types of past $\mitl{}$ modalities
  can be found in~\cref{app:past.components}.

\begin{figure}
\begin{minipage}[t]{0.49\textwidth}
\centering
  \begin{tikzpicture}[->, transform shape, scale=0.7]
    \node[initial left, state, accepting](0) {$s_0$};
    \node[state,right=5.5cm of 0] (1) {$s_1$};

    \path[->] (0) edge[above,bend left=10]
    node{$p_\psi$, $x:=0$} (1)%
    (0) edge[loopbelow,below,looseness=20]
    node[align=center]{$\lnot p_\psi$} (0)%
    (1) edge[below,bend left=10]
    node{$\lnot p_\psi \land \overline{\phi_2} \land x\geq l$}
    (0)%
    (1) edge[loopbelow,below,looseness=20]
    node[align=center] {$\lnot p_\psi \land \overline{\phi_1}$\\
            $p_\psi \land \simplify{\phi_1}$, $x:=0$ \\
                $p_\psi\land \simplify{\phi_2}\land x\geq l$, $x:=0$
				} (1); %
  \end{tikzpicture}
\captionof{figure}{The finite-word tester $\ta{}$ for 
  $\phi_1\until_{\geq l} \phi_2$.}
  \label{fig:until_l_infty}
\end{minipage}
\hfill
\begin{minipage}[t]{0.49\textwidth}
\centering
  \begin{tikzpicture}[->, transform shape, scale=0.7]
   \node[initial left,state, accepting](0){$s_0$};
   \node[state, right=5.5cm of 0, accepting](1){$s_1$};
   
   \path
   (0) edge[loopbelow,below,looseness=20] node[align=center]{$\lnot p_\psi$} (0)
   (0) edge[below, bend right=10] node{$\lnot p_\psi \land \overline{\phi_2}$, $x := 0$} (1)
   (1) edge[loopbelow,below,looseness=20] node[align=center]{$\lnot p_\psi \land \overline{\phi_1}$ \\ $p_\psi \land \overline{\phi_1} \land x \geq l$ \\ $p_\psi \land \overline{\phi_2}$, $x \geq l$, $x := 0$} (1)
   (1) edge[above, bend right=10] node{$p_\psi$, $x \geq l$} (0);
 \end{tikzpicture}
\captionof{figure}{The tester $\ta{}$ for 
  $\phi_1\since_{\geq l} \phi_2$.}
\label{fig:since.l.infty}
\end{minipage}
\end{figure}

\subsection{Tester Automata for Pnueli Modalities} 
\label{subsec:tester.for.pnueli}
We now describe how to handle Pnueli modalities and their past counterparts. Recall  that by definition, the intervals in Pnueli modalities are unilateral.
Before we describe the tester  $\ta{}$ for (say) $\PnF_{< u}(\phi_1, \dots, \phi_n)$, let us go through an example. 
\begin{example}
Consider the timed word 
\[
\rho = (\{p_\psi, p_5\}, \tau_1)(\{p_1, p_4\}, \tau_2)(\{p_\psi\}, \tau_3)(\{p_1\}, \tau_4)(\{p_\psi\}, \tau_5)(\emptyset, \tau_6)(\{p_2\}, \tau_7)(\{p_1\}, \tau_8)(\{p_2\}, \tau_9)\dots
\]
A tester $\ta{}$ $\mathcal{C}_\psi$ for $\psi = \PnF_{< 10} (p_1, p_2)$ is triggered at positions labeled with $p_{\psi}$, namely, 
time points $\tau_1, \tau_3, \tau_5$. Assume  $\tau_7 < \tau_1 + 10 < \tau_9 < \tau_3 + 10$.  
To accept $\rho$, $\mathcal{C}_\psi$ may non-deterministically \emph{merge} the first two obligations (`$p_1, p_2$ occur in order in $[\tau_1, \tau_1 + 10)$'
and `$p_1, p_2$ occur in order in $[\tau_3, \tau_3 + 10)$'),
as they are both witnessed by the events $(\{p_1\}, \tau_4)$ and 
$(\{p_2\}, \tau_7)$; in this case, $\mathcal{C}_\psi$ needs to validate that $\tau_7 < \tau_1 + 10$.
The third obligation (`$p_1, p_2$ occur in order in $[\tau_5, \tau_5 + 10)$') 
must be dealt with separately, as it can only be witnessed by 
$(\{p_1\}, \tau_8)$ and $(\{p_2\}, \tau_9)$.  When this obligation at position 5 expects $p_1, p_2$ in sequence, 
the obligations at positions 1,3 expect $p_2$.
For the obligation at position 5,  $\mathcal{C}_\psi$ needs to validate that $\tau_9 < \tau_5 + 10$.
Alternately, $\mathcal{C}_\psi$ may merge the second and third obligations
and deal with the first separately.
In both cases, $\mathcal{C}_\psi$ needs $2$ clocks. In general, as we show below, a tester $\ta{}$ for $\PnF_{[0, u \rangle} (p_1, \dots, p_n)$ needs $n$ clocks.
\end{example}

 A tester $\ta{}$ for (say) $\PnF_{< u}(\phi_1, \dots, \phi_n)$  
 needs to maintain $n$ types of obligations---the first type of obligation expects $\phi_1, \dots, \phi_n$, the second type of obligation expects $\phi_2, \dots, \phi_n$, and so on. Since $[0, u)$ is unilateral, the tester $\ta{}$ only needs to keep track of \emph{at most one} obligation per \emph{each type}, so at most $n$ obligations in total. %
For instance, if the obligation that expects $\phi_2, \dots, \phi_n$ to hold in a sequence before time $\tau_i + u$ is satisfied, then 
the obligation that expects $\phi_2, \dots, \phi_n$ to hold in sequence before time $\tau_i + u'$ where $u < u'$ must be satisfied as well, i.e.~the second obligation is implied by the first one. 
We formulate the tester $\ta{}$
as the synchronous product of $n$ component $\ta{s}$, each keeps track of an obligation and uses an individual clock. 
Compared with the simpler tester $\ta{s}$ in the previous subsection, there is one additional variation: we split $p_\psi$ into individual triggers,
each handled by one of the component $\ta{s}$.
For example if $n = 4$, we introduce four triggers $p_\psi^1, \dots, p_\psi^4$, and, as an extra optimisation, we  enforce that \emph{exactly one} of $p_\psi^1, \dots, p_\psi^4$ is triggered when $\psi$ is required to hold.   
 A component $\ta{}$ of the tester $\ta{}$ 
for $\PnF_{< u}(\phi_1, \phi_2, \phi_3, \phi_4)$ is
depicted in~\cref{fig:pnueli}.
Component $\ta{s}$ are identical apart from their triggers  $p_\psi^1, \dots, p_\psi^4$, and each of the component $\ta{s}$ may be used to track one of the $n$ types of obligations---in this case, each of $s_0'$, $s_1$, $s_2$, $s_3$ corresponds to a specific type of obligation.
Intuitively, after $p^1_\psi$ is triggered for the first time, we want to wait at $s_0'$ for as long as possible (to capture as many new obligations as possible) before reading $\overline{\phi_1}$, after which the component $\ta{}$ no longer accepts $p^1_\psi$. %
\begin{lemma}\label{lem:pnueli.correctness}
For $\psi = \PnF_{<u}(\phi_1,\dots,\phi_n)$ over $\AP$ and component $\ta{}$s $\mathcal{C}_\psi^1, \dots, \mathcal{C}_\psi^n$
where $\mathcal{C}_\psi^i$ is the $i$-th component $\ta{}$  (over $\Sigma_{\AP \cup \AP_\Psi}$ where $p_\psi^1, \dots, p_\psi^n \in \AP_\Psi$) as discussed above and depicted in~\cref{fig:pnueli},  
\begin{enumerate}
\item $\sem{\mathcal{C}_\psi^1 \times \dots \times \mathcal{C}_\psi^n} \subseteq \sem{\globally(p_\psi^1 \lor \dots \lor p_\psi^n \implies \PnF_{<u}(\overline{\phi_1}, \dots, \overline{\phi_n}))}$.
\item For any $\rho \in \sem{\globally(p_\psi^1 \lor \dots \lor p_\psi^n \implies \PnF_{<u}(\overline{\phi_1}, \dots, \overline{\phi_n}))}$ there is a $\rho' \in \sem{\mathcal{C}_\psi^1 \times \dots \times \mathcal{C}_\psi^n}$
where
\begin{itemize}
\item $\rho, i \models p \iff \rho', i \models p$ for all $i > 0$, $p \notin \{p_\psi^1, \dots, p_\psi^n\}$, $p \in \AP \cup \AP_\Psi$ and
\item $\rho, i \models p_\psi^1 \lor \dots \lor p_\psi^n \iff \rho', i \models p_\psi^1 \lor \dots \lor p_\psi^n$ for all $i > 0$.
\end{itemize}
\end{enumerate}

\end{lemma}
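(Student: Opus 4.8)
The plan is to prove the two inclusions separately, in both cases by relating runs of the product $\mathcal{C}_\psi^1 \times \dots \times \mathcal{C}_\psi^n$ to the witness sequences demanded by $\PnF_{<u}$. A preliminary observation I would record first is that none of the skeletons $\overline{\phi_1}, \dots, \overline{\phi_n}$ mentions any of the triggers $p_\psi^1, \dots, p_\psi^n$: these triggers abstract $\psi$ itself, whereas each $\overline{\phi_k}$ only refers to the triggers of the top-level temporal subformulae of $\phi_k$ (and $\psi$ cannot be a subformula of its own argument $\phi_k$). Consequently, reassigning the triggers at a position — the only freedom exploited in part (2) — leaves the truth value of every $\overline{\phi_k}$ at every position unchanged, so the witnesses available for $\PnF_{<u}$ are identical in $\rho$ and $\rho'$.

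For the \textbf{soundness} direction (part 1), I would fix an accepting run of the product on $\rho'$ and a position $j$ with $\rho', j \models p_\psi^1 \lor \dots \lor p_\psi^n$, say $\rho', j \models p_\psi^c$. The trigger is absorbed by component $c$ while it sits in its stage-$1$ location $s_0'$, whose clock was reset at the first trigger of the current batch, at some time $\tau_{j'} \leq \tau_j$. Following the forced progression $s_0' \to s_1 \to \dots \to s_{n-1}$ of that component, I extract the positions $i_1 < \dots < i_n$ (all $> j$, since the reading edges are distinct from the absorbing self-loop at $s_0'$) at which those edges fire; by the edge labels these satisfy $\overline{\phi_1}, \dots, \overline{\phi_n}$ in order, and the $<u$ guard on the last edge gives $\tau_{i_n} - \tau_{j'} < u$, hence $\tau_{i_n} - \tau_j < u$. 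This is exactly a witness for $\rho', j \models \PnF_{<u}(\overline{\phi_1}, \dots, \overline{\phi_n})$, and since $j$ was arbitrary the global formula holds.

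For the \textbf{completeness} direction (part 2), I would build $\rho'$ from $\rho$ by keeping every proposition except the $p_\psi^i$ unchanged and, at each position where $\bigvee_i p_\psi^i$ holds in $\rho$, setting exactly one index $c$. The choice is governed by an online scheduling of obligations onto the $n$ components. For each trigger position I fix a canonical (greedily earliest) witness sequence; because the interval is unilateral, two obligations of the same type are \emph{nested} — the tighter deadline comes from the earlier trigger and subsumes the looser one — so obligations that can share a common $\overline{\phi_1}$-witness are \emph{merged} and carried by a single component, which is precisely the ``wait at $s_0'$'' behaviour. I would then maintain the invariant that at every instant the still-pending obligations occupy the $n$ types $\phi_k, \dots, \phi_n$ with at most one per type, assign each to a distinct component, and check that the edge discipline of the component automaton (absorb at $s_0'$, then read $\overline{\phi_1}, \dots, \overline{\phi_n}$ under the $<u$ guard) is met by the canonical witnesses. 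Finally I would verify the product's acceptance condition: non-Zenoness together with the bounded deadline $<u$ forces each pending obligation to be discharged within bounded time, so no component stalls forever.

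The step I expect to be the crux is the scheduling invariant in part (2): showing that \emph{at most one} obligation per type is ever pending (so that $n$ components suffice) and that the canonical witnesses can be realised simultaneously by all assigned components without conflict. The delicate points are (i) that merging is always safe — the tighter deadline must certify the looser one even under the strict ``$i_1 > j$'' requirement, which forces the shared $\overline{\phi_1}$-reading to lie strictly after the latest absorbed trigger yet before the earliest deadline $\tau_{j'}+u$ — and (ii) turning the ``at most one per type'' count into a concrete, conflict-free assignment that persists over the whole infinite word. By comparison, the soundness direction and the bookkeeping of positions versus timestamps are routine.
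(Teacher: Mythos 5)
Your part (1) is fine and matches the paper's (which simply calls it straightforward): trace the component that absorbs the trigger, follow its forced cycle back to the accepting location $s_0$ (you should say explicitly that it is the B\"uchi condition that forces the component to eventually leave $s_0'$ and complete the cycle), and read off the witnesses from the edge labels and the single $<u$ guard on the last edge. Your preliminary observation that the skeletons $\overline{\phi_k}$ do not mention the triggers $p_\psi^1,\dots,p_\psi^n$ is also correct and worth recording.

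Part (2), however, has a genuine gap exactly where you flag ``the crux'': you never construct the allocation of triggers to components, and the invariant you propose to carry it --- ``at most one pending obligation per type, each on a distinct component'' --- does not translate into a strategy for the component $\ta{s}$. A component is not a slot for ``the current tightest obligation of type $k$'': once it leaves $s_0'$ it is committed to one merged batch of triggers for its entire cycle and cannot absorb a later trigger of the same type, so two components can legitimately sit at the same stage simultaneously, and the semantic subsumption of a looser type-$k$ obligation by a tighter one does not tell you which component should host a newly arriving trigger. The paper closes this gap differently: for each trigger position $i$ it defines $\textsf{End}(i)$ as the \emph{minimal} position at which a witness cycle starting at $i$ can complete (together with a canonical run that stays at $s_0'$ as long as possible), observes that triggers sharing the same $\textsf{End}$ value form a contiguous block that can be merged onto one component, allocates the blocks round-robin modulo $n$, and proves by a pigeonhole argument on the $n$ non-final locations of the canonical runs that among any $n+1$ consecutive blocks the first must have ended before the last begins --- so the round-robin assignment never collides. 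Some argument of this kind (a concrete grouping criterion plus a proof that at most $n$ groups are ever simultaneously live) is the actual content of the lemma; without it your proof establishes only the easy direction.
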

\begin{proof}
(1) is straightforward.
For (2),
consider an infinite timed word $\rho = (\sigma_1,\tau_1)(\sigma_2,\tau_2)\cdots$ such that
\[
\rho, 1 \models
\globally(p_\psi^1 \lor \dots \lor p_\psi^n \implies \PnF_{<u}(\overline{\phi_1}, \dots, \overline{\phi_n})) \;.
\]
Let $\textsf{AllTriggers}(\rho)$ be the set of all the positions of $\rho$ where $p_\psi^1 \lor \dots \lor p_\psi^n$ holds.
For each $i \in \textsf{AllTriggers}(\rho)$, 
let $\textsf{End}(i) = j > i$ be the \emph{minimal} position of $\rho$
such that is a finite accepting run
of the automaton 
\begin{center}
      \begin{tikzpicture}[->, node distance=4cm, transform shape, scale=0.7]
        \node[initial left,state, accepting](0){$s_0$};
        \node[state, right of=0](0'){$s_0'$};
    \node[state, right of=0'](1){$s_1$};
        \node[state, right of=1](2){$s_2$};
        \node[state, right of=2](3){$s_3$};
        \path
        (0') edge[loopabove, ->] node[above=1mm, align=center]{} (0')
        (1) edge[loopabove, ->] node[above=1mm, align=center]{$\lnot \overline{\phi_2}$} (1)
    (2) edge[loopabove, ->] node[above=1mm, align=center]{$\lnot \overline{\phi_3}$} (2)
        (3) edge[loopbelow,below, looseness=20, out=-90, in=-60] node[align=center]{$\lnot \overline{\phi_4}$} (3)
        (3) edge[->, bend left=20] node[below=1mm, align=center]{$\overline{\phi_4}$} (0)
        (0) edge[->] node[above=1mm, align=center]{$p_\psi^1 \lor \dots \lor p_\psi^n$} (0')
        (0') edge[->] node[above=1mm, align=center]{$\overline{\phi_1}$} (1)
        (1) edge[->] node[above=1mm, align=center]{$\overline{\phi_2}$} (2)
        (2) edge[->] node[above=1mm, align=center]{$\overline{\phi_3}$} (3);
      \end{tikzpicture}
\end{center} 
(we take $n = 4$ as an example here)
on the finite timed word
\[
\rho[i, j] = (\sigma_i,0)(\sigma_{i+1},\tau_{i+1}-\tau_i)\cdots(\sigma_j, \tau_j - \tau_i) \;.
\]
Among all the runs on $\rho[i, j]$,
let
\[
r_i = s_0
\rightarrow
s_0'
\rightarrow
\cdots
\rightarrow
s_0 
\]
be a run with the most occurrences of $s_0'$, i.e.~$r_i$ is the one that reaches $s_1$ last among all the runs on $\rho[i, j]$. 
We argue that it is possible to define $\textsf{Allocate} \colon \textsf{AllTriggers}(\rho) \to \{ 1, \dots, n \}$ such that
the infinite timed word
$\rho' = (\sigma_1',\tau_1)(\sigma_2',\tau_2)\cdots$
has an accepting run on $\mathcal{C}_\psi^1 \times \dots \times \mathcal{C}_\psi^n$
where
\begin{itemize}
\item $p^{\textsf{Allocate}(i)}_\psi$ 
holds
 at each $i \in \textup{\textsf{AllTriggers}}(\rho)$, and
 \item None of $p_\psi^m$, $m \neq \textsf{Allocate}(i)$, $m \in \{1, \dots, n\}$ holds at each $i \in \textup{\textsf{AllTriggers}}(\rho)$, and
 \item the truth values of all the other atomic propositions are the same as $\rho$ at all the positions.
\end{itemize} 
  Let $j_1 < j_2 < \dots$ be all the positions of $\rho$ where for each $k > 1$, $j_k = \textsf{End}(i)$ for some $i \in \textup{\textsf{AllTriggers}}(\rho')$. We define $\textsf{Allocate}(i) = k \pmod{n}$ if $\textsf{End}(i) = j_k$. The following two facts together imply our claim that $\rho' \in \sem{\mathcal{C}_\psi^1 \times \dots \times \mathcal{C}_\psi^n}$.
  \begin{itemize}
      \item  If $\textsf{End}(i) = \textsf{End}(i')$ for some $i, i' \in \textup{\textsf{AllTriggers}}(\rho')$ with $i < i'$, then for any $i'' \in \textup{\textsf{AllTriggers}}(\rho')$ with $i < i'' < i'$ we must have $\textsf{End}(i'') = \textsf{End}(i) = \textsf{End}(i')$ (otherwise we have a contradiction with
      the fact that each of $i, i'', i'$ starts a shortest accepting run). 
      \item  For $j_k < j_{k+1} < \dots < j_{k + n}$ (where $k > 0$) and $i_k < i_{k+1} < \dots < i_{k + n}$ (where $i_{k + \ell}$ is the minimal position with $\textsf{End}(i_{k + \ell}) = j_{k + \ell}$ for $\ell \in \{0, \dots, n\}$), we must have $i_{k+n} \geq j_k$. Suppose to the contrary that $i_{k+n} < j_k$.  Then,  since
      \[
      r_{i_k}(1 + i_{k+n} - i_k), r_{i_{k+1}}(1 + i_{k+n} - i_{k+1}), \dots, r_{i_{k+n}}(1)
      \]
      are not all distinct (by the pigeonhole principle), we again have a contradiction with the fact that each 
      $i_k, i_{k+1}, \dots, i_{k + n}$ starts a shortest accepting run. \qedhere
  \end{itemize}
\end{proof}
Like the case of $\mitl{}$ modalities, we can 
obtain the tester $\ta{}$ for 
$\PnO_{< u}(\phi_1,\ldots, \phi_n)$ by simply reversing the arrows and swapping the clock constraints and resets, thanks to the simple structure of the tester $\ta{}$ and the fact that $[0, u)$ is unilateral.
Together with the known tester $\ta{s}$ from~\citep{DBLP:conf/cav/BrihayeGHM17}, %
we have the following lemmas.

\begin{lemma}
\label{lem:uconstruct}
For a subformula of the form $\phi_1 \until_I \phi_2$ with unilateral $I$, we can construct a one-clock tester $\ta{}$ with at most $3$ locations. 
\end{lemma}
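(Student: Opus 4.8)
The plan is to split on the two possible shapes of a unilateral interval, namely $I = \langle l, \infty)$ and $I = [0, u\rangle$, and to exhibit for each an explicit single-clock automaton with at most three locations, recovering the desired tester through the compositional scheme already set up (so that $\overline{\phi_1}$ and $\overline{\phi_2}$ are treated as propositional formulae guarding transitions, and $\sem{\mathcal{C}_\psi} = \sem{\globally(p_\psi \implies \phi_1 \until_I \phi_2)}$).

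For the lower-bound case $I = \langle l, \infty)$, I would reuse the automaton of \cref{fig:until_l_infty}: a clock $x$ is reset each time the trigger $p_\psi$ fires, location $s_0$ records that no obligation is currently pending and $s_1$ that one is, and the pending obligation is discharged by a $\overline{\phi_2}$-event once $x \geq l$ while $\overline{\phi_1}$ is maintained in between. The only adaptation needed is to move from the finite-word acceptance of that figure to the infinite-word (generalised B\"uchi) acceptance required of $\mathcal{C}_\psi$; this amounts to adding an acceptance set ensuring that a pending obligation is eventually discharged (the automaton cannot remain in $s_1$ forever), and it does not increase the location count beyond two. For the upper-bound case $I = [0, u\rangle$ I would build the symmetric automaton: reset $x$ when the first trigger arrives with no obligation outstanding, require $\overline{\phi_1}$ while waiting, and discharge on a $\overline{\phi_2}$-event provided $x < u$, using at most one extra location (or a dedicated self-loop) only to treat the strict-semantics corner where the witnessing position coincides with or immediately follows the trigger, and to handle a fresh trigger firing simultaneously with a discharge. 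This keeps the clock count at one and the location count at three.

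The crux of the correctness argument — and the step I expect to be the main obstacle — is the \emph{domination} property that justifies tracking a single obligation with a single clock. For the upper-bound case one must show that, among several pending obligations triggered at positions $i_1 < i_2 < \dots$, it suffices to satisfy the oldest $i_1$: if a $\overline{\phi_2}$-event at some $j$ discharges $i_1$ (so $j > i_1$, $\tau_j - \tau_{i_1} < u$, and $\overline{\phi_1}$ holds throughout $(i_1, j)$), then the same $j$ discharges every later $i_m$, since $(i_m, j) \subseteq (i_1, j)$, $\tau_j - \tau_{i_m} \leq \tau_j - \tau_{i_1} < u$, and $j > i_m$. Dually, in the lower-bound case the \emph{newest} obligation dominates, because once $x \geq l$ for the most recent reset, at least $l$ time units have elapsed since every earlier trigger as well, while $\overline{\phi_1}$ has held throughout.

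Once domination is established, the two (respectively three) locations track precisely the currently dominant obligation, and a routine induction on runs — matching triggers to transitions and using the monotonicity of timestamps — shows that the automaton accepts exactly $\sem{\globally(p_\psi \implies \phi_1 \until_I \phi_2)}$. The remaining care is purely bookkeeping: aligning the strict semantics (the $j > i$ requirement and the treatment of intermediate positions) with the guards, and checking the simultaneous-event boundary cases. None of these forces an additional clock or location, so the location count of at most three and the single clock follow directly from the explicit automata.
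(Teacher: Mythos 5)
Your overall strategy is the same as the paper's: split on the shape of the unilateral interval, give an explicit one-clock automaton per case (the paper defers to the \textsc{MightyL} constructions with minor changes for the strict semantics), and justify the single clock by the domination argument --- oldest obligation dominates for $[0,u\rangle$, newest for $\langle l,\infty)$, which you state correctly. The difference is that the paper treats $[0,\infty)$ as a third separate case and is explicit about where the third location comes from in each case, whereas you claim the lower-bound case fits in two locations.

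That claim is where your construction fails. You propose to handle the $\langle l,\infty)$ case by adding ``an acceptance set ensuring that a pending obligation is eventually discharged (the automaton cannot remain in $s_1$ forever)'' on top of the two-location automaton of \cref{fig:until_l_infty}. But a run may \emph{legitimately} remain in $s_1$ forever: whenever the discharging $\overline{\phi_2}$-event coincides with a fresh trigger $p_\psi$, the automaton takes the self-loop $p_\psi\land\overline{\phi_2}\land x\geq l$, $x:=0$ and stays in $s_1$; if this happens at every discharge, the run never returns to $s_0$ even though every obligation is met. With the paper's location-based generalised B\"uchi acceptance you cannot distinguish this from a run that never discharges, so $F=\{s_0\}$ wrongly rejects such words and $F=\{s_0,s_1\}$ wrongly accepts words with an undischarged obligation. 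This is precisely why the paper's infinite-word tester for $\phi_1\until_{[l,\infty)}\phi_2$ (\cref{fig:until_a_infty}) has a third \emph{accepting} location $s_2$ entered on each discharge, and why the paper's proof explicitly notes the extra location ``to capture the case where $\overline{\varphi_2}$ and $p_\psi$ hold simultaneously infinitely often.'' You spot exactly this boundary case in your upper-bound construction but omit it in the lower-bound one; since the lemma budgets three locations, the fix is to apply the same remedy there. (A smaller omission: the paper also invokes the non-Zeno assumption in the $\langle l,\infty)$ case to keep the count at three.)
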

\begin{proof}
We consider the following cases.
\begin{itemize}
\item $I = [0, \infty)$: 
The tester $\ta{}$ is similar to~\citep{DBLP:conf/cav/BrihayeGHM17} with minor changes for the strict semantics, and it has an additional third location
to capture the case where $\overline{\varphi_2}$ and $p_\psi$ hold simultaneously
infinitely often.
\item $I = [0, u \rangle$: The tester $\ta{}$ is again similar to~\citep{DBLP:conf/cav/BrihayeGHM17} with minor changes for the strict semantics and an additional third location.
\item $I = \langle l, \infty)$:
The tester $\ta{}$ is
again similar to~\citep{DBLP:conf/cav/BrihayeGHM17},
but we note that under the non-Zeno assumption (which can be enforced with an extra clock, e.g.,~\citep{DBLP:journals/fmsd/TripakisYB05}), we only need $3$ locations. \qedhere
\end{itemize}
\end{proof}

\begin{lemma}
\label{lem:rconstruct}
For a subformula of the form $\phi_1 \release_I \phi_2$ with unilateral $I$, we can construct a one-clock tester $\ta{}$ with at most $2$ locations. 
\end{lemma}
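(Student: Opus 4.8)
The plan is to build the tester $\ta{}$ directly, exploiting the fact that $\release_I$ is the exact dual of $\until_I$: unfolding $\phi_1 \release_I \phi_2 \equiv \neg((\neg\phi_1)\until_I(\neg\phi_2))$ under the strict pointwise semantics yields the \emph{universal} reading that, at a trigger position $i$, for every $j > i$ with $\tau_j - \tau_i \in I$ we must have $\rho,j\models\overline{\phi_2}$ unless $\overline{\phi_1}$ already held at some $i < k < j$. Since $\release_I$ is a future modality I would construct the future tester directly, so (unlike the past modalities) the reversal of \cref{lem:reversal} is not needed. The crucial structural observation is that this obligation is a \emph{safety} condition: it is discharged as soon as either $\overline{\phi_2}$ keeps holding, or a single witness for $\overline{\phi_1}$ appears, and such a witness at a position $k_0$ \emph{simultaneously releases every pending trigger} $i<k_0$ for all $j>k_0$. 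Consequently there is no eventuality to track and no need for a fairness/B\"uchi progress condition---this is precisely what lets us drop the third location that the existential $\until_I$ tester of \cref{lem:uconstruct} required, leaving an idle location $s_0$ and a single obligated location $s_1$, both accepting.

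Concretely I would treat the three unilateral shapes uniformly with one clock $x$. For $I=[0,\infty)$ no clock is needed: from $s_0$ stay idle while $\neg p_\psi$, move to $s_1$ on $p_\psi$, loop in $s_1$ on $\overline{\phi_2}\wedge\neg\overline{\phi_1}$, and return to $s_0$ on the release event $\overline{\phi_1}\wedge\overline{\phi_2}$ (the release position itself must carry $\overline{\phi_2}$, being the first point at which $\overline{\phi_1}$ holds). For $I=[0,u\rangle$ the obligation is bounded, so I reset $x$ on \emph{every} trigger (tracking the \emph{latest} one, whose deadline dominates all earlier overlapping obligations), guard the $\overline{\phi_2}$-requirement with $x<u$, and return to $s_0$ either on release or once $x\geq u$ (the window having expired with all demands met). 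For $I=\langle l,\infty)$ I instead reset $x$ only on the transition $s_0\to s_1$ (tracking the \emph{earliest} active trigger, whose lower-bound window opens first): while $x<l$ no $\overline{\phi_2}$ is demanded but an early $\overline{\phi_1}$ releases everything, and once $x\geq l$ we demand $\overline{\phi_2}$ until the release $\overline{\phi_1}\wedge\overline{\phi_2}$. In all three cases further occurrences of $p_\psi$ inside $s_1$ are absorbed into the existing obligation, so the construction stays within two locations.

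The main obstacle, and the part deserving a careful argument, is the soundness of this single-clock abstraction for $I=\langle l,\infty)$: there may be unboundedly many overlapping triggers, yet one clock anchored at the earliest active trigger must capture \emph{exactly} the set of positions at which $\overline{\phi_2}$ is required. Here I would spell out that, because the condition is universal and any $\overline{\phi_1}$ releases all pending triggers at once, the position where $\overline{\phi_2}$ first becomes obligatory is governed by the earliest trigger, while every later trigger imposes only a subset of those demands up to the common release point---so the single clock neither under- nor over-constrains. The remaining work is routine but fiddly: aligning the strict semantics with the ``transition reads the next letter'' convention, handling open versus closed endpoints in the guards (e.g.\ $x>l$ vs.\ $x\geq l$), the position-$1$ base case, and finally verifying both inclusions $\sem{\mathcal{C}_\psi}\subseteq\sem{\globally(p_\psi\implies \overline{\phi_1}\release_I\overline{\phi_2})}$ and its converse, with all locations marked accepting so that a trace in which $\overline{\phi_2}$ simply holds forever is (correctly) accepted in both the finite- and infinite-word settings.
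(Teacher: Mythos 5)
Your proposal is correct and takes essentially the same route as the paper: the paper's proof simply adapts the \textsc{MightyL} release testers to the strict semantics and observes that, since the acceptance condition is trivial (all locations accepting, there being no eventuality to discharge), two locations suffice---which is exactly the safety-condition argument you make explicit. Your case analysis of the three unilateral interval shapes and the single-clock anchoring (latest trigger for $[0,u\rangle$, earliest for $\langle l,\infty)$) matches the cited construction; just make sure the ``fiddly'' transitions you defer include a clock reset when a release event and a fresh trigger $p_\psi$ coincide, as in the paper's analogous testers.
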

\begin{proof}
Again similar to~\citep{DBLP:conf/cav/BrihayeGHM17} with minor changes for the strict semantics. As the acceptance conditions are trivial (all accepting), in all the cases we need only $2$ locations.
\end{proof}

\begin{lemma}
\label{lem:pastmitl}
For a subformula of the form $\phi_1 \since_I \phi_2$ or $\phi_1 \trigger_I \phi_2$ with unilateral $I$, we can construct a one-clock tester $\ta{}$ with at most $3$ locations. 
\end{lemma}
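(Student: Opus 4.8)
The plan is to exploit the reflection principle already illustrated by the example $\phi_1 \since_{\geq l} \phi_2$ in \cref{fig:since.l.infty}: a past modality is the mirror image of its future counterpart, with $\since_I$ mirroring $\until_I$ and $\trigger_I$ mirroring $\release_I$ (note that $\trigger_I$ and $\release_I$ are the past/future De~Morgan duals of $\since_I$ and $\until_I$ respectively). Concretely, by the discussion preceding \cref{lem:reversal}, a tester $\ta{}$ for $\globally(p_\psi \implies \phi_1 \since_I \phi_2)$ is obtained by reversing a finite-word tester $\ta{}$ for the future formula $\globally(p_\psi \implies \phi_1 \until_I \phi_2)$, and likewise for $\trigger_I$ from $\release_I$. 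Since \cref{lem:uconstruct} already supplies a one-clock future tester with at most $3$ locations for $\until_I$ with unilateral $I$, and \cref{lem:rconstruct} supplies a one-clock future tester with at most $2$ locations for $\release_I$, the whole task reduces to showing that reversing these specific testers stays within one clock and at most $3$ locations, rather than incurring the general blow-up of \cref{lem:reversal}.

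I would then proceed by a case analysis on the shape of the unilateral interval $I \in \{[0,\infty),\ [0,u\rangle,\ \langle l,\infty)\}$, reversing the corresponding future tester in each case by the explicit recipe of \cref{lem:reversal}: swap the initial and accepting locations, reverse every arrow, and exchange the roles of guards and resets. The crucial simplification is that a unilateral interval carries a \emph{single} bound---an upper bound for $[0,u\rangle$ or a lower bound for $\langle l,\infty)$---so the reversed automaton needs only one clock, and the two-dimensional bit-array $\mathbf{b}$ of \cref{lem:reversal} becomes unnecessary. Indeed, as already observed for \cref{fig:since.l.infty}, the linear topology of these testers fixes the order in which the clock is reset and then tested (``$x \geq l$ can only happen after $x := 0$''), so there is never any ambiguity about which constraint has been ``hit'', and the bookkeeping that inflates the general construction collapses. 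Reversing an automaton in this degenerate form preserves the number of locations, giving at most $3$ locations for $\since_I$ (the $\langle l,\infty)$ case already needs only $2$, while the $[0,\infty)$ and $[0,u\rangle$ cases inherit the third location from the future testers of \cref{lem:uconstruct}) and at most $2$, hence at most $3$, for $\trigger_I$.

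The remaining step is to restore an acceptance condition appropriate for \emph{infinite} words, since the reversal argument is phrased for finite-word testers. As noted in the footnote accompanying the reversal discussion, this is a trivial adjustment for the positive testers considered here: the tester encodes a safety-style obligation $\globally(p_\psi \implies \dots)$, so all locations reachable along a legitimate run can be made accepting (exactly as in \cref{fig:since.l.infty}, where both $s_0$ and $s_1$ are accepting), and correctness over infinite words follows from finite-word correctness applied to every prefix up to each position where $p_\psi$ holds.

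I expect the main obstacle to be precisely the claim that the reversal does \emph{not} blow up. One must verify, in each of the three unilateral cases for both $\since_I$ and $\trigger_I$, that the general construction of \cref{lem:reversal} genuinely degenerates to a single clock and preserves the location count---i.e.\ that the paired bounds clocks $x_\delta^{u}, x_\delta^{l}$ effectively coincide and that the bit-array carries no information. This amounts to drawing the reversed automata explicitly (as was done for $\phi_1 \since_{\geq l} \phi_2$) and performing a routine but slightly tedious diagram-chase that leans entirely on the simple linear structure of the underlying future testers.
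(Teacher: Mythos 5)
Your proposal is correct and follows essentially the same route as the paper, which proves this lemma by combining \cref{lem:uconstruct,lem:rconstruct} with \cref{lem:reversal} and then exhibiting the explicitly reversed one-clock testers (in~\cref{app:past.components}) to confirm that the general reversal construction degenerates in the unilateral case. The only slight imprecision is the claim that reversal preserves the location count exactly—the paper's explicit $\trigger_I$ testers use $3$ locations rather than the $2$ of their $\release_I$ counterparts—but this stays within the stated bound.
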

\begin{proof}
Follows from~\cref{lem:uconstruct,lem:rconstruct},  and~\cref{lem:reversal}  (see~\cref{app:past.components} for the tester $\ta{s}$ for $\phi_1 \since_I \phi_2$ and $\phi_1 \trigger_I \phi_2$).    
\end{proof}

\begin{lemma}
\label{lem:pnconstruct}
For a subformula of the form $\PnF_J(\phi_1,\ldots, \phi_n)$ with $J = [0, u \rangle$, we can construct $n$ one-clock component $\ta{s}$ (of the tester $\ta{}$), each containing at most $n + 2$ locations.
\end{lemma}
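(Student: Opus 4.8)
The construction is already in hand: the component template is the one described before \cref{lem:pnueli.correctness} and drawn in \cref{fig:pnueli}, and the fact that the synchronous product of the $n$ components acts as a tester for $\PnF_{[0,u\rangle}(\overline{\phi_1},\dots,\overline{\phi_n})$ is exactly \cref{lem:pnueli.correctness}. The plan is therefore to take this construction as given and to verify the two resource bounds of the statement by inspecting a single component $\mathcal{C}_\psi^i$; since the $n$ components differ only in the renaming of their trigger $p_\psi^i$, establishing the bounds for one of them suffices.

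For the clock bound, I would show that a single clock $x_i$ is enough precisely because $J = [0,u\rangle$ is unilateral. The clock $x_i$ is reset when the component reads the \emph{first} trigger of a batch (on the edge $s_0 \to s_0'$); while the component remains in $s_0'$ it merges further triggers \emph{without} resetting $x_i$, and the guard $x_i \in J$ is tested when $\overline{\phi_n}$ is read on the edge that discharges the obligation and returns control to $s_0$. The point that makes one clock sufficient is that, among the merged triggers, the \emph{earliest} one carries the tightest deadline; hence resetting $x_i$ only at the first trigger and checking $x_i \in J$ at the discharge certifies the deadline for the whole merged batch simultaneously. This is the same mechanism already exploited for $\until_{[0,u\rangle}$ in \cref{lem:uconstruct}, and it matches the worked example, where merging the obligations at $\tau_1,\tau_3$ reduces to validating $\tau_7 < \tau_1 + 10$.

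For the location bound, I would read the locations off \cref{fig:pnueli}: an idle, accepting location $s_0$; a collecting location $s_0'$, in which triggers are merged before $\overline{\phi_1}$ is read; and locations $s_1,\dots,s_{n-1}$ that record progress through the remaining conjuncts $\overline{\phi_2},\dots,\overline{\phi_n}$, after which the obligation is discharged back to $s_0$. The $n$ locations $s_0',s_1,\dots,s_{n-1}$ correspond exactly to the $n$ obligation types (the $m$-th expecting $\phi_m,\dots,\phi_n$), so together with $s_0$ this already yields $n+1$ locations; at most one further location is needed to realise the generalised B\"uchi acceptance over infinite words---the analogue of the extra location of \cref{lem:uconstruct}, handling the degenerate case where a trigger and its discharge fall on the same timestamp infinitely often---for a total of at most $n+2$.

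I would omit the routine transcription of every edge with its guard and reset, as these are fixed by \cref{fig:pnueli}. The only point requiring genuine care is the reset discipline under merging: one must check that never resetting $x_i$ on a merged trigger, together with the single guard $x_i \in J$ at discharge, is both sound and complete, and that the non-Zeno assumption is respected so that the acceptance location does not force an additional clock. Soundness and completeness of the resulting product are, however, already subsumed by \cref{lem:pnueli.correctness}, so the remaining work is purely the bookkeeping above.
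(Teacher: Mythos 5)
Your proposal is correct and follows essentially the same route as the paper, whose proof of this lemma is simply that it follows directly from \cref{lem:pnueli.correctness} together with the construction in \cref{fig:pnueli}; your additional bookkeeping (one clock per component because the earliest merged trigger carries the tightest deadline for a unilateral $[0,u\rangle$, and $s_0, s_0', s_1, \dots, s_{n-1}$ plus at most one extra location for infinite-word acceptance giving the $n+2$ bound) is consistent with that figure and with the analogous remark in \cref{lem:uconstruct}.
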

\begin{proof}
Follows directly from~\cref{lem:pnueli.correctness}.   
\end{proof}

\begin{lemma}
\label{lem:pndualconstruct}
For a subformula of the form $\dualPnF_J(\phi_1,\ldots, \phi_n)$ with $J = [0, u \rangle$, we can construct $n$ one-clock component $\ta{s}$ (of the tester $\ta{}$), each containing at most $n + 1$ locations.
\end{lemma}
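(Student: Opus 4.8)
The plan is to mirror the construction and the $\textsf{End}/\textsf{Allocate}$ argument of \cref{lem:pnueli.correctness}, but adapted to the \emph{universal} reading of $\dualPnF_J$. First I would record the restatement that makes the obligation a safety condition. Since $\dualPnF_J(\phi_1,\ldots,\phi_n)\equiv\neg\PnF_J(\neg\phi_1,\ldots,\neg\phi_n)$, the tester must enforce that at every trigger position $j$ there is \emph{no} increasing sequence $i_1<\cdots<i_n$ with $\tau_{i_m}-\tau_j\in[0,u)$ realising $\neg\overline{\phi_1},\ldots,\neg\overline{\phi_n}$ in order; equivalently, the forbidden pattern $\neg\overline{\phi_1},\ldots,\neg\overline{\phi_n}$ never completes inside the window. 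As before I would split $p_\psi$ into $n$ triggers $p_\psi^1,\ldots,p_\psi^n$ and take $n$ identical one-clock components, the product of which is language-equivalent (up to the split triggers) to a monolithic tester for $\globally(p_\psi^1\lor\cdots\lor p_\psi^n\implies\dualPnF_{<u}(\overline{\phi_1},\ldots,\overline{\phi_n}))$.

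Next I would describe the single component $\ta{}$, which is the dual of the one used in \cref{lem:pnueli.correctness}. Each component has an idle location together with progress locations $0,1,\ldots,n-1$ tracking how far the \emph{greedy} (earliest) matching of the forbidden pattern $\neg\overline{\phi_1},\ldots,\neg\overline{\phi_n}$ has advanced for the obligation it is currently responsible for: the component moves from progress $m-1$ to $m$ on reading $\neg\overline{\phi_m}$ and stays otherwise, the clock being started when the obligation is captured. The essential difference from $\PnF$ is the treatment of completion: reaching progress $n$ while the clock is still below $u$ signals the forbidden pattern and must be disallowed, whereas once the clock leaves $[0,u)$ the obligation is discharged and the component returns to idle. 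Because this is now a pure safety condition, every surviving location can be made accepting, so the extra bookkeeping location that the existential $\PnF$ tester needs (to certify that each captured obligation is \emph{genuinely} discharged) is no longer required; this is exactly what lowers the count from $n+2$ (cf.\ \cref{lem:pnconstruct}) to $n+1$.

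For correctness I would reuse the distribution machinery of \cref{lem:pnueli.correctness} almost verbatim, with one dualisation. One inclusion is immediate from the transition relation, since no component ever completes the forbidden pattern within a window. For the converse I would argue, by the same pigeonhole bound, that at most $n$ obligations can be simultaneously active and that they can be allocated to the $n$ components so that no clock has to carry two incompatible obligations at once. Here the merging is the \emph{dual} of the $\PnF$ case: among active windows that have reached the same progress level (and hence share the same greedy future), the binding one is the window with the \emph{loosest} (latest) deadline rather than the tightest, because preventing completion for it automatically prevents completion for all earlier windows at that level.

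The hard part will be the switch from existential to universal quantification over witnessing sequences: for $\PnF$ nondeterminism simply guesses one witness, whereas for $\dualPnF$ a single-clock component must certify that \emph{no} witness exists, which is generally awkward for nondeterministic automata. The observation that makes this manageable, and which I would need to justify carefully, is that existence of even one forbidden subsequence is detected exactly by greedy earliest matching of $\neg\overline{\phi_1},\ldots,\neg\overline{\phi_n}$; its non-existence is therefore a genuine safety property enforceable by an all-accepting component, and the completeness of greedy matching (it completes the pattern inside the window iff some sequence does) is precisely what ensures no forbidden witness is overlooked.
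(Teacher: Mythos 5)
Your construction and argument are correct and match what the paper intends: the paper's own ``proof'' is merely a citation to \citep{ho2025metric} stating that the same component $\ta{s}$ are reused, and the component depicted in the appendix for the past dual $\dualPnO_{<u}$ is exactly the reverse of the automaton you describe (an idle location plus $n$ greedy-progress locations, all accepting, advancing on $\neg\overline{\phi_m}$ and discharging once the clock leaves $[0,u)$). The three points you single out --- completeness of greedy earliest matching of the forbidden pattern, the safety character of $\dualPnF$ that removes the extra bookkeeping location needed in \cref{lem:pnconstruct}, and latest-deadline dominance when merging obligations at the same progress level --- are precisely the ingredients required, so your proof is a valid self-contained substitute for the citation.
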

\begin{proof}
Follows from~\citep{ho2025metric} (we use the same component $\ta{s}$).    
\end{proof}

\begin{lemma}
\label{lem:pastpnueli}
For a subformula of the form $\PnO_J(\phi_1,\ldots, \phi_n)$ or $\dualPnO_J(\phi_1,\ldots, \phi_n)$ with $J = [0, u \rangle$, we can construct $n$ one-clock component $\ta{s}$ (of the tester $\ta{}$), each containing at most $n + 1$ locations.
\end{lemma}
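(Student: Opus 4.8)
The plan is to obtain both testers by reflection, exactly as was done for $\phi_1 \since_I \phi_2$ in \cref{lem:pastmitl}: take the corresponding \emph{future} tester, reverse its arrows, and swap the roles of clock resets and guards. For $\PnO_J$ I would start from the component $\ta{s}$ for $\PnF_J$ supplied by \cref{lem:pnconstruct}, and for $\dualPnO_J$ from the component $\ta{s}$ for $\dualPnF_J$ supplied by \cref{lem:pndualconstruct}. In both cases \cref{lem:reversal} guarantees that reversing an automaton yields one that recognises the reverse finite-word language; and since a past modality asserted at position $i$ holds exactly when its finite-word future counterpart holds on the reversed prefix $(\sigma_1', \tau_1)\cdots(\sigma_i', \tau_i)$ (the reflection principle used earlier for past $\mitl{}$ modalities), the reversed component automata are exactly the desired tester components.

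The first step is to pass to the finite-word future counterparts. As noted in the earlier footnote, the future testers can be converted to finite-word acceptors by trivial changes to their acceptance conditions. Crucially, the extra location counted by \cref{lem:pnconstruct} is needed only to witness the \emph{infinite-word} generalised B\"uchi acceptance of $\PnF_J$; in the finite-word setting it can be dropped, so the finite-word future tester for $\PnF_J$ has only $n+1$ locations, as does the one for $\dualPnF_J$. Each such component $\ta{}$ uses a single clock $x$ subject to the unilateral guard $x < u$.

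The key step is to argue that the reversal does not trigger the generic location/clock blow-up of \cref{lem:reversal}. Because $J = [0, u)$ is unilateral, the only clock constraint in each component is the upper bound $x < u$, and $x$ is reset exactly once per obligation; in the reversed automaton the guard $x < u$ can therefore become relevant only after the matching reset, so---exactly as in the $\phi_1 \since_{\geq l} \phi_2$ tester of \cref{fig:since.l.infty}---no bit-array indexed by pairs $(x, \delta)$ is needed and a single clock still suffices. I would therefore apply the specialised form of the reversal in which one reset/guard pair on a single clock is reflected simply by interchanging the reset and the guard across the mirrored transition, leaving the discrete structure untouched. This preserves both the one-clock property and the location count, yielding $n+1$ locations per component for the $\PnO_J$ and $\dualPnO_J$ testers.

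The main obstacle I anticipate is not the reflection of the automata themselves but re-establishing \emph{correctness} after reversal---that the reflected component $\ta{s}$ still track at most one obligation of each of the $n$ types and that their synchronous product captures exactly $\globally(p_\psi^1 \lor \dots \lor p_\psi^n \implies \PnO_{<u}(\overline{\phi_1}, \dots, \overline{\phi_n}))$ (and likewise for $\dualPnO_J$). I would handle this by re-running the argument of \cref{lem:pnueli.correctness} on the time-reversed word, noting that the constructions of $\textsf{End}$, $\textsf{Allocate}$ and the accompanying pigeonhole argument are all symmetric under reversal when the interval is the single-ended $[0, u)$; hence both inclusions transfer essentially verbatim.
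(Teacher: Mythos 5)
Your proposal matches the paper's proof: the paper likewise obtains the $\PnO_J$ and $\dualPnO_J$ component $\ta{s}$ by reversing the arrows of the future components from \cref{lem:pnconstruct,lem:pndualconstruct} and swapping clock constraints with resets, invoking \cref{lem:reversal} while noting that the simple structure and the unilateral interval $[0,u\rangle$ make the generic bit-array/extra-clock machinery unnecessary. Your additional observations—that the $(n+2)$-nd location is only needed for infinite-word acceptance and that correctness transfers by reflecting the argument of \cref{lem:pnueli.correctness}—are consistent with, and merely more explicit than, what the paper states.
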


\begin{proof}
Follows from~\cref{lem:pnconstruct,lem:pndualconstruct}, and~\cref{lem:reversal}  (see~\cref{app:past.components} for the component $\ta{s}$ for $\dualPnO_J(\phi_1,\ldots, \phi_n)$).    
\end{proof}

Based on the lemmas above and the fact that the product of tester $\ta{s}$, component $\ta{s}$ of tester $\ta{s}$, and a system $\mathcal{M}$ (modelled as a $\ta{}$) can be checked for emptiness on-the-fly in $\pspace{}$~\citep{AD94}, we can state the following theorem.
 
 \begin{theorem}\label{thm:pspace}
The satisfiability and model-checking problems for unilateral $\mitlpp{}$ 
are $\pspace{}$-complete.
 \end{theorem}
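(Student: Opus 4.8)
The plan is to establish the two directions of $\pspace$-completeness separately: membership in $\pspace$ via the compositional construction developed above, and $\pspace$-hardness by a trivial embedding of $\ltl$.

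For \textbf{membership}, I would start from the translation of \cref{sec:past.and.pnueli}: given a unilateral $\mitlpp$ formula $\varphi$ in negation normal form, introduce a trigger for each temporal subformula, form the equi-satisfiable $\varphi'$ over $\AP \cup \AP_\Psi$, and assemble the tester product $\mathcal{C}_{\varphi'} = \mathcal{C}_{\overline{\varphi}} \times (\times_{\psi} \mathcal{C}_\psi)$. By \cref{lem:uconstruct,lem:rconstruct,lem:pastmitl,lem:pnconstruct,lem:pndualconstruct,lem:pastpnueli}, each temporal subformula contributes either a single one-clock tester with at most three locations, or---for an arity-$n$ Pnueli modality---$n$ one-clock component $\ta{s}$ with at most $n+2$ locations each. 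Since $\varphi$ has at most $|\varphi|$ temporal subformulae and each contributes $O(|\varphi|)$ clocks and $O(|\varphi|)$ locations, the product has polynomially many clocks in $|\varphi|$, and every clock constraint is of the form $x \bowtie c$ with $c$ bounded by the largest interval endpoint of $\varphi$. First I would record that $\varphi$ is satisfiable iff $\mathcal{C}_{\varphi'}$ has nonempty language, which is precisely the equi-satisfiability guaranteed by the trigger construction together with the correctness lemmas (e.g.\ \cref{lem:pnueli.correctness} for the Pnueli components).

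The decisive step is then to invoke the cited fact that emptiness of such a product of $\ta{s}$ (and, for model checking, of the further product with a system $\mathcal{M}$) can be decided \emph{on-the-fly} in $\pspace$~\citep{AD94}. The point I would emphasise is that the exponential blow-up in the number of product locations is never materialised: a product location is a tuple of component locations, describable in polynomially many bits, and a state of the associated region graph additionally stores only the integer parts of the polynomially many clocks---each requiring $O(\log c)$ bits, where $c$ is bounded by the largest constant of $\varphi$ encoded in binary---together with the ordering of their fractional parts. Thus every configuration has polynomial size, its successors are computable locally from the individual component automata, and the (repeated) reachability search witnessing non-emptiness, with a counter cycling through the generalised B\"uchi acceptance family, runs in nondeterministic polynomial space; by Savitch's theorem this is $\pspace$. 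For model checking I would reduce $\mathcal{M} \models \varphi$ to emptiness of $\sem{\mathcal{M}} \cap \sem{\neg\varphi}$, observing that $\neg\varphi$ pushed into negation normal form is again unilateral $\mitlpp$, since the dual operators $\release_I$, $\trigger_I$, $\dualPnF_J$, $\dualPnO_J$ are exactly what negation produces and negation preserves unilaterality of the intervals; the same construction then applies to $\neg\varphi$.

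For \textbf{hardness} I would simply note that the future fragment of $\ltl$ is a syntactic fragment of unilateral $\mitlpp$---obtained by restricting all intervals to the unilateral interval $[0,\infty)$ and forbidding past and Pnueli modalities---so the classical $\pspace$-hardness of $\ltl$ satisfiability and model checking transfers directly. The step I expect to be the main obstacle is the membership argument, specifically making rigorous that neither the exponential number of product locations nor the (potentially exponential) number of regions induced by binary-encoded interval constants escapes $\pspace$; the resolution is uniformly the polynomial-size-per-configuration plus on-the-fly-successor observation sketched above, while the remaining bookkeeping for generalised B\"uchi acceptance is routine.
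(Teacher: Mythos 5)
Your proposal is correct and follows essentially the same route as the paper, which justifies the theorem in a single sentence by combining the tester/component lemmas with on-the-fly emptiness checking of the product in $\pspace{}$~\citep{AD94}; your additional details (polynomial-size configurations in the region graph, closure of the unilateral fragment under negation for model checking, and $\pspace{}$-hardness inherited from $\ltl{}$) are exactly the standard elaborations the paper leaves implicit.
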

In the next section, we will see that this fragment is already expressively complete for full  $\mitlpp{}$.

  \begin{figure}[!htbp]
    \centering
      \begin{tikzpicture}[->, node distance=4cm, transform shape, scale=0.7]
        \node[initial left,state, accepting](0){$s_0$};
        \node[state, right of=0](0'){$s_0'$};
    \node[state, right of=0'](1){$s_1$};
        \node[state, right of=1](2){$s_2$};
        \node[state, right of=2](3){$s_3$};
        \path
        (0) edge[loopbelow, below, looseness=20, out=-120, in=-90] node[align=center]{$\neg p_\psi^1$} (0)
        (0') edge[loopabove, ->] node[above=1mm, align=center]{} (0')
        (1) edge[loopabove, ->] node[above=1mm, align=center]{$\neg p_\psi^1 \land \lnot \overline{\phi_2}$} (1)
    (2) edge[loopabove, ->] node[above=1mm, align=center]{$\neg p_\psi^1 \land \lnot \overline{\phi_3}$} (2)
        (3) edge[loopbelow,below, looseness=20, out=-90, in=-60] node[align=center]{$\neg p_\psi^1 \land \lnot \overline{\phi_4}$} (3)
        (3) edge[->, bend left=20] node[below=1mm, align=center]{$p_\psi^1 \wedge \overline{\phi_4} \land x_1 < u$, $x_1 := 0$} (0')
        (0) edge[->] node[above=1mm, align=center]{$p_\psi^1$, $x_1 := 0$} (0')
        (0') edge[->] node[above=1mm, align=center]{$\neg p_\psi^1 \wedge \overline{\phi_1}$} (1)
        (1) edge[->] node[above=1mm, align=center]{$\neg p_\psi^1 \wedge \overline{\phi_2}$} (2)
        (2) edge[->] node[above=1mm, align=center]{$\neg p_\psi^1 \wedge \overline{\phi_3}$} (3)
        (3) edge[->, bend right=40] node[above=1mm, align=center]{$\neg p_\psi^1 \wedge \overline{\phi_4} \land x_1 < u$} (0);
      \end{tikzpicture}
    \caption{A component $\ta{}$ of the finite-word tester $\ta{}$ for $\PnF_{< u}(\phi_1, \phi_2, \phi_3, \phi_4)$.} %
    \label{fig:pnueli}
  \end{figure}
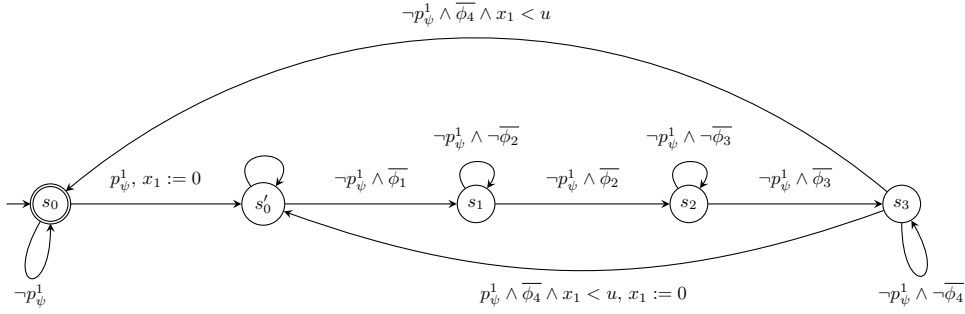
  
  \begin{figure}[!htbp]
    \centering
      \begin{tikzpicture}[->, node distance=4cm, transform shape, scale=0.7]
        \node[initial left,state, accepting](0){$s_0$};
        \node[state, right of=0, accepting](0'){$s_0'$};
    \node[state, right of=0', accepting](1){$s_1$};
        \node[state, right of=1, accepting](2){$s_2$};
        \node[state, right of=2, accepting](3){$s_3$};
        \path
        (0) edge[loopbelow, below, looseness=20, out=-120, in=-90] node[align=center]{$\neg p_\psi^1$} (0)
        (0') edge[loopabove, ->] node[above=1mm, align=center]{$x_1 < u$} (0')
        (1) edge[loopabove, ->] node[above=1mm, align=center]{$\neg p_\psi^1 \land \lnot \overline{\phi_2}$, $x_1 < u$} (1)
    (2) edge[loopabove, ->] node[above=1mm, align=center]{$\neg p_\psi^1 \land \lnot \overline{\phi_3}$, $x_1 < u$} (2)
        (3) edge[loopbelow,below, looseness=20, out=-90, in=-60] node[align=center]{$\neg p_\psi^1 \land \lnot \overline{\phi_4}$, $x_1 < u$} (3)
        (0') edge[->] node[above=1mm, align=center]{$p_\psi^1$, $x_1 < u$} (0)
        (1) edge[->] node[above=1mm, align=center]{$\neg p_\psi^1 \wedge \overline{\phi_1}$, $x_1 < u$} (0')
        (2) edge[->] node[above=1mm, align=center]{$\neg p_\psi^1 \wedge \overline{\phi_2}$, $x_1 < u$} (1)
        (3) edge[->] node[above=1mm, align=center]{$\neg p_\psi^1 \wedge \overline{\phi_3}$, $x_1 < u$} (2)
        (0') edge[->, bend right=20] node[below=1mm, align=center]{$p_\psi^1 \wedge \overline{\phi_4}$, $x_1 < u$, $x_1 := 0$} (3)
        (0) edge[->, bend left=40] node[above=1mm, align=center]{$\neg p_\psi^1 \wedge \overline{\phi_4}$, $x_1 := 0$} (3);
      \end{tikzpicture}
    \caption{A component $\ta{}$ of the tester $\ta{}$ for $\PnO_{< u}(\phi_1,\ldots, \phi_n)$.} %
    \label{fig:pnueli.past}
    
  \end{figure}

\section{General Time Intervals and Sequentialisation}
\label{sec:general}

In this section, we focus on constructing tester $\ta{s}$ for $\mitl{}$ formulae with general time intervals, e.g.,~$\phi_1 \until_{[9, 10]} \phi_2$. The key idea is similar to how we handled Pnueli modalities in the last section: each obligation can be handled by an identical component $\ta{}$, and there is a bound on the number of obligations that can simultaneously exist. It follows that we can formulate tester $\ta{s}$ as products of component $\ta{s}$, which is conceptually much simpler and easier to implement; this is in stark contrast with~\citep{DBLP:conf/cav/BrihayeGHM17, akshay2024mitl} where the tester $\ta{}$ is constructed as a rather sophisticated monolithic $\ta{}$. To avoid unnecessary interleavings of obligations stored on component $\ta{s}$, we also introduce a novel sequentialisation technique which leads to an exponential improvement in the size (number of reachable locations) of the constructed tester $\ta{s}$.

\subsection{Expressing $\mitl{}$ modalities with $\langle l, u \rangle$ in Pnueli modalities}
It is known that, when interpreted over timed words (i.e.~the pointwise semantics is adopted),  $\mitl{}$ modalities with general time intervals $\langle l, u \rangle$ cannot be
expressed in unilateral $\mitl{}$~\citep{raskin1999logic, raskin-thesis}.
This is rather unfortunate from a practical point of view, since if $\mitl{}$ is interpreted over \emph{signals} (i.e.~in the \emph{continuous} or \emph{state-based} semantics) one can use
equalities like $\eventually_{(l + m, u + m)} \varphi \iff 
\eventually_{(0, m)} \globally_{(0, m)} \eventually_{(l, u)} \varphi$ (where $0 < m \leq u-l$)
to rewrite any $\mitl{}$ formula into one that uses only modalities with unilateral intervals (cf.~e.g.,~\citep{icalp-raskin, raskin-thesis}), for which
the tester $\ta{s}$ are fairly simple (as discussed in the last section).\footnote{This approach circumvents the construction of monolithic tester $\ta{s}$ and achieves compositionality, but from a computational complexity point of view, there is no obvious benefit in rewriting a $\mitl{}$ formula into a unilateral one with such rules: it still needs roughly the same number of clocks and introduces an exponential blow-up in the number of locations, like~\citep{DBLP:conf/cav/BrihayeGHM17}.}

We now present equivalence rules (similar to~\citep{H19, concur23} but simpler and \emph{non-inductive}) that 
express 
$\until_{\langle l, u \rangle}$ and 
$\release_{\langle l, u \rangle}$
directly in terms of Pnueli modalities. %

Beyond their practical roles in our implementation, which are detailed later in this section, they also imply that the unilateral fragment of $\mitlpp{}$ (\cref{thm:pspace}) is already expressively complete for full $\mitlpp{}$, without going through $\qtwomlo{}$ as in~\citep{ho2025metric}.
Recall from~\cref{sec:prelim}  that
$\overline{\phi}$ is obtained from $\phi$ by replacing its temporal subformulae by the corresponding triggers. 

\begin{lemma}\label{lem:untilinpnueli}
Let $\phi_1 \until_{(k, k+1)} \phi_2$ be a subformula of the $\mitlpp{}$ formula $\varphi$, and $\Psi$ be the set of all temporal subformulae of $\varphi$. For all timed words $\rho$ over $\Sigma_{\AP \cup \AP_{\Psi}}$, positions $i \geq 1$,  and $\phi^{\geq 1} \equiv \overline{\phi_2} \land (\lnot \overline{\phi_2} \until_{\geq 1} \overline{\phi_2})$, we have
\medskip
\smallskip

$$
\rho, i \models \overline{\phi_1} \until_{(k, k+1)} \overline{\phi_2}  
\Leftrightarrow
 \rho, i \models \overline{\phi_1} \until_{> k} \overline{\phi_2}
 \land
\bigvee_{\ell \in \{0, \dots, k + 1\}} \left (\begin{array}{c}\PnF_{< k + 1}(\underbrace{\phi^{\geq 1}, \dots, \phi^{\geq 1}}_{\ell}, \overline{\phi_2})
\bigwedge
\lnot \PnF_{\leq k}(\underbrace{\phi^{\geq 1}, \dots}_{\ell}, \phi^{\geq 1})\end{array} \right ) \;.$$
\medskip
\smallskip

\end{lemma}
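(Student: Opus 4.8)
The plan is to prove the two directions of the biconditional by reading each Pnueli modality as a \emph{counting} statement about well-spaced $\overline{\phi_2}$-events, and to reduce everything to the existence of a single $\overline{\phi_2}$-event at distance strictly between $k$ and $k+1$ from position $i$. First I would record the meaning of the auxiliary formula $\phi^{\geq 1} \equiv \overline{\phi_2} \land (\lnot \overline{\phi_2} \until_{\geq 1} \overline{\phi_2})$: it holds at a position $m$ exactly when $\overline{\phi_2}$ holds at $m$ and the \emph{next} $\overline{\phi_2}$-event occurs at least $1$ time unit later. The crucial consequence is that any two positions satisfying $\phi^{\geq 1}$ are at least $1$ apart in time (the $\geq 1$ gap that follows one such position already separates it from the next $\overline{\phi_2}$-event, hence from the next $\phi^{\geq 1}$-event). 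Consequently, within distance $\leq k$ of $i$ there are at most $k+1$ positions satisfying $\phi^{\geq 1}$, which is exactly why the disjunction ranges over $\ell \in \{0, \dots, k+1\}$.

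For the forward direction, a witness $j$ of $\overline{\phi_1} \until_{(k,k+1)} \overline{\phi_2}$ has $\tau_j - \tau_i \in (k, k+1)$, so it immediately satisfies the first conjunct $\overline{\phi_1} \until_{> k} \overline{\phi_2}$. For the disjunction I would take $\ell$ to be the exact number of $\phi^{\geq 1}$-events lying after $i$ within distance $\leq k$: these $\ell$ events (all at distance $\leq k < k+1$), followed by the $\overline{\phi_2}$-event at $j$ (at distance $> k$, hence strictly later than all of them), witness $\PnF_{<k+1}(\phi^{\geq 1}, \dots, \phi^{\geq 1}, \overline{\phi_2})$, while $\lnot \PnF_{\leq k}(\phi^{\geq 1}, \dots, \phi^{\geq 1})$ with $\ell+1$ arguments holds by maximality of $\ell$.

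For the converse I would first use the two conjuncts to establish the key claim that \emph{some} $\overline{\phi_2}$-event lies at distance in $(k, k+1)$, and then transfer $\overline{\phi_1}$-continuity to it. Assume no $\overline{\phi_2}$-event has distance in $(k,k+1)$; then every $\overline{\phi_2}$-event sits at distance either $\leq k$ or $\geq k+1$. The positive disjunct supplies $\ell$ $\phi^{\geq 1}$-events and a trailing $\overline{\phi_2}$-event, all at distance $<k+1$ and hence (by assumption) $\leq k$. Starting from that trailing $\overline{\phi_2}$-event I would follow the chain of consecutive $\overline{\phi_2}$-events, each less than $1$ apart from the previous so long as none is a $\phi^{\geq 1}$-event; using the absence of events in $(k,k+1)$ this chain provably stays at distance $\leq k$. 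The chain must terminate in a $\phi^{\geq 1}$-event at distance $\leq k$, producing $\ell+1$ such events and contradicting the negative disjunct --- unless it runs off to infinity (excluded by non-Zenoness, which would force some chain event into $(k,k+1)$) or reaches the last $\overline{\phi_2}$-event of the word at distance $\leq k$ (excluded by the conjunct $\overline{\phi_1} \until_{>k} \overline{\phi_2}$, which guarantees a $\overline{\phi_2}$-event at distance $>k$). Having obtained a $\overline{\phi_2}$-event $j''$ at distance in $(k,k+1)$, I would finish with a short case split against the until-witness $j'$: if $j'' \leq j'$ then $\overline{\phi_1}$ holds throughout $(i, j'')$ and $j''$ is the required witness, whereas if $j'' > j'$ then $\tau_{j'} - \tau_i \in (k,k+1)$ as well, so $j'$ itself is the witness.

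The main obstacle is precisely this combinatorial argument in the converse: correctly tracking the ``chain'' of $<1$-spaced $\overline{\phi_2}$-events and ruling out the two degenerate ways it can fail to reach a $\phi^{\geq 1}$-event at distance $\leq k$ --- the infinite (Zeno-like) cluster and the chain terminating at the final $\overline{\phi_2}$-event of the word. The non-Zeno hypothesis and the seemingly redundant conjunct $\overline{\phi_1} \until_{>k} \overline{\phi_2}$ are exactly what close these two gaps. Beyond this, the remaining care is bookkeeping: handling positions that share the timestamp $\tau_i$ (distance $0$, which is what can push $\ell$ up to $k+1$) and keeping the closed endpoint of $\leq k$ distinct from the open endpoint of $<k+1$ throughout.
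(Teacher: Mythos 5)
Your proof is correct and follows essentially the same route as the paper's: the forward direction is identical (choose $\ell$ to be the exact count of $\phi^{\geq 1}$-events in $\tau_i+[0,k]$), and your converse, though phrased as a contradiction via a forward chain of $<1$-spaced $\overline{\phi_2}$-events, rests on the same facts the paper uses when it pivots on the last $\overline{\phi_2}$-position in $\tau_i+[0,k]$ and on whether that position satisfies $\phi^{\geq 1}$. You are in fact slightly more explicit than the paper in the final case split against the witness of $\overline{\phi_1}\until_{>k}\overline{\phi_2}$, which is needed to recover $\overline{\phi_1}$-continuity up to the event found in $(k,k+1)$.
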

\begin{proof}
We prove the two directions of the implication.
\begin{itemize}
\item[($\Leftarrow$):]
Since $\rho, i \models \overline{\phi_1} \until_{> k} \overline{\phi_2}$, all the positions in $\tau_i + [0, k]$ (but not necessarily $i$) must satisfy $\overline{\phi_1}$. Assume the disjunct holds for some $\ell \in \{0, \dots, k + 1\}$.  Consider the following cases, where the first case covers the case where the disjunct holds by $\ell = 0$. 
\begin{itemize}
    \item There are at least $\ell$ occurrences of $\phi^{\geq 1}$ in $\tau_i + [0, k]$: %
    Let $j > i$ be the last position in $\tau_i + [0, k]$ where $\overline{\phi_2}$ holds.
    There are three possibilities:
    \begin{itemize}
    \item There is no such $j$: This is only possible when $\ell = 0$. Since we have $\rho, i \models \PnF_{< k + 1}(\underbrace{\phi^{\geq 1}, \dots, \phi^{\geq 1}}_{\ell}, \overline{\phi_2})$, it is immediate that
    $\rho, i \models \overline{\phi_1} \until_{(k, k+1)} \overline{\phi_2}$.
    \item $\rho, j \models \phi^{\geq 1}$: Obviously $j$ is also the last position in $\tau_i + [0, k]$ where $\phi^{\geq 1}$ holds, and since
    $\rho, i \models \lnot \PnF_{\leq k}(\underbrace{\phi^{\geq 1}, \dots}_{\ell}, \phi^{\geq 1})$, it must be the case that $\ell > 0$. Since $\rho, i \models \PnF_{< k + 1}(\underbrace{\phi^{\geq 1}, \dots, \phi^{\geq 1}}_{\ell}, \overline{\phi_2})$, the witnessing position $j'$ for  
    $\overline{\phi_2}$ must be greater than $j$, which implies
    that $\rho, i \models \overline{\phi_1} \until_{(k, k+1)} \overline{\phi_2}$. 
    \item $\rho, j \not \models \phi^{\geq 1}$:
    Since $\rho, i \models \overline{\phi_1} \until_{> k} \overline{\phi_2}$, there exists a minimal $j' > j$ such that 
    $\rho, j' \models \overline{\phi_2}$, and it is clear that $j'$
    must be in $\tau_i + (k, k + 1)$ (otherwise we have
    $\rho, j \models \phi^{\geq 1}$, a contradiction).
    \end{itemize}
    \item There are fewer than $\ell$ occurrences of $\phi^{\geq 1}$ in $\tau_i + [0, k]$:
    We must have $\ell > 0$.
    Since $\PnF_{< k + 1}(\underbrace{\phi^{\geq 1}, \dots, \phi^{\geq 1}}_{\ell}, \overline{\phi_2})$, there exists a position $j$ in $\tau_i + (k, k + 1)$ where $\phi^{\geq 1}$, and thus $\overline{\phi_2}$, holds.
    We have $\rho, i \models \overline{\phi_1} \until_{(k, k+1)} \overline{\phi_2}$.

\end{itemize}
\item[($\Rightarrow$):] Let the number of times 
$\phi^{\geq 1}$
is satisfied by the positions in $\tau_i + [0, k]$ (excluding $i$)
be $\ell$. It is clear that $\ell$ is at most $k + 1$, and the 
corresponding disjunct in the RHS holds at $i$. \qedhere 
\end{itemize}
\end{proof}

\begin{lemma}\label{lem:releaseinpnueli}
Let $\phi_1 \release_{(k, k+1)} \phi_2$ be a subformula of the $\mitlpp{}$ formula $\varphi$, and $\Psi$ be the set of all temporal subformulae of $\varphi$. For all timed words $\rho$ over $\Sigma_{\AP \cup \AP_{\Psi}}$, positions $i \geq 1$, and $\phi^{\geq 1} \equiv \lnot \overline{\phi_2} \land ( \overline{\phi_2} \until_{\geq 1} \lnot \overline{\phi_2})$, we have
\medskip
\smallskip
$$
\rho, i \models \overline{\phi_1} \release_{(k, k+1)} \overline{\phi_2}  
\Leftrightarrow
 \rho, i \models \overline{\phi_1} \release_{> k} \overline{\phi_2}
 \lor
\bigvee_{\ell \in \{0, \dots, k + 1\}} \left (\begin{array}{c}\lnot \PnF_{< k + 1}(\underbrace{\phi^{\geq 1}, \dots, \phi^{\geq 1}}_{\ell}, \lnot \overline{\phi_2})
\bigwedge
\PnF_{\leq k}(\underbrace{\phi^{\geq 1}, \dots, \phi^{\geq 1}}_{\ell})\end{array} \right )\;.$$

\medskip
\smallskip
\end{lemma}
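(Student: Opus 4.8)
The plan is to derive this lemma from \cref{lem:untilinpnueli} by duality, rather than redoing the case analysis from scratch. Recall that $\overline{\phi_1} \release_{(k,k+1)} \overline{\phi_2} \equiv \lnot\big((\lnot\overline{\phi_1}) \until_{(k,k+1)} (\lnot\overline{\phi_2})\big)$, and that \cref{lem:untilinpnueli} is a purely semantic equivalence that holds for arbitrary propositional skeletons in its two argument positions (its proof uses nothing about $\overline{\phi_1}, \overline{\phi_2}$ beyond their being formulae over $\Sigma_{\AP \cup \AP_{\Psi}}$). So first I would instantiate \cref{lem:untilinpnueli} with $\overline{\phi_1}$ replaced by $\lnot\overline{\phi_1}$ and $\overline{\phi_2}$ by $\lnot\overline{\phi_2}$. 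Under this substitution the auxiliary formula $\phi^{\geq 1}$ of \cref{lem:untilinpnueli}, namely $\lnot\overline{\phi_2} \land (\overline{\phi_2} \until_{\geq 1} \lnot\overline{\phi_2})$, is \emph{exactly} the $\phi^{\geq 1}$ appearing in the present statement; this is the observation that makes the two lemmas line up.

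To keep the bookkeeping light I would abbreviate
\[
A_\ell \equiv \PnF_{< k+1}(\underbrace{\phi^{\geq 1}, \dots, \phi^{\geq 1}}_{\ell}, \lnot\overline{\phi_2}), \qquad B_\ell \equiv \PnF_{\leq k}(\underbrace{\phi^{\geq 1}, \dots, \phi^{\geq 1}}_{\ell}) \;.
\]
With these, the instantiated \cref{lem:untilinpnueli} reads
\[
(\lnot\overline{\phi_1}) \until_{(k,k+1)} (\lnot\overline{\phi_2}) \Leftrightarrow (\lnot\overline{\phi_1}) \until_{>k} (\lnot\overline{\phi_2}) \,\land\, \bigvee_{\ell \in \{0, \dots, k+1\}} \big(A_\ell \land \lnot B_{\ell+1}\big),
\]
since the second conjunct of each disjunct in \cref{lem:untilinpnueli} carries $\ell+1$ copies of $\phi^{\geq 1}$. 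Negating both sides and pushing $\lnot$ inward turns the left-hand side into $\overline{\phi_1} \release_{(k,k+1)} \overline{\phi_2}$ and the first conjunct on the right into $\overline{\phi_1} \release_{>k} \overline{\phi_2}$, so it remains to show that the negation of the big disjunction, $\lnot \bigvee_\ell (A_\ell \land \lnot B_{\ell+1})$, is equivalent to the disjunction claimed here, $\bigvee_\ell (\lnot A_\ell \land B_\ell)$.

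The key step, and the one I expect to be the main obstacle, is reconciling these two syntactically different Boolean combinations; I would do this via a monotonicity collapse. Let $m$ be the number of positions strictly after $i$ in the window $\tau_i + [0,k]$ at which $\phi^{\geq 1}$ holds; as in the $(\Rightarrow)$ direction of \cref{lem:untilinpnueli}, the ``$\until_{\geq 1}$'' inside $\phi^{\geq 1}$ forces consecutive such positions to be at least one time unit apart, so $0 \leq m \leq k+1$ and every $\ell=m$ is in range. Since all arguments of $B_\ell$ are identical, $B_\ell$ holds iff $m \geq \ell$. Moreover $A_\ell$ is \emph{antitone} in $\ell$ (a witnessing sequence for $A_{\ell'}$ with $\ell' \geq \ell$ yields one for $A_\ell$ by discarding the surplus $\phi^{\geq 1}$-points), so $\bigvee_{\ell \geq m} A_\ell = A_m$ and dually $\bigvee_{\ell \leq m} \lnot A_\ell = \lnot A_m$. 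Using $\lnot B_{\ell+1} \Leftrightarrow m \leq \ell$ I obtain $\bigvee_\ell (A_\ell \land \lnot B_{\ell+1}) = \bigvee_{\ell \geq m} A_\ell = A_m$, and using $B_\ell \Leftrightarrow m \geq \ell$ I obtain $\bigvee_\ell (\lnot A_\ell \land B_\ell) = \bigvee_{\ell \leq m} \lnot A_\ell = \lnot A_m$. Hence $\lnot \bigvee_\ell (A_\ell \land \lnot B_{\ell+1}) = \lnot A_m = \bigvee_\ell (\lnot A_\ell \land B_\ell)$, which is exactly what is needed; combined with the two De Morgan rewrites this closes the proof.

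The care points are the off-by-one between the $\PnF_{\leq k}$ argument counts in the two lemmas (the until-case uses $\ell+1$ copies where the release-case uses $\ell$) and the justification for substituting $\lnot\overline{\phi_2}$ for $\overline{\phi_2}$ inside $\phi^{\geq 1}$. If one preferred not to rely on the precise form of \cref{lem:untilinpnueli}, the same $A_m$-versus-$\lnot A_m$ dichotomy could be established by a direct case analysis on $m$ mirroring that proof, but the duality route above is considerably shorter.
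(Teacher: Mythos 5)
Your proof is correct and follows essentially the same route as the paper's: instantiate \cref{lem:untilinpnueli} with the negated arguments, negate both sides, and exploit the monotonicity in $\ell$ of the two families of Pnueli formulae to turn the resulting conjunction of disjuncts into the claimed disjunction of conjuncts. Your threshold argument via $m$ (the number of $\phi^{\geq 1}$-positions in the window) correctly fills in the index-adjustment step that the paper's sketch leaves as ``easy''.
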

\begin{proof}[Proof sketch]
Substituting $\neg \overline{\phi_1}, \neg \overline{\phi_2}$ 
for $\overline{\phi_1}, \overline{\phi_2}$ 
in the formula in~\cref{lem:untilinpnueli}
and taking negation
gives the following formula:
$$
\overline{\phi_1} \release_{> k} \overline{\phi_2}
 \lor
\bigwedge_{\ell \in \{0, \dots, k + 1\}} \left (\begin{array}{c}\lnot \PnF_{< k + 1}(\underbrace{\phi^{\geq 1}, \dots, \phi^{\geq 1}}_{\ell}, \lnot \overline{\phi_2})
\bigvee
\PnF_{\leq k}(\underbrace{\phi^{\geq 1}, \dots}_{\ell}, \phi^{\geq 1})\end{array} \right )$$
where $\phi^{\geq 1} \equiv \lnot \overline{\phi_2} \land ( \overline{\phi_2} \until_{\geq 1} \lnot \overline{\phi_2})$.
Observe that
\begin{itemize}
\item $\lnot \PnF_{< k + 1}(\underbrace{\phi^{\geq 1}, \dots, \phi^{\geq 1}}_{\ell}, \lnot \overline{\phi_2})$ implies
the same formulae with larger $\ell$'s. That is, for any $\rho, i$, if $\rho, i \models \lnot \PnF_{< k + 1}(\underbrace{\phi^{\geq 1}, \dots, \phi^{\geq 1}}_{x}, \lnot \overline{\phi_2})$, then $\rho, i \models \lnot \PnF_{< k + 1}(\underbrace{\phi^{\geq 1}, \dots, \phi^{\geq 1}}_{y}, \lnot \overline{\phi_2})$ for any $y > x$;
\item Symmetrically, $\PnF_{\leq k}(\underbrace{\phi^{\geq 1}, \dots}_{\ell}, \phi^{\geq 1})$ implies the same formulae with smaller $\ell$'s.
\end{itemize}
Also note that 
$\lnot \PnF_{< k + 1}(\underbrace{\phi^{\geq 1}, \dots, \phi^{\geq 1}}_{\ell}, \lnot \overline{\phi_2})$
and $\PnF_{\leq k}(\underbrace{\phi^{\geq 1}, \dots}_{\ell}, \phi^{\geq 1})$
cannot hold simultaneously for the same $\ell$.
It is then easy to adjust the indices to obtain the desired formula.
\end{proof}

We also state the following lemmas for past $\mitl{}$ modalities.

\begin{lemma}\label{lem:since.in.pnueli}
Let $\phi_1 \since_{(k, k+1)} \phi_2$ be a subformula of the $\mitlpp{}$ formula $\varphi$, and $\Psi$ be the set of all temporal subformulae of $\varphi$. For all timed words $\rho$ over $\Sigma_{\AP \cup \AP_{\Psi}}$, positions $i \geq 1$, and $\phi^{\geq 1} \equiv \overline{\phi_2} \land (\lnot \overline{\phi_2} \since_{\geq 1} \overline{\phi_2})$, we have
\medskip
\smallskip
$$
\rho, i \models \overline{\phi_1} \since_{(k, k+1)} \overline{\phi_2}  
\Leftrightarrow
 \rho, i \models \overline{\phi_1} \since_{> k} \overline{\phi_2}
 \land
\bigvee_{\ell \in \{0, \dots, k + 1\}} \left (\begin{array}{c}\PnO_{< k + 1}(\underbrace{\phi^{\geq 1}, \dots, \phi^{\geq 1}}_{\ell}, \overline{\phi_2})
\bigwedge
\lnot \PnO_{\leq k}(\underbrace{\phi^{\geq 1}, \dots}_{\ell}, \phi^{\geq 1})\end{array} \right ) \;.$$
\medskip
\smallskip
\end{lemma}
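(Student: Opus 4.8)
The plan is to derive this from \cref{lem:untilinpnueli} using the principle that \emph{past and future are reflections of each other}, realised concretely by time reversal. The first observation is that every modality occurring on either side of the claimed equivalence---$\since_{(k,k+1)}$, $\since_{>k}$, $\PnO_{<k+1}$, $\PnO_{\leq k}$, and the $\since_{\geq 1}$ hidden inside $\phi^{\geq 1}$---is purely past-directed. Hence, for a fixed $\rho$ and position $i$, the truth of both sides depends only on the finite prefix $\rho[1,i] = (\sigma_1,0)\cdots(\sigma_i,\tau_i)$, so I may work entirely with this prefix and its reverse $\rho[1,i]^R$ (in the sense of \cref{lem:reversal}).

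First I would set up a \emph{mirror} correspondence at the level of formulae. Define, for any $\mtlpp$ formula $\theta$, its mirror $\theta^{\mathrm{m}}$ by leaving atoms and Boolean connectives unchanged and swapping $\until_I \leftrightarrow \since_I$ and $\PnF_J \leftrightarrow \PnO_J$ while keeping the argument order. A routine structural induction then shows that for any finite timed word $u$ of length $N$ and any position $p$, one has $u, p \models \theta$ iff $u^R, N-p+1 \models \theta^{\mathrm{m}}$. The inductive step is immediate from the semantic clauses: reversal sends the condition $j>i$ with $\tau_j-\tau_i \in I$ for $\until_I$ to $j'<i'$ with $\tau_{i'}-\tau_{j'} \in I$ for $\since_I$ (and symmetrically), preserves relative distances, and maps a decreasing witness chain $i_k > \dots > i_1 > j$ for $\PnF_J$ to an increasing chain for $\PnO_J$ with the same distances; the base case uses that $u^R$ carries the same letter at the mirror position.

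With this correspondence in hand, I would instantiate it with $u = \rho[1,i]$ and $p = i$, so that the mirror position is $1$. The left-hand side $\overline{\phi_1} \since_{(k,k+1)} \overline{\phi_2}$ becomes $\overline{\phi_1} \until_{(k,k+1)} \overline{\phi_2}$ at position $1$ of $\rho[1,i]^R$. The crucial check is that the compound guard $\phi^{\geq 1} = \overline{\phi_2} \land (\lnot \overline{\phi_2} \since_{\geq 1} \overline{\phi_2})$ mirrors exactly to $\overline{\phi_2} \land (\lnot \overline{\phi_2} \until_{\geq 1} \overline{\phi_2})$, which is precisely the $\phi^{\geq 1}$ of \cref{lem:untilinpnueli}; consequently each $\PnO_{<k+1}(\dots)$ and $\PnO_{\leq k}(\dots)$ on the right-hand side mirrors term-by-term to the matching $\PnF$ formula, and $\since_{>k}$ to $\until_{>k}$. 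Thus the mirror of the entire claimed equivalence is exactly the equivalence of \cref{lem:untilinpnueli} instantiated at position $1$ of $\rho[1,i]^R$, and transferring back through the correspondence yields the lemma.

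The single point needing care---and the main obstacle---is that \cref{lem:untilinpnueli} is stated for infinite timed words, whereas $\rho[1,i]^R$ is finite. I see two ways to discharge this. One is to note that the proof of \cref{lem:untilinpnueli} is a position-local semantic argument that inspects only positions in the window $\tau_i + [0,k+1)$ together with the single witness of $\until_{>k}$, and therefore goes through verbatim for finite words. Alternatively, I would extend $\rho[1,i]^R$ to a non-Zeno infinite word by appending events placed at times beyond $k+1$ and all falsifying $\overline{\phi_2}$, then verify that such a neutral suffix changes none of the relevant truth values: the bounded modalities $\until_{(k,k+1)}$, $\PnF_{<k+1}$, $\PnF_{\leq k}$ cannot acquire new witnesses past $k+1$, and the $\until_{\geq 1}$ inside each $\phi^{\geq 1}$ cannot be satisfied by an $\overline{\phi_2}$-free suffix. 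Either route reduces the statement to \cref{lem:untilinpnueli}, completing the argument; an entirely analogous reversal will likewise settle the $\trigger_{(k,k+1)}$ case from \cref{lem:releaseinpnueli}.
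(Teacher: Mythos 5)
Your proof is correct and follows exactly the route the paper intends: the paper states this lemma without an explicit proof, relying on its ``past and future are reflections of each other'' principle, and your mirror correspondence on reversed prefixes together with the reduction to \cref{lem:untilinpnueli} is precisely how that symmetry is made rigorous. Of your two ways to bridge the finite-vs-infinite-word mismatch, the first (observing that the proof of \cref{lem:untilinpnueli} is position-local and goes through verbatim on finite words) is the one to rely on, since the alternative of appending a suffix falsifying $\overline{\phi_2}$ tacitly assumes $\overline{\phi_2}$ is not a tautology.
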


\begin{lemma}\label{lem:trigger.in.pnueli}
Let $\phi_1 \trigger_{(k, k+1)} \phi_2$ be a subformula of the $\mitlpp{}$ formula $\varphi$, and $\Psi$ be the set of all temporal subformulae of $\varphi$. For all timed words $\rho$ over $\Sigma_{\AP \cup \AP_{\Psi}}$, positions $i \geq 1$, and $\phi^{\geq 1} \equiv \lnot \overline{\phi_2} \land ( \overline{\phi_2} \since_{\geq 1} \lnot \overline{\phi_2})$, we have
\medskip
\smallskip
$$
\rho, i \models \overline{\phi_1} \trigger_{(k, k+1)} \overline{\phi_2}  
\Leftrightarrow
 \rho, i \models \overline{\phi_1} \trigger_{> k} \overline{\phi_2}
 \lor
\bigvee_{\ell \in \{0, \dots, k + 1\}} \left (\begin{array}{c}\lnot \PnO_{< k + 1}(\underbrace{\phi^{\geq 1}, \dots, \phi^{\geq 1}}_{\ell}, \lnot \overline{\phi_2})
\bigwedge
\PnO_{\leq k}(\underbrace{\phi^{\geq 1}, \dots, \phi^{\geq 1}}_{\ell})\end{array} \right ) \;.$$
\medskip\smallskip
\end{lemma}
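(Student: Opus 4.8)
The plan is to obtain \cref{lem:trigger.in.pnueli} from \cref{lem:since.in.pnueli} by exactly the dualisation argument that produces \cref{lem:releaseinpnueli} from \cref{lem:untilinpnueli}. The enabling observation is that $\trigger_I$ is the De Morgan dual of $\since_I$ (by definition $\varphi_1 \trigger_I \varphi_2 \equiv \neg((\neg\varphi_1)\since_I(\neg\varphi_2))$), so a single substitute-and-negate step converts the $\since$-characterisation into the $\trigger$-characterisation; the whole statement is, in fact, the time-reversal mirror of \cref{lem:releaseinpnueli} under the symmetry swapping $\until\leftrightarrow\since$, $\release\leftrightarrow\trigger$ and $\PnF\leftrightarrow\PnO$.

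First I would take the equivalence of \cref{lem:since.in.pnueli}, substitute $\neg\overline{\phi_1},\neg\overline{\phi_2}$ for $\overline{\phi_1},\overline{\phi_2}$ throughout, and negate both sides. On the left this yields $\overline{\phi_1}\trigger_{(k,k+1)}\overline{\phi_2}$. A small check is that the auxiliary formula transforms as required: the $\since$-lemma's $\phi^{\geq 1}\equiv\overline{\phi_2}\land(\lnot\overline{\phi_2}\since_{\geq 1}\overline{\phi_2})$ becomes, after the substitution, $\lnot\overline{\phi_2}\land(\overline{\phi_2}\since_{\geq 1}\lnot\overline{\phi_2})$, which is precisely the $\phi^{\geq 1}$ of \cref{lem:trigger.in.pnueli}. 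Likewise the conjunct $\lnot\overline{\phi_1}\since_{>k}\lnot\overline{\phi_2}$ becomes $\overline{\phi_1}\trigger_{>k}\overline{\phi_2}$ under negation.

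Next I would push the negation through the right-hand side. Negating the $\since$-lemma RHS, which has the shape $(\cdots\since_{>k}\cdots)\land\bigvee_\ell(A_\ell\land B_\ell)$ with $A_\ell=\PnO_{<k+1}(\phi^{\geq 1},\dots,\overline{\phi_2})$ and $B_\ell=\lnot\PnO_{\leq k}(\dots,\phi^{\geq 1})$, produces $\overline{\phi_1}\trigger_{>k}\overline{\phi_2}\lor\bigwedge_\ell(\neg A_\ell\lor\neg B_\ell)$, i.e.\ the intermediate formula $\overline{\phi_1}\trigger_{>k}\overline{\phi_2}\lor\bigwedge_\ell\bigl(\lnot\PnO_{<k+1}(\phi^{\geq 1},\dots,\lnot\overline{\phi_2})\lor\PnO_{\leq k}(\phi^{\geq 1},\dots,\phi^{\geq 1})\bigr)$, now with the new (trigger) $\phi^{\geq 1}$.

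The remaining and main step is to convert this conjunction-of-disjunctions into the target disjunction-of-conjunctions with the correct index shift, and I expect this to be the only non-routine part. I would reuse verbatim the monotonicity and mutual-exclusivity facts from the proof of \cref{lem:releaseinpnueli}, restated for the past operators: $\lnot\PnO_{<k+1}(\phi^{\geq 1},\dots,\lnot\overline{\phi_2})$ (with $\ell$ copies of $\phi^{\geq 1}$) is monotone increasing in $\ell$, $\PnO_{\leq k}$ is monotone decreasing in its number of copies of $\phi^{\geq 1}$, and for a fixed count the two cannot hold simultaneously. These observations are purely about how hard it is to satisfy a Pnueli obligation as the number of required events grows, and are insensitive to the future/past direction, so they transfer unchanged. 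Applying them justifies the same re-indexing used in the future case—rewriting the $(\ell{+}1)$-copy $\PnO_{\leq k}$ term as the $\ell$-copy term of the target—and yields exactly the equivalence claimed in \cref{lem:trigger.in.pnueli}.
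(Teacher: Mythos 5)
Your proposal is correct and matches the paper's intended argument: the paper states \cref{lem:trigger.in.pnueli} without an explicit proof, as the past-time mirror of \cref{lem:releaseinpnueli}, whose proof sketch is exactly the substitute-$\neg\overline{\phi_1},\neg\overline{\phi_2}$-and-negate step applied to the corresponding $\since$/$\until$ lemma followed by the monotonicity/mutual-exclusivity re-indexing you describe. Your checks that the auxiliary $\phi^{\geq 1}$ and the $\trigger_{>k}$ disjunct transform correctly, and that the monotonicity facts are direction-insensitive, fill in precisely the details the paper leaves implicit.
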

These lemmas naturally generalise to arbitrary intervals \(I = \langle l, u \rangle\) beyond the specific case \((k, k+1)\). In such cases, the relevant index \(\ell\) belongs to the range \(\{0, \dots, \lfloor \frac{l}{u - l} \rfloor + 1\}\).

\subsection{Tester automata for $\mitl{}$ modalities with $\langle l, u \rangle$}\label{subsec:general}
At this point we may, of course, just use~\cref{lem:untilinpnueli,lem:releaseinpnueli,lem:since.in.pnueli,lem:trigger.in.pnueli}
to rewrite all $\mitl{}$ modalities with $\langle l, u \rangle$
and use the tester $\ta{}$ constructions described in the previous section and~\citep{ho2025metric}. This is not ideal, as there are roughly $2 \cdot \lfloor \frac{l}{u - l} \rfloor$ occurrences of Pnueli or dual Pnueli modalities in each of these lemmas. Each occurrence of Pnueli or dual Pnueli modalities with $n$ arguments needs $n$ component $\ta{s}$ and $n$ clocks, giving $2 \cdot \lfloor \frac{l}{u - l} \rfloor \cdot n$ clocks in total (where $n$ can be as large as $\lfloor \frac{l}{u - l} \rfloor + 1$). 
Significant simplifications can be achieved, however, by working directly with component $\ta{s}$ instead of with formulae.
We now present two such simplifications below.
\paragraph{Reducing the number of component $\ta{s}$ and clocks}
We note the following (take~\cref{lem:untilinpnueli} as an example):
\begin{itemize}
\item The structures of the component $\ta{s}$ for $\PnF_{< k + 1}(\underbrace{\phi^{\geq 1}, \dots, \phi^{\geq 1}}_{\ell}, \overline{\phi_2})$
and $\lnot \PnF_{\leq k}(\underbrace{\phi^{\geq 1}, \dots}_{\ell}, \phi^{\geq 1})$ are similar and the conjunction, in effect, specifies that $\phi^{\geq 1}$ occurs exactly $\ell$ times before $\tau_i + k$.

\item The first $\ell$ arguments are the same, and thus we only need to maintain $k + 2$ obligations (recall that $\ell \in \{0, \dots, k+1\}$) at any time across all the disjuncts.

\end{itemize}
These observations suggest that, while we do not see a way to simplify the formulae used in~\cref{lem:untilinpnueli,lem:releaseinpnueli,lem:since.in.pnueli,lem:trigger.in.pnueli}, we only need $k + 2$ component $\ta{s}$ to `implement' the whole disjunct;
each such component $\ta{}$ uses two clocks, $x$ and $y$ ($x$ is reset when $p_\psi$ first holds; $y$ is reset when $p_\psi$ holds later each time) to keep track of an obligation, which may correspond to any of the disjuncts.

\begin{figure}
\begin{minipage}[t]{0.49\textwidth}
\centering
\begin{tikzpicture}[transform shape, scale=0.55]
\begin{scope}

\draw[-, loosely dashed] (-70pt,0pt) -- (200pt,0pt);

\draw[-, very thick, loosely dotted] (-110pt,0pt) -- (-75pt,0pt);

\draw[loosely dashed] (-120pt,-30pt) -- (-120pt,10pt) node[at start, below=2mm] {\huge $\tau_i$};

\draw[loosely dashed] (-10pt,-30pt) -- (-10pt,10pt) node[at start, below=2mm] {\huge $\tau_i + k-1$};

\draw[loosely dashed] (90pt,-30pt) -- (90pt,10pt) node[at start, below=2mm] {\huge $\tau_i + k$};

\draw[loosely dashed] (79pt,0pt) -- (79pt,25pt);
\draw[loosely dashed] (179pt,0pt) -- (179pt,25pt);

\draw[loosely dashed] (190pt,-30pt) -- (190pt,10pt) node[at start, below=2mm] {\huge $\tau_i + k+1$};

\draw[draw=black, fill=white] (-121pt, -8pt) rectangle (-119pt, 8pt);

\draw[draw=black, fill=white] (-61pt, -8pt) rectangle (-59pt, 8pt);

\draw[draw=black, fill=white] (1pt, -8pt) rectangle (-1pt, 8pt);

\draw[draw=black, fill=blue] (53pt, -8pt) rectangle (51pt, 8pt);

\draw[draw=black, fill=white] (105pt, -8pt) rectangle (103pt, 8pt);

\draw[draw=black, fill=white] (157pt, -8pt) rectangle (155pt, 8pt);

\draw[draw=black, fill=blue] (-21pt, -8pt) rectangle (-23pt, 8pt);

\draw[draw=black, fill=white] (31pt, -8pt) rectangle (29pt, 8pt);

\draw[draw=black, fill=white] (80pt, -8pt) rectangle (78pt, 8pt);

\draw[draw=black, fill=white] (87pt, -8pt) rectangle (85pt, 8pt);

\draw[draw=black, fill=white] (135pt, -8pt) rectangle (133pt, 8pt);

\draw[draw=black, fill=blue] (180pt, -8pt) rectangle (178pt, 8pt);

\end{scope}

\begin{scope}
\draw[|<->|][dotted] (52pt,25pt)  -- (179pt,25pt) node[midway,above] {\huge $\geq 1$};
\end{scope}

\end{tikzpicture}
\captionof{figure}{Case (1). Blue boxes are $\overline{\phi_2}$-events.}
  \label{fig:isolated}
\end{minipage}
\hfill
\begin{minipage}[t]{0.49\textwidth}
\centering
\begin{tikzpicture}[transform shape, scale=0.55]
\begin{scope}

\draw[-, loosely dashed] (-70pt,0pt) -- (200pt,0pt);

\draw[-, very thick, loosely dotted] (-110pt,0pt) -- (-75pt,0pt);

\draw[loosely dashed] (-120pt,-30pt) -- (-120pt,10pt) node[at start, below=2mm] {\huge $\tau_i$};

\draw[loosely dashed] (-10pt,-30pt) -- (-10pt,10pt) node[at start, below=2mm] {\huge $\tau_i + k-1$};

\draw[loosely dashed] (90pt,-30pt) -- (90pt,10pt) node[at start, below=2mm] {\huge $\tau_i + k$};

\draw[loosely dashed] (79pt,0pt) -- (79pt,25pt);
\draw[loosely dashed] (179pt,0pt) -- (179pt,25pt);

\draw[loosely dashed] (190pt,-30pt) -- (190pt,10pt) node[at start, below=2mm] {\huge $\tau_i + k+1$};

\draw[draw=black, fill=white] (-121pt, -8pt) rectangle (-119pt, 8pt);

\draw[draw=black, fill=white] (-55pt, -8pt) rectangle (-53pt, 8pt);

\draw[draw=black, fill=white] (-11pt, -8pt) rectangle (-13pt, 8pt);

\draw[draw=black, fill=blue] (32pt, -8pt) rectangle (30pt, 8pt);

\draw[draw=black, fill=white] (46pt, -8pt) rectangle (44pt, 8pt);

\draw[draw=black, fill=white] (141pt, -8pt) rectangle (139pt, 8pt);

\draw[draw=black, fill=blue] (-21pt, -8pt) rectangle (-23pt, 8pt);

\draw[draw=black, fill=blue] (80pt, -8pt) rectangle (78pt, 8pt);

\draw[draw=black, fill=blue] (135pt, -8pt) rectangle (133pt, 8pt);

\draw[draw=black, fill=blue] (150pt, -8pt) rectangle (148pt, 8pt);

\end{scope}

\begin{scope}
\draw[|<->|][dotted] (-22pt,25pt)  -- (31pt,25pt) node[midway,above] {\huge $< 1$};
\draw[|<->|][dotted] (31pt,25pt)  -- (79pt,25pt) node[midway,above] {\huge $< 1$};
\draw[|<->|][dotted] (79pt,25pt)  -- (134pt,25pt) node[midway,above] {\huge $< 1$};
\end{scope}

\end{tikzpicture}
\captionof{figure}{Case (2). Blue boxes are $\overline{\phi_2}$-events.}
\label{fig:strip}
\end{minipage}
\end{figure}

\paragraph{Simplifying individual component automata.}
The whole purpose of the large disjunction 
in~\cref{lem:untilinpnueli}
is to `locate' the last event where
$\phi^{\geq 1}$ holds in $\tau_i + [0, k]$
and ensure either of the following is true
(a similar observation holds for $\phi_1 \release_{\langle l, u \rangle} \phi_2$  and~\cref{lem:releaseinpnueli}):
\begin{enumerate}
\item The next $\overline{\phi_2}$ (which is $\geq 1$ away) is just in $\tau_i + (k, k + 1)$ (see~\cref{fig:isolated}); %
\item The next $\overline{\phi_2}$ is still in $\tau_i + [0, k]$,
but since $\rho, i \models \lnot \PnF_{\leq k}(\underbrace{\phi^{\geq 1}, \dots}_{\ell}, \phi^{\geq 1})$, it is $< 1$ away from the next $\overline{\phi_2}$ further in the future, which in turn 
is also $< 1$ away from the next $\overline{\phi_2}$ further in the future, and so on;  
eventually there is a $\overline{\phi_2}$ in $(k, k + 1)$ (see~\cref{fig:strip}). %
\end{enumerate}
In fact, using clock constraints on both $x$ and $y$ allows these conditions to be enforced directly. A component $\ta{}$ need not read a sequence of $\phi^{\geq 1}$-events, but can instead guess a single event where $x < k+1$ and $\overline{\phi_2}$ holds.
Then we have two cases, and they correspond to the two cases above:
\begin{enumerate}
\item[(1')] If $y > k$ is also true at this event, then this $\overline{\phi_2}$ satisfies all the obligations tracked by this component $\ta{}$ since $y \leq x$.
\item[(2')] If $y \leq k$ at this event, then this $\overline{\phi_2}$ satisfies the first and  possibly some of the earlier obligations but not all of them, in particular not the latest one. We enforce $\lnot \phi^{\geq 1}$ until a later event where $y > k$ and $\overline{\phi_2}$ both hold; this sequence of $\overline{\phi_2}$-events satisfies all the intermediate obligations, and in particular the last one satisfies the latest obligation.
\end{enumerate}
In both cases, we would have achieved the same effect of validating all the obligations if we simply ignored the sequence of $\phi^{\geq 1}$'s.
This leads to drastically simpler component $\ta{s}$
whose number of locations are independent of 
$\lfloor \frac{l}{u - l} \rfloor$.
A component $\ta{}$ for $\eventually_{(l, u)} \phi$ is depicted in~\cref{fig:simplified}
(the role of $q^1_\psi$ will be explained later); intuitively, it looks for either \begin{inparaenum} \item an event where $\overline{\phi}$ holds; or \item a sequence of events where $\overline{\phi}$ hold, and every two occurrences of $\overline{\phi}$ (with only $\neg \overline{\phi}$ in between) are separated by less than $u - l$. \end{inparaenum}
When $p^1_\psi$ is triggered, the component $\ta{}$ moves to $s_0'$, where it waits before guessing a point where $\overline{\phi}$ and $x < u$ both hold and moves to $s_1$. Between $s_1$ and $s_2$, it enforces that occurrences of $\overline{\phi}$ are separated by less than $u - l$,
until heading back to $s_0$ or $s_0'$, again on reading $\overline{\phi}$.
Finally, we notice that there can only be a bounded number (no more than $\lfloor \frac{u}{u - l} \rfloor$) of pairs of events that both satisfy $\overline{\phi}$ but are more than $u - l$ apart in $\tau_i + [0, u)$.
The tester $\ta{}$ for $\phi_1 \until_{(l, u)} \phi_2$ is the product of these component $\ta{s}$ and the tester $\ta{}$ for 
$\phi_1 \until_{> k} \phi_2$. 

\begin{figure}
\centering
\begin{tikzpicture}[->, node distance=7.5cm, transform shape, scale=0.6]
   \node[initial left,state, accepting](0){$s_0$};
   \node[state, right of=0](1){$s_0'$};
   \node[state, right of=1](2){$s_1$};
   \node[state, right of=2](3){$s_2$};
   
   \path
   (0) edge[loopabove] node[above=1mm, align=center, looseness=20, out=120, in=90,xshift=-5mm]{$\neg p^1_\psi \land \neg q^{1}_\psi$, $x := 0, y := 0$} (0)
   (0) edge[->, bend left=10] node[above=1mm, align=center,xshift=2mm]{$p^1_\psi \land \neg q^{1}_\psi$, $x := 0, y := 0$} (1)
   (1) edge[loopabove] node[above=1mm,align=center]{$\neg p^1_\psi \land \neg q^{1}_\psi$, $x < u, y \leq l$} (1)
   (1) edge[loopbelow,looseness=10] node[below=1mm, align=center,xshift=-8mm]{$p^1_\psi \land \neg q^{1}_\psi$, $x < u, y \leq l$, $y := 0$ \\ $p^1_\psi \land q^{1}_\psi \land \overline{\phi}$, $x < u, y > l$, $x := 0, y := 0$} (1)
   (1) edge[->, bend left=10] node[below=1mm, align=center]{$\neg p^1_\psi \land q^{1}_\psi \land \overline{\phi}$, $x < u, y > l$, $x := 0, y := 0$} (0)
   
   (1) edge[->] node[above=1mm, align=center,xshift=-3mm]{$\neg p^1_\psi \land q^{1}_\psi \land \overline{\phi}$, $x < u, y \leq l$, $x := 0$} (2)
   
   (2) edge[->, bend left=10] node[above=1mm, align=center,xshift=-4mm,yshift=-2mm]{$\neg p^1_\psi \land \neg q^{1}_\psi \land \neg \overline{\phi}$, $x < (u - l), y \leq l$} (3)
   
   (3) edge[->, bend left=10] node[below=1mm, align=center,xshift=-5mm,yshift=1mm]{$\neg p^1_\psi \land q^{1}_\psi \land \overline{\phi}$, $x < (u - l), y \leq l$, $x := 0$} (2)
   
   (2) edge[loopabove,looseness=10] node[above=1mm,align=center,xshift=10mm,yshift=0mm]{$\neg p^1_\psi \land q^{1}_\psi \land \overline{\phi}$, $x < (u - l), y \leq l$, $x := 0$} (2)
   
   (3) edge[loopabove] node[above=1mm,align=center,xshift=12mm]{$\neg p^1_\psi \land \neg q^{1}_\psi \land \neg \overline{\phi}$, $x < (u - l), y \leq l$} (3)

   (2) edge[->, bend left = 30] node[below right=1mm and 2mm,align=center]{$p^1_\psi \land q^{1}_\psi \land \overline{\phi}$, $x < (u - l), y > l$, $x := 0, y := 0$} (1)
   
   (3) edge[->, bend left = 40] node[below=1mm,align=center]{$p^1_\psi \land q^{1}_\psi \land \overline{\phi}$, $x < (u - l), y > l$, $x := 0, y := 0$} (1)

   (2) edge[->, bend right = 30] node[above=1mm,align=center, xshift=5mm]{$\neg p^1_\psi \land q^{1}_\psi \land \overline{\phi}$, $x < (u - l), y > l$, $x := 0, y := 0$} (0)
   
   (3) edge[->, bend right = 40] node[above=1mm,align=center]{$\neg p^1_\psi \land q^{1}_\psi \land \overline{\phi}$, $x < (u - l), y > l$, $x := 0, y := 0$} (0)
   
   ;
 \end{tikzpicture}
\caption{A simplified component $\ta{}$ of the tester $\ta{}$ for $\eventually_{(l, u)} \phi$.}
\label{fig:simplified}
\end{figure}

\subsection{Sequentialisation}
The simplified tester $\ta{}$ construction described above is correct by~\cref{lem:reversal,lem:pnueli.correctness,lem:uconstruct,lem:rconstruct,lem:pastmitl,lem:pnconstruct,lem:pndualconstruct,lem:pastpnueli}.
 However, it allows for arbitrary interleavings of obligations, which can cause an exponential blow-up in the number of locations (the same is also true for the tester $\ta{}$ construction for Pnueli modalities in~\cref{subsec:tester.for.pnueli}). For example, if there are $n$ component $\ta{s}$ and $n$ obligations, then there are $n!$ ways to allocate these obligations to the component $\ta{s}$. 
To ensure that the product of $n$ component $\ta{s}$  only has a polynomial (in $n$) number of reachable locations, we \emph{sequentialise} the obligations  by introducing new atomic propositions; our approach is inspired by~\citep{lal2009reducing, fischer2013cseq, chaki2011time} and shares conceptual similarities with \emph{partial-order methods}~\citep{valmari1990stubborn, godefroid1990using, holzmann1992coverage, godefroid1994partial, valmari1993fly, peled1993all}.
For example, for the $n$ component $\ta{s}$ of the tester $\ta{}$ for some $\phi_1 \until_{\langle l, u \rangle} \phi_2$, in addition to the triggers $p_\psi^{1}, \dots, p_\psi^{n}$ we also introduce $q_\psi^{1}, \dots, q_\psi^{n}$.
We label ~$q^i_\psi$, on all the transitions   that are also labelled with $\overline{\phi}$ (see~\cref{fig:simplified}) in the $i^\textit{th}$ component $\ta{}$. 
Then we use two additional (untimed) component automata $\mathcal{A}^\textit{in}$ and $\mathcal{A}^\textit{out}$ (see~\cref{fig:seq.in,fig:seq.out} where $n = 4$, $\textit{in}^\textit{null} = \neg (p^1_\psi \lor p^2_\psi \lor p^3_\psi \lor p^4_\psi)$, and $\textit{out}^\textit{null} = \neg (q^1_\psi \lor q^2_\psi \lor q^3_\psi \lor q^4_\psi)$) to ensure that \begin{inparaenum} \item the obligations are allocated to the component $\ta{s}$ in a sequential order, \item the sequences of triggers (e.g.,~$p^1_\psi$ and $p^2_\psi$) are never overlapped, and \item the sequences of $\overline{\phi_2}$ are never overlapped (see~\cref{fig:sequentialised.obligations}).
\end{inparaenum}
This implies the following lemmas.
\begin{lemma}\label{lem:until.sequentialisation}
For a subformula of the form $\phi_1 \until_I \phi_2$ with $I = \langle l, u \rangle$ where $n = \lfloor \frac{u}{u - l} \rfloor + 2$, we can construct a tester $\ta{}$ with at most $\mathcal{O}(n^2)$  locations and $2n$ clocks. 
\end{lemma}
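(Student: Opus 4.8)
The plan is to realise the tester $\ta{}$ for $\phi_1 \until_{(l, u)} \phi_2$ as the synchronous product
\[
\mathcal{C}_\psi \;=\; \big(\mathcal{C}^1 \times \dots \times \mathcal{C}^n\big) \times \mathcal{A}^{\textit{in}} \times \mathcal{A}^{\textit{out}} \times \mathcal{D} \;,
\]
where each $\mathcal{C}^i$ is a copy of the simplified component $\ta{}$ of~\cref{fig:simplified} (differing only in its triggers $p^i_\psi$ and $q^i_\psi$), the clock-free automata $\mathcal{A}^{\textit{in}}$ and $\mathcal{A}^{\textit{out}}$ are the sequentialisation gadgets of~\cref{fig:seq.in,fig:seq.out}, and $\mathcal{D}$ is the unilateral tester for $\phi_1 \until_{> k} \phi_2$ from~\cref{lem:uconstruct}. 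Each $\mathcal{C}^i$ uses exactly the two clocks $x, y$, so the components contribute $2n$ clocks; since $\mathcal{A}^{\textit{in}}, \mathcal{A}^{\textit{out}}$ carry no clocks and the single lower-bound clock of $\mathcal{D}$ I would identify with an existing component clock, the total remains $2n$. Language equivalence with $\globally(p_\psi \implies \phi_1 \until_{(l,u)} \phi_2)$ (up to the fresh propositions $p^i_\psi, q^i_\psi$, which are projected away) follows by combining the rewriting of~\cref{lem:untilinpnueli} and the direct-simplification argument preceding~\cref{fig:simplified} with a component-level correctness argument in the style of~\cref{lem:pnueli.correctness}, after observing that $\mathcal{A}^{\textit{in}}$ and $\mathcal{A}^{\textit{out}}$ only constrain the labelling of the auxiliary propositions and therefore leave the projected language unchanged.

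The real work is to bound the number of \emph{reachable} locations by $\mathcal{O}(n^2)$, since the raw product $\mathcal{C}^1 \times \dots \times \mathcal{C}^n$ already has $4^n$ locations. First I would note that $\mathcal{A}^{\textit{in}}$ and $\mathcal{A}^{\textit{out}}$ each maintain a single round-robin pointer $\textit{in}, \textit{out} \in \{1, \dots, n\}$, recording the component that currently receives triggers and the component that currently discharges $\overline{\phi_2}$-events; by construction they assign the trigger-blocks $p^i_\psi$ and the discharge-blocks $q^i_\psi$ to the components in the cyclic order $1, 2, \dots, n, 1, \dots$ and forbid two such blocks from overlapping. The crux is then a reachability invariant, proved by induction along runs: reading the components cyclically from $\textit{out}$, every component strictly between $\textit{out}$ and $\textit{in}$ sits uniformly in the single holding location $s_0'$, every component outside the cyclic segment from $\textit{out}$ to $\textit{in}$ sits in the idle location $s_0$, and only the two boundary components may differ, namely the one at $\textit{out}$ (in a discharge location $s_1$ or $s_2$) and the one at $\textit{in}$ (in $s_0$ or $s_0'$).

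Granting the invariant, a reachable configuration of the component product is determined by the pair $(\textit{in}, \textit{out})$ together with $\mathcal{O}(1)$ local data for the two boundary components, i.e.\ by $\mathcal{O}(n) \cdot \mathcal{O}(n) \cdot \mathcal{O}(1) = \mathcal{O}(n^2)$ possibilities. Since the states of $\mathcal{A}^{\textit{in}}$ and $\mathcal{A}^{\textit{out}}$ are precisely the already-counted pointers and $\mathcal{D}$ contributes only a constant factor, the forward-reachable fragment of $\mathcal{C}_\psi$ has $\mathcal{O}(n^2)$ locations, which is exactly what we build.

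The main obstacle is to prove the invariant cleanly. One has to check three things against the transition structure of~\cref{fig:simplified,fig:seq.in,fig:seq.out}: that no component leaves $s_0$ before its turn (the edge $s_0 \to s_0'$ is gated by $p^i_\psi$, which $\mathcal{A}^{\textit{in}}$ releases only when $\textit{in} = i$); that no component reaches a discharge location out of turn (the $\overline{\phi}$/$q^i_\psi$-labelled edges are gated by $\mathcal{A}^{\textit{out}}$ with $\textit{out} = i$); and that the non-overlap conditions force the intermediate components to remain uniformly at $s_0'$ rather than spreading across $\{s_0', s_1, s_2\}$. The delicate case is the cyclic wrap-around, when $\textit{out}$ approaches $\textit{in}$ and a component is about to be reused: here I would invoke the bound $n = \lfloor \frac{u}{u - l} \rfloor + 2$ on the number of simultaneously pending obligations to guarantee that a free component is always available, so the two pointers never collide in a way that would force two distinct obligations onto one component.
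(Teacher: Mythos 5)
Your proposal matches the paper's proof: the paper likewise takes the product of the $n$ two-clock component $\ta{s}$ with $\mathcal{A}^{\textit{in}}$, $\mathcal{A}^{\textit{out}}$ (and the unilateral tester), and argues that sequentialisation restricts reachable configurations to those where the non-$s_0$ components form a circular consecutive block with at most one component (at the head of the block) in $s_1$ or $s_2$, yielding at most $1+3n^2$ locations and $2n$ clocks. Your invariant phrased via the two round-robin pointers $\textit{in}$ and $\textit{out}$ is exactly this restriction, so the argument is essentially identical (and somewhat more explicit about how the invariant is maintained).
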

\begin{proof}[Proof sketch]
Each location of the tester $\ta{}$ corresponds to a configuration of the component $\ta{s}$.
The product of the $n$ component $\ta{s}$ from \cref{fig:simplified} (without  $\mathcal{A}^\textit{in}$ and $\mathcal{A}^\textit{out}$)
has $4^n$ locations, as each component $\ta{}$ may be in any of its four locations.
With $\mathcal{A}^\textit{in}$ and $\mathcal{A}^\textit{out}$,
we can only reach configurations where obligations are sequentialised on the $n$ component $\ta{}$s $C_\psi^1, \dots, C_\psi^n$:
\begin{itemize}
    \item For all component $\ta{s}$ not in location $s_0$, their indices form a (circular) consecutive sequence of numbers, e.g., $3, 4, 5$ or $6, 7, 1, 2$ (where $n = 7$).
    \item At most one component $\ta{}$ (at the beginning of the sequence) may be in location $s_1$ or $s_2$.
\end{itemize}
The number of reachable locations is at most $1 + 3n^2$. By the discussion in~\cref{subsec:general}, 
we need $n$ component $\ta{s}$, each uses $2$ clocks.
\end{proof}
\begin{lemma}
For a subformula of the form $\phi_1 \release_I \phi_2$ with $I = \langle l, u \rangle$ where $n = \lfloor \frac{u}{u - l} \rfloor + 2$, we can construct a tester $\ta{}$ with at most $\mathcal{O}(n^2)$ locations and $2n$ clocks.
\end{lemma}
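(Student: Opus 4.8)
The plan is to mirror the construction and counting argument of \cref{lem:until.sequentialisation}, exploiting the fact that $\release_I$ is the De Morgan dual of $\until_I$ and that \cref{lem:releaseinpnueli} gives a Pnueli decomposition of $\release_{\langle l, u \rangle}$ structurally parallel to the one \cref{lem:untilinpnueli} gives for $\until_{\langle l, u \rangle}$. Setting $n = \lfloor \frac{u}{u - l} \rfloor + 2$, I would first read off from \cref{lem:releaseinpnueli} (and its generalisation to arbitrary $\langle l, u \rangle$ noted after \cref{lem:trigger.in.pnueli}) that its disjunction locates the last $\phi^{\geq 1}$-event in the relevant window and enforces one of two dual cases, exactly analogous to Cases (1')/(2') of \cref{subsec:general}. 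Just as \cref{fig:simplified} collapses the large Pnueli disjunction for $\eventually_{(l, u)} \phi$ into a single two-clock component $\ta{}$, I would build the dual component $\ta{}$ for the release obligation that, rather than guessing a witnessing $\overline{\phi_2}$-event, checks that $\overline{\phi_2}$ holds throughout the window until $\overline{\phi_1}$ releases it, using the clocks $x$ and $y$ to enforce the ``gap $< u - l$'' chaining directly.

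Next I would reuse the sequentialisation apparatus unchanged: the split triggers $p_\psi^1, \dots, p_\psi^n$, the auxiliary labels $q_\psi^1, \dots, q_\psi^n$ placed on the $\overline{\phi_2}$-transitions of each component, and the two untimed monitors $\mathcal{A}^\textit{in}$ and $\mathcal{A}^\textit{out}$ (\cref{fig:seq.in,fig:seq.out}) that force obligations to be allocated to the component $\ta{s}$ in a circular, non-overlapping order (\cref{fig:sequentialised.obligations}). The correctness of this apparatus is insensitive to whether the underlying obligation is existential or universal---it only governs the order in which obligations enter and leave the components---so the characterisation of the reachable configurations (the non-$s_0$ components form a consecutive circular block, with at most one component in $s_1$ or $s_2$) transfers verbatim from the proof of \cref{lem:until.sequentialisation}.

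Finally I would repeat the counting: without the monitors the product of the $n$ four-location components has $4^n$ locations, whereas sequentialisation restricts the reachable set to at most $1 + 3n^2 = \mathcal{O}(n^2)$ locations; each of the $n$ components uses $2$ clocks, giving $2n$ clocks in total.

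The main obstacle I expect is verifying that this dualised component $\ta{}$ is a sound and complete tester for the universal (``globally'') obligation---in particular, that the $x$/$y$ clock-constraint chaining faithfully realises Case (2') of \cref{subsec:general} when the obligation is ``$\overline{\phi_2}$ holds continuously until released'' rather than ``a single $\overline{\phi_2}$-witness exists''. Once this dual component is shown language-equivalent to the per-obligation constraint extracted from \cref{lem:releaseinpnueli}, the remainder is a routine transcription of the proof of \cref{lem:until.sequentialisation}.
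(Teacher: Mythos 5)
Your proposal matches the paper's own argument, which for this lemma is literally ``Same as the proof of \cref{lem:until.sequentialisation}'': the component $\ta{}$ for the release obligation (already anticipated in \cref{subsec:general}, where the paper notes the Case (1')/(2') analysis carries over to $\release_{\langle l, u \rangle}$ via \cref{lem:releaseinpnueli}), the unchanged sequentialisation monitors $\mathcal{A}^\textit{in}$ and $\mathcal{A}^\textit{out}$, and the $1 + 3n^2$ location count with $2n$ clocks are exactly what the paper relies on. Your flagged concern about the correctness of the dualised component is legitimate diligence, but the paper leaves that detail implicit as well.
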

\begin{proof}
Same as the proof of~\cref{lem:until.sequentialisation}.
\end{proof}
Through~\cref{lem:reversal} and with a bit of extra care, we can obtain the tester $\ta{s}$ also for the past $\mitl{}$ modalities with general time intervals.
\begin{lemma}
For a subformula of the form $\phi_1 \since_I \phi_2$ with $I = \langle l, u \rangle$ where $n = \lfloor \frac{u}{u - l} \rfloor + 2$, we can construct a tester $\ta{}$ with at most $\mathcal{O}(n^2)$  locations and $2n$ clocks. 
\end{lemma}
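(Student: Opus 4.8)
The plan is to exploit the reflection principle established in \cref{sec:past.and.pnueli}---that a past modality holds at a position iff its finite-word future counterpart holds on the reversed prefix---together with the reversal closure of \cref{lem:reversal}. Since $\phi_1 \since_I \phi_2$ is the past counterpart of $\phi_1 \until_I \phi_2$, a tester $\ta{}$ for $\globally(p_\psi \implies \phi_1 \since_I \phi_2)$ can be obtained by `reversing' a finite-word $\ta{}$ accepting the language of the corresponding finite-word future formula built from $\phi_1 \until_I \phi_2$, whose tester $\ta{}$ is already available from \cref{lem:until.sequentialisation}. Thus I would take the future tester---the product of $n$ component $\ta{s}$ (each of the form in \cref{fig:simplified}) together with the sequentialisation automata $\mathcal{A}^\textit{in}$ and $\mathcal{A}^\textit{out}$, which has $\mathcal{O}(n^2)$ reachable locations and $2n$ clocks---and reverse it, taking care to preserve these bounds.

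The first step is to reverse at the level of the \emph{individual} component $\ta{s}$ rather than the monolithic product. Each component $\ta{}$ of \cref{fig:simplified} has only the four locations $s_0, s_0', s_1, s_2$ and uses only the two clocks $x, y$, with a very regular pattern of guards and resets. Instead of invoking the generic bit-array construction of \cref{lem:reversal} verbatim, I would hand-craft the reverse of each component directly: reverse the arrows, swap the roles of the initial and (single) final location, and swap clock constraints with resets---exactly as was done for $\phi_1 \since_{\geq l} \phi_2$ in \cref{fig:since.l.infty} and for $\PnO_{<u}$ in \cref{fig:pnueli.past}. Because each component exhibits only a bounded, repetitive set of reset/guard patterns, the bit-array of \cref{lem:reversal} collapses to a constant, and the two forward clocks $x, y$ map onto the two clocks needed by the reverse component; hence each reversed component again uses $2$ clocks and a bounded number of locations. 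I would also first massage the forward tester so that the \emph{without-loss-of-generality} preconditions of \cref{lem:reversal} (every transition from $s_0$ resets all clocks, a single final location) are met.

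Next I would reverse the sequentialisation automata. Since $\mathcal{A}^\textit{in}$ and $\mathcal{A}^\textit{out}$ are \emph{untimed} finite automata, their reversal introduces neither clocks nor bit-array overhead; one only needs to check that the sequentialisation invariants---that the non-$s_0$ components form a circular consecutive block and at most one component is in an `active' location ($s_1$ or $s_2$)---are preserved when obligations are allocated in reverse order. Forming the product of the reversed components with the reversed sequentialisers and rerunning the reachability argument of \cref{lem:until.sequentialisation} in reverse, the number of reachable locations remains $\mathcal{O}(n^2)$ and the clock count remains $2n$. By \cref{lem:reversal}, this reversed $\ta{}$ accepts $\sem{\mathcal{A}}^R$, which is precisely the finite-word language of the past modality $\phi_1 \since_I \phi_2$, yielding the desired tester $\ta{}$.

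I expect the main obstacle to be exactly the `extra care' needed to stop the reversal from blowing up: applied naively, the construction of \cref{lem:reversal} yields up to $2^{|X| \cdot |\Delta|}$ locations and $2 \cdot |X| \cdot |\Delta|$ fresh clocks, which would destroy both the $\mathcal{O}(n^2)$ location bound and the $2n$ clock bound. The real work is therefore to argue, using the specific structure of the component $\ta{s}$ and the untimed nature of the sequentialisers, that the bit-array is redundant here---much as in the remark following \cref{lem:reversal}, where $x \geq l$ can only occur after a reset $x := 0$, so no bookkeeping of which constraints were `hit' is required. Once this redundancy is established, the reverse of each small component can be written down explicitly within the same clock and location budget, and the product bound follows as in \cref{lem:until.sequentialisation}.
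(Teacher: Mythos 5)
Your proposal follows essentially the same route as the paper: the paper's entire proof is ``Same as the proof of \cref{lem:until.sequentialisation}'', prefaced by the remark that the past testers are obtained ``through \cref{lem:reversal} and with a bit of extra care'', i.e.\ by reversing the sequentialised component-based construction for $\until_I$ while exploiting the components' simple structure to avoid the generic bit-array and clock blow-up. Your write-up in fact supplies more detail than the paper does about what that ``extra care'' amounts to (component-wise reversal, collapse of the bit-array, reversal of the untimed sequentialisers), and it is consistent with how the paper handles the analogous reversals in \cref{fig:since.l.infty,fig:pnueli.past}.
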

\begin{proof}
Same as the proof of~\cref{lem:until.sequentialisation}.
\end{proof}
\begin{lemma}
For a subformula of the form $\phi_1 \trigger_I \phi_2$ with $I = \langle l, u \rangle$ where $n = \lfloor \frac{u}{u - l} \rfloor + 2$, we can construct a tester $\ta{}$ with at most $\mathcal{O}(n^2)$ locations and $2n$ clocks.
\end{lemma}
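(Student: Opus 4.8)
The plan is to treat $\trigger_I$ as the temporal reflection of $\release_I$, exactly as $\since_I$ reflects $\until_I$, and then reuse the general-interval release construction via \cref{lem:reversal}. First I would invoke \cref{lem:trigger.in.pnueli} to rewrite $\phi_1 \trigger_{\langle l, u \rangle} \phi_2$ as the disjunction of $\phi_1 \trigger_{> k} \phi_2$ with a bounded family of past Pnueli modalities $\PnO$; as noted after \cref{lem:trigger.in.pnueli}, the relevant index $\ell$ ranges over $\{0, \dots, \lfloor \frac{l}{u-l}\rfloor + 1\}$, so only finitely many component obligations arise. This is the exact past mirror of the rewriting used for $\release_I$ in the preceding lemma, and it confirms that the combinatorial structure of the obligations is identical, merely oriented towards the past.

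Second, rather than instantiating these formulae, I would build the tester directly by reversing the release tester $\ta{}$ produced in the proof of the release analogue of \cref{lem:until.sequentialisation}. Because that tester is assembled from the simple component $\ta{s}$ of the shape in \cref{fig:simplified} (each with a constant number of locations and two clocks $x, y$) together with two untimed sequentialiser automata, its reversal is obtained---just as for the unilateral past modalities and for $\PnO$ in \cref{subsec:tester.for.pnueli}---by flipping the direction of every transition and swapping the roles of clock constraints and resets. The simple per-obligation structure guarantees that a lower-bound guard can only be tested after the corresponding reset, so no bit-array bookkeeping is required and the generic blow-up of \cref{lem:reversal} is avoided; the reversed components again have a constant number of locations and two clocks each.

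Third, I would observe that the sequentialisation argument transfers verbatim. The proof of \cref{lem:until.sequentialisation} bounds the reachable configurations by a purely combinatorial argument: the active components always form a (circular) consecutive block of indices, with at most one of them in an intermediate location such as $s_1$ or $s_2$. This reasoning depends only on the order in which obligations are opened and discharged, not on whether time runs forwards or backwards, so the reversed sequentialiser automata enforce the same invariant and yield at most $1 + 3n^2$ reachable locations, with $n = \lfloor \frac{u}{u-l}\rfloor + 2$ components each contributing two clocks, for a total of $2n$.

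The main obstacle I anticipate is the interaction between reversal and sequentialisation: I must check that reversing the sequentialiser automata still enforces non-overlap of the trigger sequences and of the $\overline{\phi_2}$-sequences in the past direction, and that the ``extra care'' alluded to for $\since_I$---correctly aligning acceptance conditions and the handling of the initial instant after time reversal---carries over. Since these automata are untimed and their invariants are symmetric under reversal, I expect this step to be routine, giving the stated $\mathcal{O}(n^2)$ locations and $2n$ clocks.
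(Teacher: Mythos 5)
Your proposal is correct and follows essentially the same route as the paper: the paper obtains the tester for $\phi_1 \trigger_{\langle l, u\rangle} \phi_2$ by appealing to \cref{lem:reversal} (``with a bit of extra care'', i.e.\ exploiting the simple per-obligation structure to avoid the generic bit-array blow-up) and then reuses the sequentialisation count from \cref{lem:until.sequentialisation} verbatim, which is exactly your three-step plan. Your write-up is merely more explicit than the paper's one-line proof about why the reversal preserves the constant-size components and why the $1 + 3n^2$ configuration count is direction-agnostic.
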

\begin{proof}
Same as the proof of~\cref{lem:until.sequentialisation}.
\end{proof}

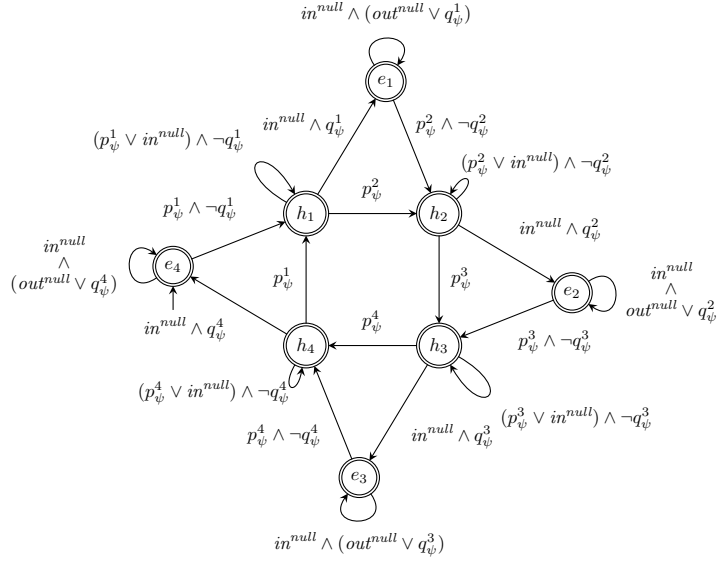
\begin{figure}[!htbp]
\centering
    \begin{tikzpicture}[->, node distance=2.5cm, transform shape, scale=0.7]
        \node[accepting, state](h1){$h_1$};
        \node[accepting, state, above of=h1, xshift=1.5cm](e1){$e_1$};
        \node[accepting, state, right of=h1](h2){$h_2$};
        \node[accepting, state, right of=h2, yshift=-1.5cm](e2){$e_2$};
        \node[accepting, state, below of=h2](h3){$h_3$};
        \node[accepting, state, below of=h3, xshift=-1.5cm](e3){$e_3$};
        \node[accepting, state, left of=h3](h4){$h_4$};
        \node[accepting, initial below,state, left of=h4, yshift=1.5cm](e4){$e_4$};
        \path
        (h1) edge[loopabove, out=150, in = 120, looseness=15] node[above left=1mm and 1mm, align=center]{$(p^1_\psi \lor \textit{in}^\textit{null}) \land \neg q^1_\psi$} (h1)
        (h1) edge[->] node[above left=.5mm and -1mm, align=center]{$\textit{in}^\textit{null} \land q^1_\psi$} (e1)
        (h1) edge[->] node[above=1mm, align=center]{$p^2_\psi$} (h2)
        (e1) edge[loopabove] node[above=1mm, align=center]{$\textit{in}^\textit{null} \land (\textit{out}^\textit{null} \lor q^1_\psi)$} (e1)
        (e1) edge[->] node[above right=.5mm and -.5mm, align=center]{$p^2_\psi \land \neg q^2_\psi$} (h2);
        
        \path
        (h2) edge[loopabove, out=40, in = 60, looseness=10] node[above right=-.5mm and -2.5mm, align=center]{$(p^2_\psi \lor \textit{in}^\textit{null}) \land \neg q^2_\psi$} (h2)
        (h2) edge[->] node[above right=1mm and 1mm, align=center]{$\textit{in}^\textit{null} \land q^2_\psi$} (e2)
        (h2) edge[->] node[right=1mm, align=center]{$p^3_\psi$} (h3)
        (e2) edge[loopright] node[right=.5mm, align=center]{\shortstack{$\textit{in}^\textit{null}$\\ $\land$ \\$\textit{out}^\textit{null} \lor q^2_\psi$}} (e2)
        (e2) edge[->] node[below right=1mm and 1mm, align=center]{$p^3_\psi \land \neg q^3_\psi$} (h3);
        
        \path
        (h3) edge[loopabove, out=-30, in = -60, looseness=15] node[below right=1mm and 1mm, align=center]{$(p^3_\psi \lor \textit{in}^\textit{null}) \land \neg q^3_\psi$} (h3)
        (h3) edge[->] node[below right=1mm and 1mm, align=center]{$\textit{in}^\textit{null} \land q^3_\psi$} (e3)
        (h3) edge[->] node[above=1mm, align=center]{$p^4_\psi$} (h4)
        (e3) edge[loopbelow] node[below=1mm, align=center]{$\textit{in}^\textit{null} \land (\textit{out}^\textit{null} \lor q^3_\psi)$} (e3)
        (e3) edge[->] node[below left=1mm and 1mm, align=center]{$p^4_\psi \land \neg q^4_\psi$} (h4);
        
        \path
        (h4) edge[loopabove, out=-120, in = -100, looseness=10] node[below left=-2.5mm and -1mm, align=center]{$(p^4_\psi \lor \textit{in}^\textit{null}) \land \neg q^4_\psi$} (h4)
        (h4) edge[->] node[below left=1mm and 1mm, align=center]{$\textit{in}^\textit{null} \land q^4_\psi$} (e4)
        (h4) edge[->] node[left=1mm, align=center]{$p^1_\psi$} (h1)
        (e4) edge[loopleft] node[left=1mm, align=center]{\shortstack{$\textit{in}^\textit{null}$\\ $\land$\\ $(\textit{out}^\textit{null} \lor q^4_\psi)$}} (e4)
        (e4) edge[->] node[above left=2.5mm and -1mm, align=center]{$p^1_\psi \land \neg q^1_\psi$} (h1);
    \end{tikzpicture}
\captionof{figure}{$\mathcal{A}^\textit{in}$ for $n = 4$.}
    \label{fig:seq.in}
\end{figure}    
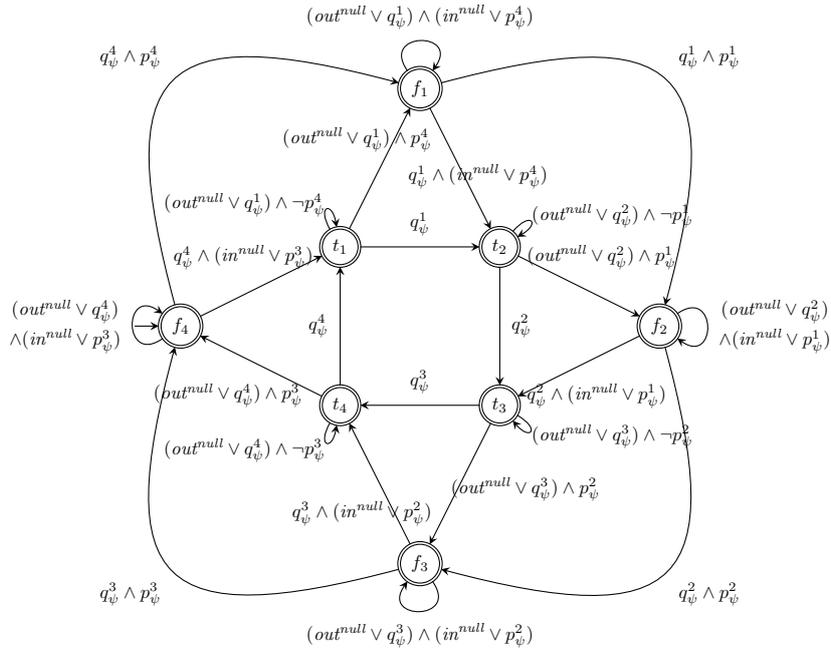
\begin{figure}[!htbp]
\centering
    \begin{tikzpicture}[->, node distance=3cm, transform shape, scale=0.7]
        \node[accepting, state](t1){$t_1$};
        \node[accepting, state, above of=t1, xshift=1.5cm](f1){$f_1$};
        \node[accepting, state, right of=t1](t2){$t_2$};
        \node[accepting, state, right of=t2, yshift=-1.5cm](f2){$f_2$};
        \node[accepting, state, below of=t2](t3){$t_3$};
        \node[accepting, state, below of=t3, xshift=-1.5cm](f3){$f_3$};
        \node[accepting, state, left of=t3](t4){$t_4$};
        \node[accepting, initial left,state, left of=t4, yshift=1.5cm](f4){$f_4$};
        \path
        (t1) edge[loopabove, out=120, in = 100, looseness=10] node[above left=-2.5mm and -1mm, align=center]{$(\textit{out}^\textit{null} \lor q^1_\psi) \land \neg p^4_\psi$} (t1)
        (t1) edge[->] node[above left=2mm and -11mm, align=center]{$(\textit{out}^\textit{null} \lor q^1_\psi) \land p^4_\psi$} (f1)
        (t1) edge[->] node[above=1mm, align=center]{$q^1_\psi$} (t2)
        (f1) edge[loopabove] node[above=1mm, align=center]{$(\textit{out}^\textit{null} \lor q^1_\psi) \land (\textit{in}^\textit{null} \lor p^4_\psi)$} (f1)
        (f1) edge[->] node[above right=-5mm and -11mm, align=center]{$q^1_\psi \land (\textit{in}^\textit{null} \lor p^4_\psi)$} (t2)
        (f1) edge[->, bend left=60, looseness=2] node[above right=1mm and 1mm, align=center]{$q^1_\psi \land p^1_\psi$} (f2);
        
        \path
        (t2) edge[loopabove, out=50, in = 30, looseness=10] node[above right=-2.5mm and -1mm, align=center]{$(\textit{out}^\textit{null} \lor q^2_\psi) \land \neg p^1_\psi$} (t2)
        (t2) edge[->] node[above right=2mm and -11mm, align=center]{$(\textit{out}^\textit{null} \lor q^2_\psi) \land p^1_\psi$} (f2)
        (t2) edge[->] node[right=1mm, align=center]{$q^2_\psi$} (t3)
        (f2) edge[loopright] node[right=1mm, align=center]{\shortstack{$(\textit{out}^\textit{null} \lor q^2_\psi)$ \\ $\land (\textit{in}^\textit{null} \lor p^1_\psi)$}} (f2)
        (f2) edge[->] node[below right=2mm and -11mm, align=center]{$q^2_\psi \land (\textit{in}^\textit{null} \lor p^1_\psi)$} (t3)
        (f2) edge[->, bend left=60, looseness=2] node[below right=1mm and 1mm, align=center]{$q^2_\psi \land p^2_\psi$} (f3);
        
        \path
        (t3) edge[loopabove, out=-30, in = -50, looseness=10] node[below right=-2.5mm and -1mm, align=center]{$(\textit{out}^\textit{null} \lor q^3_\psi) \land \neg p^2_\psi$} (t3)
        (t3) edge[->] node[below right=-2.5mm and -3mm, align=center]{$(\textit{out}^\textit{null} \lor q^3_\psi) \land p^2_\psi$} (f3)
        (t3) edge[->] node[above=1mm, align=center]{$q^3_\psi$} (t4)
        (f3) edge[loopbelow] node[below=1mm, align=center]{$(\textit{out}^\textit{null} \lor q^3_\psi) \land (\textit{in}^\textit{null} \lor p^2_\psi)$} (f3)
        (f3) edge[->] node[below left=2mm and -11mm, align=center]{$q^3_\psi \land (\textit{in}^\textit{null} \lor p^2_\psi)$} (t4)
        (f3) edge[->, bend left=60, looseness=2] node[below left=1mm and 1mm, align=center]{$q^3_\psi \land p^3_\psi$} (f4);
        
        \path
        (t4) edge[loopabove, out=-120, in = -100, looseness=10] node[below left=-2mm and -1mm, align=center]{$(\textit{out}^\textit{null} \lor q^4_\psi) \land \neg p^3_\psi$} (t4)
        (t4) edge[->] node[below left=2mm and -9mm, align=center]{$(\textit{out}^\textit{null} \lor q^4_\psi) \land p^3_\psi$} (f4)
        (t4) edge[->] node[left=1mm, align=center]{$q^4_\psi$} (t1)
        (f4) edge[loopleft] node[left=1mm, align=center]{\shortstack{$(\textit{out}^\textit{null} \lor q^4_\psi)$ \\$ \land (\textit{in}^\textit{null} \lor p^3_\psi)$}} (f4)
        (f4) edge[->] node[above left=2mm and -11mm, align=center]{$q^4_\psi \land (\textit{in}^\textit{null} \lor p^3_\psi)$} (t1)
        (f4) edge[->, bend left=60, looseness=2] node[above left=1mm and 1mm, align=center]{$q^4_\psi \land p^4_\psi$} (f1);
        
    \end{tikzpicture}
\captionof{figure}{$\mathcal{A}^\textit{out}$ for $n = 4$.}
    \label{fig:seq.out}
\end{figure}

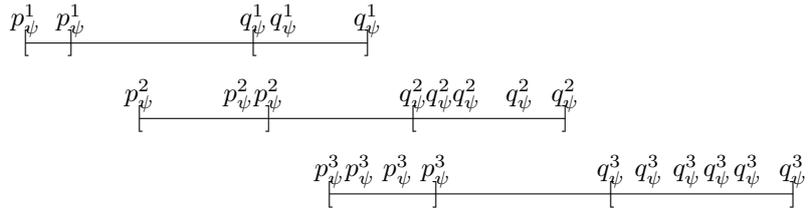
\begin{figure}[!htbp]
  \centering
  \scalebox{1}{\begin{tikzpicture}[node distance=3mm]
    \node(1) {$[$};
    \node[above of=1] () {$p^1_\psi$};
    \node[right of=1,node distance=0.6cm](1') {$]$};
    \node[above of=1'] () {$p^1_\psi$};

    \node[right of=1,node distance=3cm](1a) {$[$};
    \node[above of=1a] () {$q^1_\psi$};
    \node[right of=1a,node distance=2cm,xshift=-0.5cm](1b) {$]$};
    \node[above of=1b, xshift=-1.1cm] () {$q^1_\psi$};
    \node[above of=1b] () {$q^1_\psi$};

    \draw (1.center) -- (1b.center);

    \node[below of=1,node distance=1cm,xshift=1.5cm](2) {$[$}; 
    \node[above of=2] () {$p^2_\psi$};
    \node[right of=2,node distance=1.7cm](2') {$]$};
    \node[above of=2', xshift=-0.4cm] () {$p^2_\psi$};
    \node[above of=2'] () {$p^2_\psi$};

    \node[right of=2,node distance=3.6cm](2a) {$[$}; 
    \node[above of=2a] () {$q^2_\psi$};
    \node[right of=2a,node distance=2cm](2b)  {$]$}; 
    \node[above of=2b, xshift=-1.65cm] () {$q^2_\psi$};
    \node[above of=2b, xshift=-1.3cm] () {$q^2_\psi$};
    \node[above of=2b, xshift=-0.6cm] () {$q^2_\psi$};
    \node[above of=2b] () {$q^2_\psi$};

    \draw (2.center) -- (2b.center);

    \node[below of=2,node distance=1cm,xshift=2.5cm](3) {$[$}; 
    \node[above of=3] () {$p^3_\psi$};
    \node[right of=3,node distance=1.4cm](3') {$]$};
    \node[above of=3', xshift=-1.0cm] () {$p^3_\psi$};
    \node[above of=3', xshift=-0.5cm] () {$p^3_\psi$};
    \node[above of=3'] () {$p^3_\psi$};

    \node[right of=3,node distance=3.7cm](3a) {$[$}; 
    \node[above of=3a] () {$q^3_\psi$};
    \node[right of=3a,node distance=2cm, xshift=0.4cm](3b)  {$]$}; 
    \node[above of=3b,xshift=-1.9cm] () {$q^3_\psi$};
    \node[above of=3b,xshift=-1.4cm] () {$q^3_\psi$};
    \node[above of=3b,xshift=-1.0cm] () {$q^3_\psi$};
    \node[above of=3b,xshift=-0.6cm] () {$q^3_\psi$};
    \node[above of=3b] () {$q^3_\psi$};

    \draw (3.center) -- (3b.center);

  \end{tikzpicture}}
  \caption{An example of three obligations sequentialised.}
  \label{fig:sequentialised.obligations}
\end{figure}

\paragraph{Comparison with $\mightyl{}$.}
The construction for $\phi_1 \until_{\langle l, u \rangle} \phi_2$ in \citep{brihaye2013mitl, brihaye2014mitl} merges obligations and uses  clock values to represent them. It does not distinguish 
the types of obligations, but the number of clocks it uses is roughly $4n$
(where $n = \lfloor \frac{u}{u - l} \rfloor + 2$).  
$\mightyl{}$~\citep{DBLP:conf/cav/BrihayeGHM17} uses only roughly $2n$
clocks for $\phi_1 \until_{\langle l, u \rangle} \phi_2$, but it distinguishes
different types of obligations (based on how the original obligations overlap with each other) and stores this information in the locations, resulting in an exponential blow-up in $n$.
Both of these constructions support only the future fragment, but a more crucial technical difference with our construction is that in these constructions, \emph{each obligation is satisfied by a single $\overline{\phi_2}$-event}, where in our construction \emph{each obligation is satisfied by a single $\overline{\phi_2}$-event, or a sequence of $\overline{\phi_2}$ events}; this additional generality enables us to handle the obligations independently.
Along with sequentialisation, our construction uses the same number of clocks but with exponentially less number of reachable locations. %

\paragraph{Comparison with other constructions.}
The original construction for $\phi_1 \until_{\langle l, u \rangle} \phi_2$~\citep{AFH96}, and
a later construction formulated in terms of \emph{timed signal transducers}~\citep{maler2006mitl},
are both based on the idea of \emph{guessing the exact instants when $\overline{\phi_2}$
starts (or ends) in the future}.
This capability, unfortunately, is not supported in standard $\ta{s}$ over timed words (without $\epsilon$-transitions~\citep{berard1998characterization}) as $\ta{s}$ can only reset clocks on events. 
The construction of \citep{akshay2024mitl} is based on essentially the same idea as~\citep{maler2006mitl} but formulated in terms of \emph{generalized timed automata} ($\gta{s}$)~\citep{akshay2023unified}, which support the missing capability discussed above through \emph{future clocks}.
By contrast, our construction is based on standard $\ta{s}$ and  independent of this peculiar capability (in light of~\cref{lem:reversal}, our construction can be seen as reminiscent of a compositional `reverse' of~\citep{maler2005real}). In fact, given the results of this section, $\gta{s}$ do not appear to offer significant advantages over standard $\ta{s}$ for $\mitl$ satisfiability  and model checking, as suggested in~\citep{akshay2024mitl}, especially considering the more mature and widely adopted tool support available for the latter.
Finally, to the best of our knowledge, these constructions have never been implemented.

\label{sec:e}
\section{Implementation and Experiments}
\label{sec:exp}
We give a detailed description of our experimental evaluation in this section. 

\subsection{Implementation}
\label{subsec:imp}

\paragraph{Overview.} The tool $\mightyppl{}$ is a command-line program written in C\texttt{++}17.
Given an $\mitlpp{}$ formula, $\mightyppl{}$ can output tester $\ta{s}$
(or component $\ta{s}$, if a tester $\ta{}$ comprises of multiple component $\ta{s}$) individually, or generate the synchronous product of all tester $\ta{s}$
(the `flattened' model) as a single monolithic $\ta{}$. In both cases, $\mightyppl{}$ can output  
in \textsc{TChecker} or \textsc{Uppaal} format, 
enabling satisfiability and model checking of $\mitlpp{}$ using these tools as the
back-ends
(\textsc{Uppaal} models can be used with \textsc{LTSmin} for multi-core model checking~\citep{laarman2013multi}), or the user can use a built-in implementation of the standard backward (nested) fixpoint algorithm~\citep{henzinger1994symbolic}
(based on \textsc{MoniTAal}~\citep{Monitaal} and PARDIBAAL~\citep{Pardibaal})
for satisfiability and model checking.
The system is depicted in~\cref{fig:system}.
The source code repository is hosted on GitHub at
\begin{center}
\url{https://github.com/hsimho/MightyPPL}
\end{center}
We compiled $\mightyppl{}$ with commit \href{https://github.com/DEIS-Tools/MoniTAal/commit/4bbf3cc1b199c881e99e0d7f71dffefd969c13a2}{\texttt{4bbf3cc}} of \textsc{MoniTAal},
commit \href{https://github.com/DEIS-Tools/PARDIBAAL/commit/1c8f7c2f64cd2febe0d7874862c768d627998d31}{\texttt{1c8f7c2}} of PARDIBAAL, and
commit \href{https://github.com/ssoelvsten/buddy/commit/5aca063a4b2e90352480f3dd24daeb6dbefa2d33}{\texttt{5aca063}} of BuDDy.
\paragraph{Symbolic alphabets and synchronisation.}
Our construction from $\mitlpp{}$ to $\ta{s}$
is formulated in terms of $\ta{s}$ over Boolean combinations of atomic propositions;
however, this representation is not directly supported by our target back-end tools.
While tools like  Spot~\citep{duret2022spot} and \texttt{Owl}~\citep{kvretinsky2018owl} support symbolic alphabets, 
\textsc{Uppaal} and \textsc{TChecker} only
support \emph{synchronisers}, which are inadequate for our purpose.
In $\mightyl{}$, this issue is circumvented by using Boolean variables that are non-determistically set to $\top$ or $\bot$ as atomic propositions in a round-robin `cascade product', where a cyclic counter is incremented for each transition taken in a tester $\ta{}$, i.e.~a synchronised transition is realised as a sequence of steps, one for each individual tester $\ta{}$.
$\mightyppl{}$ works similarly in the individual tester / component mode, but it also uses wildcard `$\ast$' values to effectively force the back-end tools to do `symbolic execution' (rather than `exhaustive testing' as in $\mightyl{}$).
Alternatively, $\mightyppl{}$ uses BDDs to synchronise transitions of tester / component $\ta{s}$ into joint transitions in monolithic $\ta{s}$.

\paragraph{Forward and backward reachability analysis of tester $\ta{s}$.}
Symbolic letters are particularly useful
when explicitly constructing a flattened model: 
$\mightyppl{}$ considers all combinations of transitions in component $\ta{s}$ and eliminates \emph{conflicting} joint transitions based on symbolic letters.
Specifically, $\mightyppl{}$ constructs only the locations and transitions that are forward reachable (without considering timing constraints) from the initial location of the synchronous product. 
$\mightyppl{}$ also uses a backward reachability
analysis to rule out locations with no B\"uchi accepting runs.
As demonstrated in the experiments below,
these simple optimisations often results in monolithic $\ta{s}$ 
of manageable size that can be analysed by back-end tools.

\paragraph{Using the tool.}
$\mightyppl{}$ can be invoked with the following command:
\begin{center}
\verb/mitppl <in_spec_file> --{fin|inf} [out_file --{tck|xml}]/
\end{center}
where \verb/in_spec_file/ is the $\mitlpp{}$ formula and \texttt{fin} / \texttt{inf} specifies finite / infinite-word acceptance.
If \verb/out_file/ is specified, the format (\texttt{tck} / \texttt{xml}) must be too.
If the option \texttt{-{}-noflatten} is set, the tool will output individual tester / component $\ta{s}$ instead of a flattened model. The option \texttt{-{}-debug} can also be set to see some diagnostic messages. %
For model checking, one can edit the generated files (or the file \texttt{MightyPPL.cpp}) to `plug in' the system model $\mathcal{M}$ (see the `timed lamp' benchmark in the source code repository for an example).

\subsection{Experiments}

   \begin{figure}[t]
   \begin{center}
   \includegraphics[width=14cm]{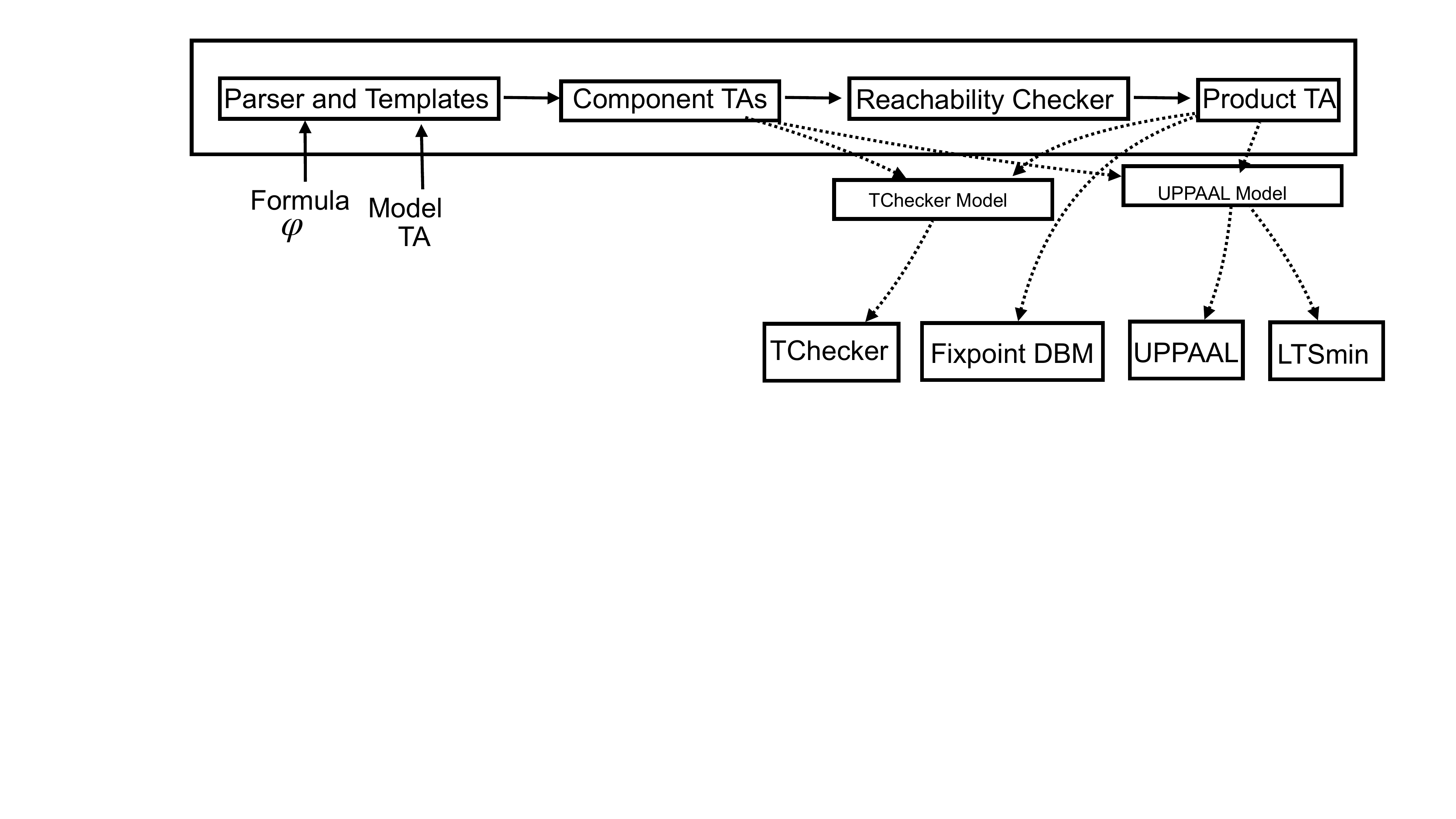}
   \end{center}
    \caption{$\mightyppl{}$ architecture. An arrow from one box to another implies that the output of the source can be used as input for the destination.}
    \label{fig:system}
\end{figure}

We focus on highlighting the main technical advances in \textsc{MightyPPL}, i.e.~the \emph{support for past modalities and Pnueli modalities}, the \emph{improved construction for $\mitl{}$ modalities with general time intervals}, and \emph{symbolic encoding of atomic propositions}---to the best of our knowledge, these features are not currently supported by any other tool for $\mitl{}$ or similar logics.
For benchmarks that involve only future $\mitl{}$ formulae,
we also compare with \textsc{MightyL}. 
 We use the following benchmarks on \emph{satisfiability} and \emph{model checking}:
\begin{enumerate}

    \item Simple parametrised formulae with unilateral intervals, adapted from \citep{DBLP:conf/cav/BrihayeGHM17}. 

\item A set of benchmarks  from Acacia-Bonsai \citep{DBLP:conf/tacas/CadilhacP23} which are also part of the SYNTCOMP 2021 LTL competition suite, adapted by adding time intervals.

  \item A set of benchmarks used in  \citep{dokhanchi2017formal}   
for debugging  $\mitl$ specifications. %

  \item A set of real-world specifications used in robotic missions \citep{menghi1, menghi2, menghi3}. 
  
  \item A food delivery timed path-planning problem, formulated as model checking.   
  
    \item A model checking example on a `timed lamp' $\ta{}$ model, adapted from~\citep{bersani2016tool}. %
  
    \item A model checking example on Fischer's mutual exclusion protocol~\citep{Behrmann2006}. %
    
  \item The pinwheel scheduling problem~\citep{Holte1989pinwheel}, formulated as model checking.
  
\end{enumerate}
All the experiments were conducted on a desktop machine with an Intel i9-13900K CPU and 64GB memory.
To demonstrate the versatility of $\mightyppl{}$, we make use of various back-ends including 
\textsc{Uppaal} (v4.1.19 rev.~5648), \textsc{LTSmin} (\texttt{v3.0.2-147-g07f9bf}), and \textsc{TChecker} (commit \href{https://github.com/ticktac-project/tchecker/commit/43a5763cdf339c4c5c4d1bf55388bb697bc4c13d}{\texttt{43a5763}}). 
Excluding the pinwheel scheduling benchmarks, which specifically evaluate multi-core model checking, we report the total runtime of each toolchain:
each reported runtime includes the time for any necessary model construction, compilation, or flattening.
For experiments that involve \textsc{LTSmin}, the tool was executed on a single thread to ensure a consistent comparison with the other back-end tools (\textsc{Uppaal} and \textsc{TChecker}), which are single-threaded.
In the tables, we use the following notations:
\begin{itemize}
\item `\texttt{comp}' (components)
means outputting tester / component $\ta{s}$ individually;
\item `\texttt{flat}' (flattening)
means outputting monolithic $\ta{s}$ as described in~\cref{subsec:imp};
\item `\texttt{verifyta}' means using \textsc{Uppaal} as the back-end tool (exclusively for finite-word satisfiability); `\texttt{-b}' or `$\texttt{-d}$' means using breadth-first or depth-first search orders, respectively;
\item `\texttt{opaal\_ltsmin}' means using \textsc{LTSmin} (with the \texttt{opaal} front-end to support the \textsc{Uppaal} format) as the back-end tool (exclusively for infinite-word satisfiability);
\item `\texttt{fp}' means using the built-in backward fixpoint algorithm implementation;
\item `\texttt{tck -{}-inf}' or `\texttt{tck -{}-fin}' means
using  \textsc{TChecker} as the back-end tool for infinite- and finite-word satisfiability, respectively.
\end{itemize}

\subsubsection{$\mightyl{}$ benchmarks}
We consider the finite- and infinite-word satisfiability of formulae:
\begin{IEEEeqnarray*}{rClrCl}
F(k,I) & = & \textstyle{\bigwedge_{i=1}^k}\eventually_I p_i \;, \quad & U(k,I) & = & (\cdots (p_1 \until_I p_2)\until_I \cdots)\until_I p_k \;, \\
G(k,I) & = & \textstyle{\bigwedge_{i=1}^k} \globally_I p_i \;, \quad & R(k, I) & = & (\cdots (p_1 \release_I p_2)\release_I \cdots)\release_I p_k \;,
\end{IEEEeqnarray*}
\vspace*{-.8cm}
\begin{IEEEeqnarray*}{rCl}
\mu(k) & = & \textstyle{\bigwedge_{i=1}^k} \eventually_{[3(i - 1), 3i]}t_i \land \globally (\neg p) \;, \\
\theta(k) & = & \neg \textstyle{(\bigwedge_{i=1}^k} \globally \eventually p_i \implies \globally (q \implies \eventually_{[100, 1000]} r))  \;.
\end{IEEEeqnarray*}
Following~\citep{DBLP:conf/cav/BrihayeGHM17}, for $\mu(k)$ we only check finite-word satisfiability  and similarly for $\theta(k)$ we only check infinite-word satisfiability,
due to the nature of these formulae.
To ensure an unbiased comparison with \textsc{MightyL} on this set of benchmarks, we evaluated both tools  using the same back-ends on an identical execution environment: the virtual machine provided at~\citep{MightyLVM} (executed natively through Docker).
The results are in~\cref{tab:parametrisedpart} and plotted in~\cref{fig:plot_1,fig:plot_2,fig:plot_3} (all satisfiable).
Using a log-log scale, each plot compares the $\mightyl{}$ and $\mightyppl{}$ front-ends for a specific back-end tool. Points below the diagonal line show that $\mightyppl{}$ yielded faster runtimes, and points on the edge of the graph represent instances that timed out. %
Thanks to the improved symbolic encoding,  $\mightyppl{}$ performs much better on most testcases regardless of the back-end (\textsc{LTSmin} or \textsc{Uppaal}), sometimes by more than two orders of magnitude (notable exceptions are $F\big(5,[1, 2]\big), F\big(5,[3, 8]\big)$, etc., where \textsc{LTSmin} performs worse due to large branching factors).

\begin{table}[!htbp]                    
\caption{Execution times on the $\mightyl{}$ benchmarks.
 $\mightyppl{}$ is used with \texttt{-{}-noflatten}.
Times are in seconds.
`TO' indicates timeouts (300s) and `ERR' means out-of-memory errors.
} %
\label{tab:parametrisedpart}
\centering
\scalebox{0.8}{
\def\arraystretch{1.2}
\setlength\tabcolsep{2mm}
\begin{tabular}{lrrrrrr}
\toprule %
 & \multirowcell{2}{$\mightyl{}$ \\ \small \texttt{opaal\_ltsmin}}
& \multirowcell{2}{$\mightyl{}$ \\ \small \texttt{verifyta -b}}
& \multirowcell{2}{$\mightyl{}$ \\ \small \texttt{verifyta -d}}
 & \multirowcell{2}{$\mightyppl{}$ \\ \small \texttt{opaal\_ltsmin}}
 & \multirowcell{2}{$\mightyppl{}$ \\ \small \texttt{verifyta -b}}
 & \multirowcell{2}{$\mightyppl{}$ \\ \small \texttt{verifyta -d}} \\
Formula \\
\otoprule

$F\big(5,[0, \infty)\big)$                                      & 2.058    & 0.057    & 0.043          & 1.428         & 0.004              & 0.004           \\
$F\big(5,[0, 2]\big)$                                           & 1.847    & 0.053    & 0.043          & 1.438         & 0.005              & 0.004           \\
$F\big(5,[2, \infty)\big)$                                      & 2.143    & 0.069    & 0.057          & 1.554         & 0.010              & 0.007           \\
$F\big(2,[1, 2]\big)$                                           & 76.575   & 0.949    & 0.929          & 1.474         & 0.022              & 0.020           \\
$F\big(3,[1, 2]\big)$                                           & 122.174  & 1.754    & 1.502          & 26.984        & 0.039              & 0.030           \\
$F\big(5,[1, 2]\big)$                                           & 225.575  & 11.001   & 5.604          & TO            & 0.203              & 0.055           \\
$F\big(5,[3, 8]\big)$                                           & 220.238  & 10.577   & 5.548          & TO            & 0.217              & 0.058          \\
$G\big(5,[0, \infty)\big)$                                      & 1.753    & 0.054    & 0.041          & 0.301         & 0.003              & 0.003           \\
$G\big(5,[0, 2]\big)$                                           & 1.855    & 0.051    & 0.042          & 0.325         & 0.004              & 0.004           \\
$G\big(5,[2, \infty)\big)$                                      & 1.862    & 0.054    & 0.043          & 0.335         & 0.004              & 0.004           \\
$G\big(2,[1, 2]\big)$                                           & 4.442    & 0.015    & 0.014          & 0.911         & 0.013              & 0.012           \\
$G\big(3,[1, 2]\big)$                                           & 1.196    & 0.023    & 0.021          & 1.431         & 0.023              & 0.019            \\
$G\big(5,[1, 2]\big)$                                           & 3.460    & 0.096    & 0.071          & 6.207         & 0.176              & 0.067           \\
$G\big(5,[3, 8]\big)$                                           & 3.426    & 0.091    & 0.070          & 8.860         & 0.291              & 0.083           \\
$U\big(5,[0, \infty)\big)$                                      & 1.006    & 0.025    & 0.029          & 0.384         & 0.004              & 0.004           \\
$U\big(5,[0, 2]\big)$                                           & 0.994    & 0.024    & 0.031          & 0.400         & 0.004              & 0.004           \\
$U\big(5,[2, \infty)\big)$                                      & 1.135    & 0.034    & 0.193          & 0.573         & 0.006              & 0.006           \\
$U\big(2,[1, 2]\big)$                                           & 37.204   & 0.481    & 0.471          & 0.820         & 0.012              & 0.011           \\
$U\big(3,[1, 2]\big)$                                           & 80.216   & 0.986    & 2.242          & 1.458         & 0.022              & 0.020           \\
$U\big(5,[1, 2]\big)$                                           & 171.445  & 7.440    & TO             & TO            & 0.045              & 0.043            \\
$U\big(5,[3, 8]\big)$                                           & 169.950  & 6.773    & TO             & TO            & 0.045              & 0.040          \\
$R\big(5,[0, \infty)\big)$                                      & 0.995    & 0.025    & 0.023          & 0.319         & 0.003              & 0.003           \\
$R\big(5,[0, 2]\big)$                                           & 1.055    & 0.026    & 0.022          & 0.348         & 0.004              & 0.004           \\
$R\big(5,[2, \infty)\big)$                                      & 1.045    & 0.027    & 0.022          & 0.379         & 0.005              & 0.004           \\
$R\big(2,[1, 2]\big)$                                           & 0.545    & 0.009    & 0.008          & 0.589         & 0.008              & 0.007           \\
$R\big(3,[1, 2]\big)$                                           & 0.879    & 0.015    & 0.015          & 1.053         & 0.014              & 0.013             \\
$R\big(5,[1, 2]\big)$                                           & 2.249    & 0.064    & 0.043          & 1.833         & 0.026              & 0.026           \\
$R\big(5,[3, 8]\big)$                                           & 2.223    & 0.061    & 0.044          & 2.454         & 0.038              & 0.037           \\
$\mu\big(1\big)$                                                & -        & 0.003    & 0.003          & -             & 0.003              & 0.003           \\
$\mu\big(2\big)$                                                & -        & 0.518    & 0.479          & -             & 0.012              & 0.012           \\
$\mu\big(3\big)$                                                & -        & 2.261    & 1.649          & -             & 0.101              & 0.027           \\
$\mu\big(4\big)$                                                & -        & 55.616   & 11.338         & -             & 5.469              & 0.069           \\
$\theta\big(1,[100, 1000]\big)$                                 & 0.807    & -        & -              & 0.814         & -                  & -           \\
$\theta\big(2,[100, 1000]\big)$                                 & 2.215    & -        & -              & 0.841         & -                  & -           \\
$\theta\big(3,[100, 1000]\big)$                                 & ERR      & -        & -              & 0.918         & -                  & -           \\
$\theta\big(4,[100, 1000]\big)$                                 & ERR      & -        & -              & 1.989         & -                  & -           \\

\bottomrule     
\end{tabular}     
}     
\end{table}

\begin{figure}
\begin{minipage}[t]{0.32\textwidth}
\centering
   \includegraphics[width=4.9cm]{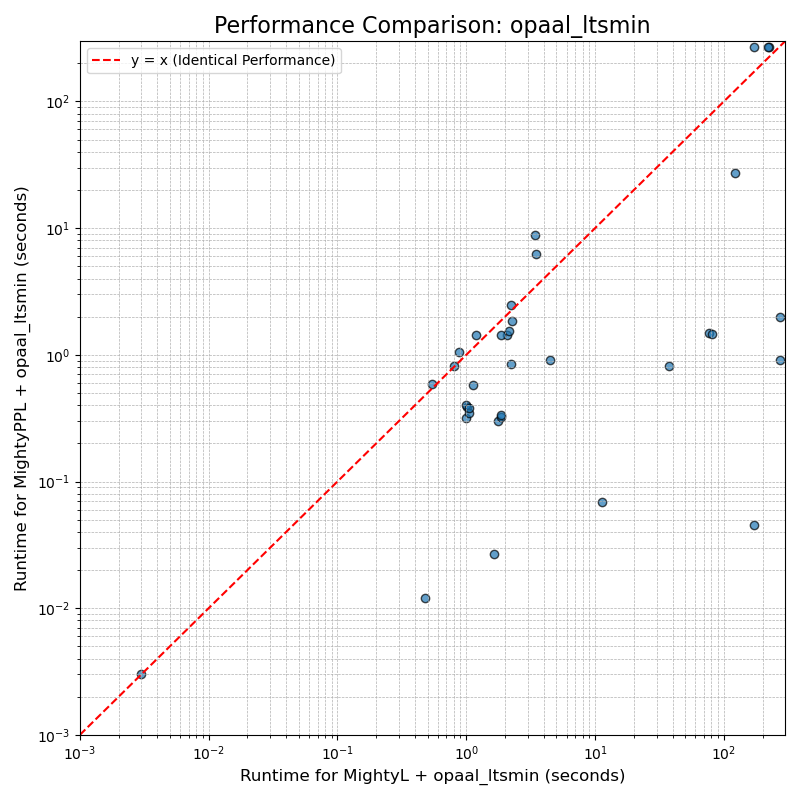}
\captionof{figure}{$\mightyl{}$ benchmarks with \texttt{opaal\_ltsmin} as the back-end.}
\label{fig:plot_1}
\end{minipage}
\hfill
\begin{minipage}[t]{0.32\textwidth}
\centering
   \includegraphics[width=4.9cm]{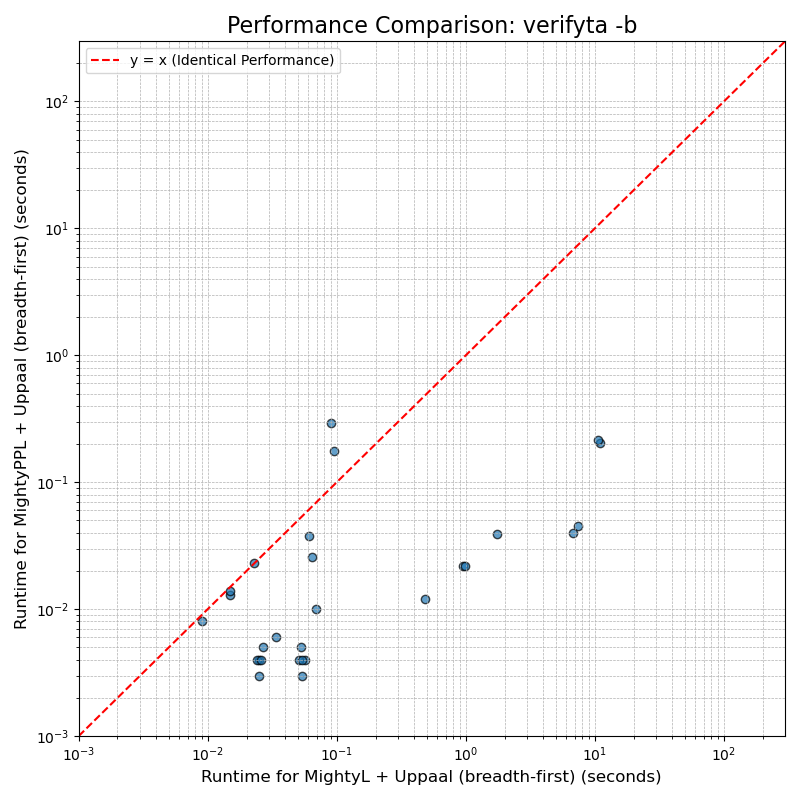}
\captionof{figure}{$\mightyl{}$ benchmarks with \texttt{verifyta -b} as the back-end.}
\label{fig:plot_2}
\end{minipage}
\hfill
\begin{minipage}[t]{0.32\textwidth}
\centering
   \includegraphics[width=4.9cm]{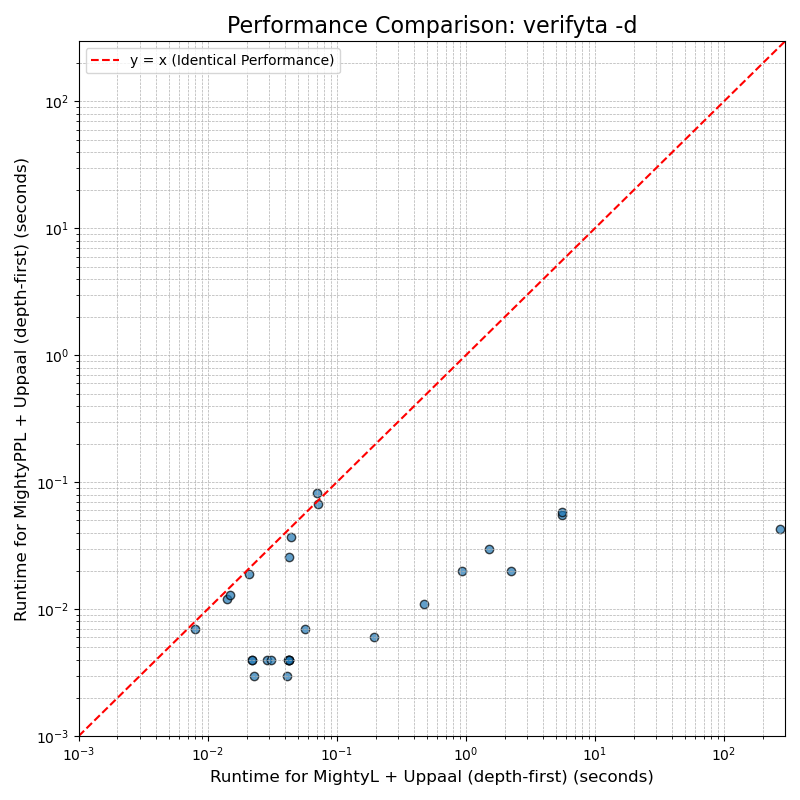}
\captionof{figure}{$\mightyl{}$ benchmarks with \texttt{verifyta -d} as the back-end.}
\label{fig:plot_3}
\end{minipage}
\end{figure}

\subsubsection{Acacia-Bonsai benchmarks}
This set of benchmarks is taken from Acacia-Bonsai \citep{DBLP:conf/tacas/CadilhacP23} which are also part of the SYNTCOMP 2021 LTL competition suite. We consider
the infinite-word satisfiability of $10$ formulae $\varphi_1 \dots \varphi_{10}$ chosen from this suite and modified by adding time intervals:
\begin{IEEEeqnarray*}{rCl}
    \varphi_1&=&((\eventually_{[0, 2]}(\globally p)) \wedge  (\globally(\eventually a))) \vee (\neg(\eventually_{[0, 2]}(\globally p)) \wedge  \neg (\globally(\eventually a))) \;, \\
\varphi_2&=&((\neg(\globally_{(2, \infty)} p)\vee (\eventually_{[0, 2]} q)) \wedge  (\globally (\eventually a))) \vee 
(\neg(\neg(\globally_{(2,\infty)} p)\vee (\eventually_{[0, 2]} q)) \wedge  \neg(\globally (\eventually a))) \;, \\
\varphi_3 &=& (((\globally(\eventually_{[1, 2]}p)) \vee (\globally(\eventually_{[1, 2]}q)) \vee (\globally(\eventually_{[1, 2]}r))) \wedge (\globally(\eventually(a))))\vee\\ 
&&(\neg((\globally(\eventually_{[1, 2]}p))\vee (\globally(\eventually_{[1, 2]}q)) \vee (\globally(\eventually_{[1, 2]}r))) \wedge  \neg (\globally(\eventually(a)))) \;, \\
\varphi_4&=&(\globally(( (p \until_{[1, 2]} q) \until_{[1, 2]} (\neg p) ) \until_{[1, 2]} (\neg r) ) \wedge  \globally( \eventually (a))) \vee\\
&&(\neg (\globally(( (p \until_{[1, 2]} q) \until_{[1, 2]} (\neg p) ) \until_{[1, 2]} (\neg r) )) \wedge  \neg (\globally( \eventually (a)))) \;, \\
\varphi_5&=&(((\globally(\neg p \vee (\eventually_{[1, 4]} q))) \wedge  (\globally(p \vee (\eventually_{[1, 4]} (\neg q))))) \wedge  (\globally(\eventually a))) \wedge  \\&&(\neg((\globally(\neg p \vee (\eventually_{[1, 4]} q))) \wedge  (\globally(p \vee (\eventually_{[1, 4]} (\neg q))))) \wedge  \neg (\globally(\eventually a))) \;, \\
\varphi_6&=&(( (\neg(\globally_{(2, \infty)} p) \vee (\eventually_{[0, 2]} q)) \vee (\neg(\globally_{(2, \infty)} (\neg p)) \vee (\eventually_{[0, 2]} (\neg q)) ) ) \wedge  (\globally(\eventually(a)))) \vee\\
&&(\neg((\neg(\globally_{(2, \infty)} p) \vee (\eventually_{[0, 2]} q)) \wedge  (\neg(\globally_{(2, \infty)} (\neg p)) \vee (\eventually_{[0, 2]} (\neg q)) )) \wedge \neg(\globally(\eventually(a)))) \;, \\
\varphi_7 & = & \resizebox{0.91\hsize}{!}{%
$(((\globally(\eventually_{[2, 4]} p)) \wedge (\globally(\eventually_{[2, 4]} q)) \wedge(\globally(\eventually_{[2, 4]} r)) \wedge (\globally(\eventually_{[2, 4]} s)) \wedge (\globally(\eventually_{[2, 4]} u))) \wedge (\globally(\eventually(a)))) \vee$ }  \\
& & \resizebox{0.91\hsize}{!}{%
$(\neg((\globally(\eventually_{[2, 4]} p)) \wedge (\globally(\eventually_{[2, 4]} q)) \wedge (\globally(\eventually_{[2, 4]} r)) \wedge (\globally(\eventually_{[2, 4]} s)) \wedge (\globally(\eventually_{[2, 4]} u))) \wedge\neg(\globally(\eventually (a))))$ } \;, \\
\varphi_8&=&((\globally_{[0, 2]} r) \wedge (\globally_{[0, 2]}(\neg p \vee (\eventually_{[1, 2]} q))) \wedge (\globally_{[0, 2]}(\neg q \vee (\eventually_{[1, 2]} r)))) \wedge  (\globally(\eventually(a)))) \vee \\
&&(\neg((\globally_{[0, 2]} r) \wedge  (\globally_{[0, 2]}(\neg p || (\eventually_{[1, 2]} q))) \wedge 
(\globally_{[0, 2]}(\neg q \vee (\eventually_{[1, 2]} r)))) \wedge  \neg(\globally(\eventually(a)))) \;, \\
\varphi_9&=& \resizebox{0.91\hsize}{!}{%
$(p \until_{[0, 2]} (q \wedge  (\nex_{[2, 4]}(r \wedge (\eventually_{[2, \infty)}(s \wedge (\nex_{[2, 4]}(\eventually_{[2, \infty)}(u \wedge (\nex_{[2, 4]}(\eventually_{[2, \infty)}(v \wedge  (\nex_{[2, 3]}(\eventually_{[2, \infty)} w)))))))))))))) \vee$ } \\
&& \resizebox{0.91\hsize}{!}{%
$(\neg (p \until_{[0, 2]} (q \wedge  (\nex_{[2, 4]}(r \wedge  (\eventually_{[2, \infty)}(s \wedge (\nex_{[2, 4]}(\eventually_{[2, \infty)}(u \wedge  (\nex_{[2, 4]}(\eventually_{[2, \infty)}(v \wedge  (\nex_{[2, 3]}(\eventually_{[2, \infty)} w)))))))))))))))$ } \;, \\
\varphi_{10}&=& ((\eventually_{[1, 10]}(p \wedge  (\nex_{[2, 4]}(q \wedge (\nex \eventually_{[2, \infty)} r))))) \wedge  (\globally(\eventually(a)))) \vee\\
&&(\neg(\eventually_{[1, 10]}(p \wedge  (\nex_{[2, 4]}(q \wedge  (\nex \eventually_{[2,\infty)} r))))) \wedge  \neg(\globally(\eventually(a)))) \;.
\end{IEEEeqnarray*}
The results are in~\cref{tab:acacia}. For satisfiable formulae clearly $\mightyppl{}$ in the `\texttt{comp}' mode is the faster approach.
For the `\texttt{flat}' mode, the backward fixpoint algorithm performs similarly as \textsc{TChecker}, only notably better on $\varphi_3$.
All the approaches, however, suffer for unsatisfiable formula $\varphi_5$
(a simple remedy is to implement an abstraction that replaces subformulae with atomic propositions; we chose not to as such is outside of the scope of this article). 

\begin{table}[!htb]
 \caption{Execution times on the Acacia benchmarks.  
Times are in seconds. `TO' indicates timeouts (300s).}
\label{tab:acacia}
\centering
\scalebox{0.8}{
\def\arraystretch{1.2}
\setlength\tabcolsep{2mm}
\begin{tabular}{l@{\hspace{2mm}}rrrr}
\toprule 
& 
 & \multirowcell{2}{$\mightyppl{}$ \\ \small \texttt{tck -{}-inf comp}}
 & \multirowcell{2}{$\mightyppl{}$ \\ \small \texttt{tck -{}-inf flat}}
 & \multirowcell{2}{$\mightyppl{}$ \\ \small \texttt{fp -{}-inf}} \\
$\varphi$  & Sat? & \\
\otoprule
$\varphi_1$        &  \cmark      & 0.013      & 0.035      & 0.027      \\ 
$\varphi_2$        &  \cmark      & 0.012      & 0.056      & 0.041      \\ 
$\varphi_3$        &  \cmark      & 0.507      & TO         & 254.192    \\ 
$\varphi_4$        &  \cmark      & 0.369      & TO         & TO         \\
$\varphi_5$        &  \xmark      & TO         & TO         & TO         \\
$\varphi_6$        &  \cmark      & 0.016      & 0.146      & 0.122      \\
$\varphi_7$        &  \cmark      & 2.951      & TO         & TO      \\ 
$\varphi_8$        &  \cmark      & 0.203      & TO         & TO       \\ 
$\varphi_9$        &  \cmark      & 17.015     & TO         & TO      \\ 
$\varphi_{10}$     &  \cmark      & 16.614     & TO         & TO        \\ 
\bottomrule
\end{tabular}
}
\end{table}

\smallskip

\subsubsection{Debugging specifications for cyber-physical systems}
\citep{dokhanchi2017formal} considers the \emph{debugging} problem of formal specifications written as $\mitl{}$ formulae. Namely, a subformula $\phi$ of a given formula $\varphi$ may be
\begin{inparaenum}[(1)]
\item \emph{Trivial}: $\phi$ is unsatisfiable or is a tautology; 
\item  \emph{Redundant}: $\phi$ is implied by the other conjuncts in the same conjunction;
\item \emph{Vacuous}: $\varphi$ implies $\varphi[\bot/\phi]$, where the latter is
    obtained from $\varphi$ by replacing $\phi$ with $\bot$.
\end{inparaenum}
Checking for these can be reduced to verifying the (un-)satisfiability
of some $\varphi'$ derived from $\varphi$ and $\phi$.  
In this set of benchmarks we consider the finite- and infinite-word satisfiability of the following formulae: 
\begin{IEEEeqnarray*}{rCl}
\varphi_1' & =  & \eventually_{[0, 20]} p_1 \land \lnot (\eventually_{[0, 30]} p_1) \;, \\
\varphi_2' & = & \neg \big((p_1 \implies p_1 \land \globally_{[0, 20]} p_1) 
 \lor \eventually_{[0, 30]} (p_1 \implies p_1 \land \globally_{[0, 20]} p_1)\big) \;,  \\
\varphi_3' & = & (p_1 \land \globally_{[0, 40]} p_1) \land \lnot \big( \globally_{[0, 40]} (p_1 \lor \eventually_{[0, 10]} p_1) \big) \;, \\
\varphi_4' & = & (p_2 \lor \eventually_{[0, 40]} p_2) \land \big(\eventually_{[0, 40]} (p_1 \land \globally_{[0, 30]} p_1)\big) 
 \land \neg \big((p_1 \lor p_3) \lor \eventually_{[0, 40]} (p_1 \lor p_3)\big) \;, \\
\varphi_5' & = & \eventually_{[10, 40]} \big( ( p_1 \lor p_3 \implies \eventually_{[0, 20]} p_2) \land \globally_{[0, 30]} p_1 \big) \\
 & &     {}  \land  \neg \eventually_{[10, 40]} \big( ( p_1 \lor \bot \implies \eventually_{[0, 20]} p_2) \land \globally_{[0, 30]} p_1 \big) \;, \\
\varphi_6' & = & \neg \Big((p \land \globally_{[0,40]} p \land \eventually_{[20,40]} \top) \implies \big((p \lor \eventually_{[0, 20]} p) \land \globally_{[0,20]} (p \lor \eventually_{[0,20]} p)\big)\Big) \;.
\end{IEEEeqnarray*}
For each of them we also consider a
modified version, where $\textit{past}(\varphi)$ is obtained by replacing all the future modalities with their past counterparts in $\varphi$.
All these formulae are unsatisfiable, and the results are in~\cref{tab:debugging1} (where we use the identical back-ends and execution environment through the virtual machine) and \cref{tab:debugging2} (where we run $\mightyppl{}$ and \textsc{TChecker} natively).
In comparison with $\mightyl{}$, the models generated by $\mightyppl{}$ in the `\texttt{comp}' mode perform better in general on both \textsc{LTSmin}
and \textsc{Uppaal}, 
with the exception of the infinite-word satisfiability of $\varphi_5'$ where the `\texttt{flat}' mode 
gives better performance.
For the back-ends, \textsc{TChecker} performs better than both \textsc{LTSmin}
and \textsc{Uppaal} on this set of benchmarks, most nobably with the harder cases
$\eventually \textit{past}(\varphi_3')$ 
and
$\eventually \textit{past}(\varphi_6')$. 
The backward fixpoint algorithm sometimes performs even better than \textsc{TChecker}, e.g.,~$\eventually \textit{past}(\varphi_6')$ in the infinite-word case.

\begin{table}[!htb]
\caption{Execution times on the specification debugging benchmarks.
$\mightyppl{}$ is used with \texttt{-{}-noflatten}.
Times are in seconds.
`TO' indicates timeouts (300s) and `-' means unsupported.}
\label{tab:debugging1}
\centering
\scalebox{0.8}{
\def\arraystretch{1.2}
\setlength\tabcolsep{2mm}
\begin{tabular}{lrrrrrr}
\toprule %
 & \multirowcell{2}{$\mightyl{}$ \\ \small \texttt{opaal\_ltsmin}}
& \multirowcell{2}{$\mightyl{}$ \\ \small \texttt{verifyta -b}}
& \multirowcell{2}{$\mightyl{}$ \\ \small \texttt{verifyta -d}}
 & \multirowcell{2}{$\mightyppl{}$ \\ \small \texttt{opaal\_ltsmin}}
 & \multirowcell{2}{$\mightyppl{}$ \\ \small \texttt{verifyta -b}} 
 & \multirowcell{2}{$\mightyppl{}$ \\ \small \texttt{verifyta -d}} \\
Formula \\
\otoprule
$\varphi_1'$                                  & 0.824   & 0.027     & 0.004     &  0.338  & 0.004   & 0.003      \\
$\varphi_2'$                                  & 0.361   & 0.004     & 0.004     &  0.338  & 0.004   & 0.003    \\
$\varphi_3'$                                  & 0.366   & 0.005     & 0.004     &  0.361  & 0.004   & 0.004    \\
$\varphi_4'$                                  & 0.551   & 0.010     & 0.009     &  0.429  & 0.005   & 0.004     \\
$\varphi_5'$                                  & 48.313  & 14.924    & 13.710    &  TO     & 0.165   & 0.230         \\
$\varphi_6'$                                  & 36.697  & 0.572     & 0.555     &  1.037  & 0.019   & 0.020     \\
$\eventually \textit{past}(\varphi_1')$       & -       & -         & -         &  0.895  & 0.005   & 0.004    \\
$\eventually \textit{past}(\varphi_2')$       & -       & -         & -         &  0.562  & 0.005   & 0.004    \\
$\eventually \textit{past}(\varphi_3')$       & -       & -         & -         &  15.590 & 0.008   & 0.007    \\
$\eventually \textit{past}(\varphi_4')$       & -       & -         & -         &  TO     & 0.017   & 0.022      \\
$\eventually \textit{past}(\varphi_5')$       & -       & -         & -         &  TO     & TO      & TO       \\
$\eventually \textit{past}(\varphi_6')$       & -       & -         & -         &  TO     & 13.438  & 19.770     \\
\bottomrule
\end{tabular}
}
\end{table}

\begin{table}[!htb]
\caption{Execution times on the specification debugging benchmarks.
Times are in seconds.
`TO' indicates timeouts (300s).}
\label{tab:debugging2}
\centering
\scalebox{0.8}{
\def\arraystretch{1.2}
\setlength\tabcolsep{2mm}
\begin{tabular}{lrrrrrr}
\toprule %
 & \multirowcell{2}{$\mightyppl{}$ \\ \small \texttt{tck -{}-inf comp}}
 & \multirowcell{2}{$\mightyppl{}$ \\ \small \texttt{tck -{}-inf flat}}
 & \multirowcell{2}{$\mightyppl{}$ \\ \small \texttt{fp -{}-inf}}
 & \multirowcell{2}{$\mightyppl{}$ \\ \small \texttt{tck -{}-fin comp}}
 & \multirowcell{2}{$\mightyppl{}$ \\ \small \texttt{tck -{}-fin flat}}
 & \multirowcell{2}{$\mightyppl{}$ \\ \small \texttt{fp -{}-fin}} \\
Formula \\
\otoprule
$\varphi_1'$                                  & 0.015   & 0.009     & 0.005     & 0.013     & 0.008  & 0.003          \\
$\varphi_2'$                                  & 0.012   & 0.012     & 0.008     & 0.008     & 0.008  & 0.004          \\
$\varphi_3'$                                  & 0.016   & 0.011     & 0.006     & 0.008     & 0.008  & 0.004        \\
$\varphi_4'$                                  & 0.043   & 0.023     & 0.015     & 0.010     & 0.013  & 0.006          \\
$\varphi_5'$                                  & TO      & 15.833    & 15.249    & 0.301     & 0.461  & 0.350          \\
$\varphi_6'$                                  & 0.284   & 0.585     & 1.450     & 0.046     & 0.227  & 0.255          \\

$\eventually \textit{past}(\varphi_1')$       & 0.074   & 0.026     & 0.019     & 0.010     & 0.012  & 0.006            \\
$\eventually \textit{past}(\varphi_2')$       & 0.078   & 0.029     & 0.021     & 0.010     & 0.013  & 0.006            \\
$\eventually \textit{past}(\varphi_3')$       & 1.536   & 0.094     & 0.058     & 0.016     & 0.023  & 0.012           \\
$\eventually \textit{past}(\varphi_4')$       & TO      & 213.522   & 0.209     & 0.036     & 0.060  & 0.032           \\
$\eventually \textit{past}(\varphi_5')$       & TO      & TO        & TO        & TO        & TO     & TO              \\
$\eventually \textit{past}(\varphi_6')$       & TO      & TO        & 26.774    & 7.746     & 7.854  & 1.703             \\
\bottomrule
\end{tabular}
}
\end{table}

\subsubsection{Robotic missions}
This set of benchmarks is adapted from \citep{menghi1, menghi2, menghi3}.
We consider the following specification patterns:
\begin{itemize}
\item \emph{Timed Sequenced Visit / Patrolling \emph{(TSV / TSP)}}: A robot must visit locations $l_1, \ldots l_n$ in this order at least once in the next $m$ time units / in every window of $m$ time units. We denote these patterns by 
$\textup{TSV} = \{ \PnF_{[0,m]}(l_1, l_2, \ldots l_n) \}$
and $\textup{TSP} = \{\globally \PnF_{[0,m]}(l_1, l_2, \ldots l_n)\}$. 
\item \emph{Timed Past Avoidance \emph{(TPA)}}: A robot can visit a certain location $l_1$ only if it has \emph{not} visited a location $l_2$ within $m$ time units in the past. We denote this by $\textup{TPA} = \globally(l_1 \rightarrow \past_{[0,m]} l_2)$. 
\end{itemize}
We consider the following formulae (which are instances of these patterns) and check their infinite-word satisfiability:
\begin{IEEEeqnarray*}{rCl}
\varphi_1'' & = &\globally (l_1 \implies \once_{[0, 2]} l_2) \land \globally (l_3 \implies \once_{[0, 2]} l_4) \;, \\
\varphi_2'' & = & \PnF_{[0, 2]}(p_1, p_2, p_3, p_4, p_5, p_6, p_7) \;, \\
\varphi_3'' & = & \globally \PnF_{[0, 1]}(p_1, p_2, p_3) \;, \\
\varphi_4'' & = &\PnF_{[0, 1]}(p_1, p_2, p_3)
           \land \globally (p_2 \implies \neg \once_{[0, 1]}p_1) \land \globally (p_3 \implies \lnot \once_{[0, 1]} p_2) \;.
\end{IEEEeqnarray*}
The results are in~\cref{tab:scheduling}. 
On this set of benchmarks, \textsc{TChecker}'s performance does not significantly benefit from flattening the model.
The backward fixpoint algorithm also performs well.
\begin{table}[!htb]
 \caption{Execution times on the robotic missions benchmarks.  
Times are in seconds. `TO' indicates timeouts (300s).}
\label{tab:scheduling}
\centering
\scalebox{0.8}{
\def\arraystretch{1.2}
\setlength\tabcolsep{2mm}
\begin{tabular}{ll@{\hspace{2mm}}rrrr}
\toprule 
&   & 
& \multirowcell{2}{$\mightyppl{}$ \\ \small \texttt{tck -{}-inf  comp}}
& \multirowcell{2}{$\mightyppl{}$ \\ \small \texttt{tck -{}-inf flat}}
 & \multirowcell{2}{$\mightyppl{}$ \\ \small \texttt{fp -{}-inf}} \\
Pattern & Formula  & Sat? &  \\
\otoprule
TPA &      $\varphi_1''$       & \cmark       & 0.008   & 0.012   & 0.007          \\
TSV   &   $\varphi_2''$       & \cmark       & 0.091   & 0.018   & 0.012           \\
TSP  &   $\varphi_3''$       & \cmark       & 0.024   & 0.054    & 0.042         \\
TSV, TPA    &   $\varphi_4''$     & \xmark  & 0.164    & 0.131    & 0.067        \\
\bottomrule
\end{tabular}
}
\end{table}

\subsubsection{Food delivery}
Recall the food delivery scenario from~\cref{ex:food}.
In this set of benchmarks, we consider a more sophisticated property where in addition to the 
requirement for delivery to be quick (within $15$ minutes of order), we also require that the food must be fresh (delivery is within $10$ minutes of pickup). 
These constraints are most naturally expressed in unilateral fragment of $\tptl{}$~\citep{AH94}, known as $\uptl{}$~\citep{concur23,concur25}: 
\[
\phi(K,L) = x.\fut(K \wedge y.\fut(L \wedge x < 15 \wedge y<10)) \;.
\]
By~\citep{concur25}, there is an equivalent $\mitlpp{}$ formula for any $\uptl{}$ formula.
In this case $\phi(K,L)$ is equivalent to 
\[
\phi'(K, L) = \PnF_{[0,15]}(K \wedge \eventually_{[0,10]} L, L)  \;.
\]
We can now pose the planning problem---given a map of the city (\cref{fig:icon-delivery-map}) and a pattern of orders, is it possible for the delivery driver to plan a route that satisfies the constraints?---as a model-checking problem.
We model the city as a $\ta{}$ $\mathcal{M}$ with the following atomic propositions:
\[
\{P, B, C, L0, L1, L2, P{:}L1, B{:}L1, C{:}L1, P{:}L2, B{:}L2, C{:}L2\} \;.
\]
The locations and transitions of $\mathcal{M}$ are as depicted~\cref{fig:icon-delivery-map}, with an additional self-loop (with no guards) on each location.
On each transition exactly one of $P, B, C, 
L0, L1, L2$ holds (which indicates the target location of the transition).
As before, $K{:}L$ (where $K \in \{P, B, C\}$ and $L \in \{L1, L2\}$) indicates an order from location $L$ for item $K$. %
We model-check $\mathcal{M}$ against the formula
\[
\pi(K, L, (l, u)) \implies  \eventually (K{:}L \land \neg \phi'(K, L))
\]
where the pattern of $K{:}L$ is specified by
\[
\pi(K, L, (l, u))  =  \eventually_{[0, 1]} (K{:}L) \land \globally_{[0, 50]} (K{:}L \implies ((\neg K{:}L) \until_{[l, \infty)} K{:}L) \land ((\neg K{:}L) \until_{[0, u]} K{:}L))
\]
for each $(l, u) \in \{ (3, 5), (8, 10), (10, 12), (15, 20) \}$,
$K \in \{P, B, C\}$, and $L \in \{ L1, L2 \}$
 (over finite timed words).
The results are in~\cref{tab:food}. 
For this set of benchmarks, \textsc{TChecker} performs similarly
on the models generated by $\mightyppl{}$ in both the `\texttt{comp}' and `\texttt{flat}' modes, significantly outperforming the backward fixpoint algorithm.

\begin{table}[!htb]
 \caption{Execution times on the food delivery benchmarks.  
Times are in seconds. `TO' indicates timeouts (300s).}
\label{tab:food}
\centering
\scalebox{0.8}{
\def\arraystretch{1.2}
\setlength\tabcolsep{2mm}
\begin{tabular}{l@{\hspace{2mm}}l@{\hspace{4mm}}l@{\hspace{4mm}}rrrr}
\toprule 
& & &
& \multirowcell{2}{ $\mightyppl{}$ \\ \small \texttt{tck -{}-fin comp}}
& \multirowcell{2}{ $\mightyppl{}$ \\ \small \texttt{tck -{}-fin flat}}
 & \multirowcell{2}{ $\mightyppl{}$ \\ \small \texttt{fp -{}-fin}} \\
$K$ & $L$ & $(l, u)$ & Hold? &  \\
\otoprule
  $P$  & $L1$    & $(3, 5)$       & \xmark        & 6.558      & 2.440   & 15.751           \\
  $P$  & $L1$    & $(8, 10)$      & \xmark        & 1.489      & 2.244   & 147.426           \\
  $P$  & $L1$    & $(10, 12)$     & \xmark        & 1.415      & 2.214   & 53.940           \\
  $P$  & $L1$    & $(15, 20)$     & \xmark        & 1.096      & 2.166   & 102.728           \\
  $P$  & $L2$    & $(3, 5)$       & \cmark        & 1.241      & 2.354   & 0.863            \\
  $P$  & $L2$    & $(8, 10)$      & \xmark        & 1.812      & 2.384   & 43.537            \\
  $P$  & $L2$    & $(10, 12)$     & \cmark        & 0.791      & 4.982   & 1.725            \\
  $P$  & $L2$    & $(15, 20)$     & \xmark        & 1.127      & 2.312   & 70.684            \\
  $B$  & $L1$    & $(3, 5)$       & \xmark        & 6.082      & 2.437   & TO            \\
  $B$  & $L1$    & $(8, 10)$      & \xmark        & 1.634      & 2.152   & 173.988            \\
  $B$  & $L1$    & $(10, 12)$     & \xmark        & 1.982      & 2.194   & 298.544            \\
  $B$  & $L1$    & $(15, 20)$     & \xmark        & 0.982      & 2.013   & 115.108            \\
  $B$  & $L2$    & $(3, 5)$       & \xmark        & 8.077      & 2.803   & 5.999          \\
  $B$  & $L2$    & $(8, 10)$      & \xmark        & 1.295      & 2.444   & 46.012          \\
  $B$  & $L2$    & $(10, 12)$     & \xmark        & 1.652      & 2.441   & 57.116           \\
  $B$  & $L2$    & $(15, 20)$     & \xmark        & 1.125      & 2.420   & 128.690           \\
  $C$  & $L1$    & $(3, 5)$       & \xmark        & 3.660      & 2.040   & 85.348             \\
  $C$  & $L1$    & $(8, 10)$      & \xmark        & 0.903      & 1.833   & 136.496             \\
  $C$  & $L1$    & $(10, 12)$     & \xmark        & 1.277      & 1.869   & 211.248             \\
  $C$  & $L1$    & $(15, 20)$     & \xmark        & 0.821      & 1.856   & TO             \\
  $C$  & $L2$    & $(3, 5)$       & \xmark        & 3.396      & 1.995   & TO           \\
  $C$  & $L2$    & $(8, 10)$      & \xmark        & 0.935      & 1.681   & 299.449           \\
  $C$  & $L2$    & $(10, 12)$     & \xmark        & 1.223      & 1.730   & TO           \\
  $C$  & $L2$    & $(15, 20)$     & \xmark        & 0.743      & 1.700   & TO           \\
\bottomrule
\end{tabular}
}
\end{table}

\subsubsection{Timed lamp}
This is a case study adapted from~\citep{bersani2016tool}.
We consider a lamp controlled by a single button that can be pushed by the user at any time. Whenever the button is pushed, the lamp blinks at the same instant and enters a state, where it blinks every $1$ time unit for three more times. The simple system can be modelled by the $\ta{}$ $\mathcal{M}_\textit{lamp}$ of~\cref{fig:timedlamp}, which is model-checked against the following formulae (over infinite timed words):
\begin{itemize}
\item $\alpha_1 = \globally (\textit{push} \implies \eventually_{\geq 3} \textit{blink})$: Each time the button is pushed, regardless of whether it is pushed again later, the lamp must blink at least once after (exactly) $3$ time units. This clearly holds in $\mathcal{M}_\textit{lamp}$.
\item $\alpha_2 = \globally\big(\textit{push} \land \textit{blink} \implies \eventually_{\leq 5} (\neg \textit{blink})\big)$: Each time the button is pushed, the lamp goes off (i.e.~$\neg \textit{blink}$ occurs) at least once in the next $5$ time units. This does not hold as the user can push  repeatedly every, e.g.,~$2$ time units.
\item $\alpha_3 = \globally\big(\textit{blink} \land \eventually_{\leq 1} \textit{blink} \land \globally_{\leq 9} (\eventually_{\leq 1} \textit{blink}) \implies \PnF_{\leq 9} ( \textit{push},\textit{push})\big)$:
    If the lamp blinks continuously for $9$ time units, then the button must have been pressed at least twice in this period.
    This holds in $\mathcal{M}_\textit{lamp}$.
\item $\alpha_4 = \globally\big(\textit{blink} \land \eventually_{\leq 1} \textit{blink} \land \globally_{\leq 9} (\eventually_{\leq 1} \textit{blink}) \implies \PnF_{\leq 9} ( \textit{push},\textit{push},\textit{push})\big)$:
    This does not hold as one may, e.g.,~push the button every $3.9$ time units.
\end{itemize}    
The results are in~\cref{tab:timedlamp}. All the approaches perform well on the first two (simpler) formulae.
For the other two formulae with Pnueli modalities, \textsc{TChecker} performs much better than the backward fixpoint algorithm.

\begin{figure}
\begin{minipage}[t]{0.49\textwidth}
\centering
\begin{tikzpicture}[->, node distance=5cm, transform shape, scale=0.7]
   \node[initial left,state, accepting](0){$s_0$};
   \node[state, right of=0](1){$s_1$};
   
   \path
   (0) edge[loopabove] node[above=1mm, align=center, looseness=20, out=120, in=90]{$\neg \textit{push} \land \neg \textit{blink}$ \\ $x = 1; x := 0$} (0)
   (0) edge[->, bend left=10] node[above=1mm, align=center]{$\textit{push} \land \textit{blink}$ \\ $x := 0, y := 0$} (1)
   (1) edge[loopabove] node[above=1mm,align=center]{$\textit{push} \land \textit{blink}$ \\ $x := 0, y := 0$} (1)
   (1) edge[loopbelow] node[below=1mm, align=center]{$\neg \textit{push} \land \textit{blink}$ \\ $x = 1 \land y \leq 2; x := 0$} (1)
   (1) edge[->, bend left=10] node[below=1mm, align=center]{$\neg \textit{push} \land \textit{blink}$ \\ $x = 1, y = 3; x := 0$} (0);
 \end{tikzpicture}
\captionof{figure}{The model $\ta{}$ $\mathcal{M}_\textit{lamp}$.}
\label{fig:timedlamp}
\end{minipage}
\hfill
\begin{minipage}[b]{0.49\textwidth}
\centering
 \captionof{table}{Execution times on the timed lamp benchmarks. Times are in seconds.
 `TO' indicates timeouts (300s).}
\scalebox{0.65}{
\def\arraystretch{1.2}
\setlength\tabcolsep{2mm}
\begin{tabular}{l@{\hspace{2mm}}rrrr}
\toprule 
& 
& \multirowcell{2}{ $\mightyppl{}$ \\ \small \texttt{tck -{}-inf comp}}
& \multirowcell{2}{ $\mightyppl{}$ \\ \small \texttt{tck -{}-inf flat}}
 & \multirowcell{2}{ $\mightyppl{}$ \\ \small \texttt{fp -{}-inf}} \\
$\varphi$  & Hold? &  \\
\otoprule
   $\alpha_1$       & \cmark       & 0.027      & 0.012   & 0.009           \\
  $\alpha_2$       & \xmark        & 0.008      & 0.017   & 0.013            \\
  $\alpha_3$       & \cmark        & 0.070      & 0.267   & 152.755            \\
  $\alpha_4$     & \xmark          & 0.045      & 0.430   & TO               \\
\bottomrule
\end{tabular}
}
\label{tab:timedlamp}
\end{minipage}
\end{figure}

\subsubsection{Fischer's mutual exclusion protocol}
This is a classic mutual exclusion algorithm (from~\citep{lamport1987fast})
that has been used as a standard example for $\ta{}$-based
verification tools such as \textsc{Uppaal}.
We use the file \texttt{fischer.sh} from \textsc{TChecker}'s source-code repository~\citep{TChecker} to generate the system models (asynchronous networks of $\ta{s}$---one $\ta{}$ $\mathcal{P}_i$ for each process $i$) with a fixed delay $K = 5$; then we add the atomic propositions $\textit{req}_i, \textit{wait}_i, \textit{cs}_i, \textit{idle}_i$ to signify which location is entered. For example, $\textit{cs}_i$ holds on the transition from $\ell_\textit{req}$ to $\ell_\textit{cs}$ in  $\mathcal{P}_i$, but at this point $\textit{req}_i, \textit{wait}_i, \textit{idle}_i$ and $\{\, \textit{req}_j, \textit{wait}_j, \textit{cs}_j, \textit{idle}_j \mid j \neq i  \,\}$ must all be $\bot$. The system models are model-checked against the following formulae (over infinite timed words):
\begin{itemize}

    \item $\alpha_1' = \bigwedge_{i} \big( \globally(\textit{cs}_i \implies \past_{[0,10]} \textit{req}_i) \big)$:  
    Whenever process $i$ enters the critical section, it must have issued a request at most 10 time units earlier. This does not hold as any process can wait in $\ell_\textit{wait}$ for arbitrarily long before entering $\ell_\textit{cs}$.
    
    \item $\alpha_2' = \bigwedge_{i} \big( \globally(\textit{cs}_i \implies \past_{[5,10]} \textit{req}_i) \big)$:  
    Same as $\theta_1$ but process $i$ must have issued a request between 5 and 10 time units earlier. This does not hold for the same reason as $\theta_1$.
    
    \item $\alpha_3' = \bigwedge_{i} \big(\globally(\textit{cs}_i \implies \neg \PnF_{[0,11]} (\textit{cs}_i, \textit{cs}_i))\big)$:  
    No process $i$ can enter the critical section more than three times within any $11$ time units window. This does not hold as a process can delay for $0$ time units in $\ell_\textit{idle}$ and $\ell_\textit{req}$, for exactly $5.1$ time units in $\ell_\textit{wait}$, and repeat this pattern. 
    
    \item $\alpha_4' = \bigwedge_{i} \big(\globally(\textit{cs}_i \implies \neg \PnF_{[0,10]} (\textit{cs}_i, \textit{cs}_i))\big)$:  
    Same as $\theta_3$ but with $10$ time units windows. This holds as any process must take $> 5$ time units from $\ell_\textit{wait}$ to $\ell_\textit{cs}$.
    
    \item $\alpha_5' = \bigwedge_{i} \big(\globally(\textit{req}_i \implies \eventually_{[0,10]} \textit{cs}_i) \big)$:  
    Every request by process $i$ must be followed by process $i$ entering the critical section within $10$ time units.
    It is clear that this does not hold.
\end{itemize}
The results are in~\cref{tab:fischer}. As one may expect, it takes longer to verify the formulae that hold in the system model (i.e.~the negated formulae are not satisfiable by the system model). 

\begin{table}[!htb]
 \caption{Execution times on the Fischer benchmarks.  
Times are in seconds. `TO' indicates timeouts (300s).}
\label{tab:fischer}
\centering
\scalebox{0.8}{
\def\arraystretch{1.2}
\setlength\tabcolsep{2mm}
\begin{tabular}{l@{\hspace{2mm}}rrrrr}
\toprule 
&  &
 & \multirowcell{2}{$\mightyppl{}$ \\ \small \texttt{tck -{}-inf comp}} \\
$\varphi$  & $N$ & Hold? & \\
\otoprule
$\alpha_1'$  & 2     &  \xmark      & 0.033          \\ 
$\alpha_1'$  & 3     &  \xmark      & 0.212         \\ 
$\alpha_1'$  & 4     &  \xmark      & 30.095         \\ 
$\alpha_2'$  & 2     &  \xmark      & 77.571           \\ 
$\alpha_2'$  & 3     &  \xmark      & TO           \\ 
$\alpha_2'$  & 4     &  \xmark      & TO          \\ 
$\alpha_3'$  & 2     &  \xmark      & 0.098          \\ 
$\alpha_3'$  & 3     &  \xmark      & 1.641          \\ 
$\alpha_3'$  & 4     &  \xmark      & 62.947         \\ 
$\alpha_4'$  & 2     &  \cmark      & 0.747          \\ 
$\alpha_4'$  & 3     &  \cmark      & 57.668          \\ 
$\alpha_4'$  & 4     &  \cmark      & TO         \\ 
$\alpha_5'$  & 2     &  \xmark      & 0.012          \\ 
$\alpha_5'$  & 3     &  \xmark      & 0.021          \\ 
$\alpha_5'$  & 4     &  \xmark      & 0.127         \\ 
\bottomrule
\end{tabular}
}
\end{table}

\subsubsection{Pinwheel scheduling}

 The \emph{pinwheel scheduling} problem~\citep{Holte1989pinwheel} asks
 whether there is an \emph{infinite} schedule for tasks $\{i \mid 1 \leq i \leq k\}$ (each with a \emph{relative deadline     } $a_i \geq 2$),
e.g.,~on each day we can schedule a task, and
 each task is scheduled at least once in every $a_i$ days.
 We use the model $\ta{}$ $\mathcal{M}_\textit{pin}^k$ to enforce that exactly one of $\{p_1, \dots, p_k\}$ holds on any event, and any two events must be separated by at least $1$ time unit. We then model-check $\mathcal{M}_\textit{pin}^k$ against the corresponding formula
 for an instance $(a_1, \dots, a_k)$ of the problem:
 \[
\varphi_{(a_1, \dots, a_k)}  = (\bigwedge_{i \in \{1, \dots, k\}} \eventually p_i) 
    \implies \neg \bigwedge_{i \in \{1, \dots, k\}} \globally (p_i \implies \eventually_{[0, a_i]} p_i) \;.
\]
The results are in Table~\ref{tab:pinwheel}. 
For these experiments, the \texttt{-u0} option was necessary to disable the subsumption abstraction in \textsc{LTSmin}, as the default configuration caused a memory bottleneck that led to incorrect results.
To highlight the effect of using multiple cores, we exclude the time used for constructing, flattening, and compiling the models. The single-threaded performance of \textsc{TChecker} in the `\texttt{flat}' mode is generally comparable to that of \textsc{LTSmin}, yet it performs significantly better on more challenging instances.
 The latter, however, can run on multiple threads and we can see that a speedup is generally observed as the number of threads increases, and this improvement is sometimes linear with respect to the thread count.  Furthermore, we notice that using more than $16$ threads does not always yield significant performance gains, likely due to hardware limitations and increased synchronisation overhead.

 \begin{table}[!htb]
 \caption{Execution times on the pinwheel scheduling benchmarks. 
 Times are in seconds. Numbers in the heading are numbers of parallel threads. `TO' indicates timeouts (300s).}
 \label{tab:pinwheel}
 \centering
 \scalebox{0.7}{
 \def\arraystretch{1.2}
 \setlength\tabcolsep{2mm}
 \begin{tabular}{l@{\hspace{2mm}}rrrrrrrrrr}
 \toprule %
 & \multirowcell{2}{ \\ \\ \\ Hold?} & \multirowcell{2}{$\mightyppl{}$ \\ \small \texttt{tck -{}-inf comp}} & \multirowcell{2}{$\mightyppl{}$ \\ \small \texttt{tck -{}-inf flat}} & \multirowcell{2}{$\mightyppl{}$ \\ \small \texttt{fp -{}-inf}} & \multicolumn{6}{c}{\multirowcell{2}{$\mightyppl{}$ \\ \small \texttt{opaal\_ltsmin flat}}}  \\
 & & & & \\
 \cmidrule[\heavyrulewidth](l{2mm}r{2mm}){6-11}
 $\varphi$ & & & & & 1 & 2 & 4 & 8 & 16 & 32 \\
 \otoprule
 $(3, 4, 5, 7)$             & \cmark       & 1.400    & 0.112   & 0.821          &   0.048  & 0.033  & 0.040   & 0.066     & 0.152   & 0.407       \\
 $(3, 4, 5, 8)$             & \xmark       & 0.190    & 0.093   & 1.745          &   0.020  & 0.041  & 0.038   & 0.070     & 0.158   & 0.405          \\
 $(2, 7, 9, 9, 16)$         & \cmark       & TO       & 1.956   & 13.870         &   1.303  & 0.706  & 0.407   & 0.293     & 0.388   & 0.583             \\
 $(2, 7, 9, 10, 16)$        & \xmark       & 5.548    & 0.555   & 66.768         &   0.110  & 0.056  & 0.050   & 0.090     & 0.164   & 0.362  \\
 $(3, 7, 8, 8, 9, 10)$      & \cmark       & TO       & 14.561  & TO             &   36.676 & 19.495 & 8.990   & 5.352     & 6.930   & 4.504  \\
 $(3, 7, 8, 8, 9, 11)$      & \xmark       & TO       & 3.806   & TO             &   1.496  & 1.048  & 0.827   & 0.599     & 2.809   & 0.726  \\
 $(4, 6, 8, 9, 9, 9, 15)$   & \cmark       & TO       & TO      & TO             &   TO     & TO     & TO      & TO        & TO      & TO       \\
 $(4, 6, 8, 9, 9, 9, 16)$   & \xmark       & TO       & 28.066  & TO             &  14.659  & 15.363 & 4.953   & 3.935     & 3.618   & 2.800       \\
 $(5, 5, 8, 9, 9, 9, 11)$   & \cmark       & TO       & 247.438 & TO             &   TO     & TO     & TO      & 275.269   & 191.635 & 133.457      \\
 $(5, 5, 8, 9, 9, 9, 12)$   & \xmark       & TO       & 28.166  & TO             & 12.099   & 24.472 &  3.587  & 10.727    & 5.154   & 5.477     \\
 $(6, 7, 7, 7, 8, 8, 10)$   & \cmark       & TO       & 147.104 & TO             &   TO     & TO     & 299.481 & 166.373   & 106.729 & 71.496    \\
 $(6, 7, 7, 7, 8, 8, 11)$   & \xmark       & TO       & 27.746  & TO             &   15.170 & 21.481 & 15.673  & 2.845     & 7.506   & 10.742       \\
 \bottomrule
 \end{tabular}
 }
 \end{table}

\section{Conclusion}
\label{sec:cn}
We have presented a fully compositional reduction from $\mitlpp{}$ to timed automata, addressing one of the most expressive decidable logics for specifying real-time behaviors. Our approach supports both past and Pnueli modalities and handles $\mitl{}$ modalities with arbitrary non-singular intervals using novel abstraction and sequentialisation techniques. These methods allow us to avoid the complex monolithic constructions of prior work, enabling both theoretical transparency and implementation modularity. Extensive experimental results on benchmarks affirm the feasibility and scalability of our approach.

As part of ongoing work, we aim to verify the correctness of the construction via formal methods tools, such as interactive proof assistants. Additionally, tighter integration with model checkers that natively support symbolic alphabets (e.g., \textsc{LTSmin} and Spot~\citep{duret2022spot}) may lead to further performance gains. We also plan to generalise the framework to handle freeze-quantified fragments of real-time logic, including the recently studied decidable variants of $\tptl{}$. We conjecture that such multi-clock freeze logics---such as $\uptl{}$~\citep{concur23, concur25}---may admit exponentially more succinct representations, opening new avenues for efficient specification and verification.

\bibliographystyle{ACM-Reference-Format}
\bibliography{biblio}


\begin{thebibliography}{96}


\ifx \showCODEN    \undefined \def \showCODEN     #1{\unskip}     \fi
\ifx \showISBNx    \undefined \def \showISBNx     #1{\unskip}     \fi
\ifx \showISBNxiii \undefined \def \showISBNxiii  #1{\unskip}     \fi
\ifx \showISSN     \undefined \def \showISSN      #1{\unskip}     \fi
\ifx \showLCCN     \undefined \def \showLCCN      #1{\unskip}     \fi
\ifx \shownote     \undefined \def \shownote      #1{#1}          \fi
\ifx \showarticletitle \undefined \def \showarticletitle #1{#1}   \fi
\ifx \showURL      \undefined \def \showURL       {\relax}        \fi
\providecommand\bibfield[2]{#2}
\providecommand\bibinfo[2]{#2}
\providecommand\natexlab[1]{#1}
\providecommand\showeprint[2][]{arXiv:#2}

\bibitem[{Aalborg University}(2025)]%
        {Pardibaal}
\bibfield{author}{\bibinfo{person}{{Aalborg University}}.} \bibinfo{year}{2025}\natexlab{}.
\newblock \bibinfo{title}{The \textsc{PARDIBAAL} library}.
\newblock \bibinfo{howpublished}{\url{https://github.com/DEIS-Tools/PARDIBAAL}}.
\newblock
\newblock
\shownote{[Accessed Aug 31st, 2025]}.


\bibitem[Abbas et~al\mbox{.}(2013)]%
        {abbas2013probabilistic}
\bibfield{author}{\bibinfo{person}{Houssam Abbas}, \bibinfo{person}{Georgios Fainekos}, \bibinfo{person}{Sriram Sankaranarayanan}, \bibinfo{person}{Franjo Ivan{\v{c}}i{\'c}}, {and} \bibinfo{person}{Aarti Gupta}.} \bibinfo{year}{2013}\natexlab{}.
\newblock \showarticletitle{Probabilistic temporal logic falsification of cyber-physical systems}.
\newblock \bibinfo{journal}{\emph{ACM Transactions on Embedded Computing Systems (TECS)}} \bibinfo{volume}{12}, \bibinfo{number}{2s} (\bibinfo{year}{2013}), \bibinfo{pages}{1--30}.
\newblock


\bibitem[Akshay et~al\mbox{.}(2023)]%
        {akshay2023unified}
\bibfield{author}{\bibinfo{person}{S Akshay}, \bibinfo{person}{Paul Gastin}, \bibinfo{person}{R Govind}, \bibinfo{person}{Aniruddha~R Joshi}, {and} \bibinfo{person}{B Srivathsan}.} \bibinfo{year}{2023}\natexlab{}.
\newblock \showarticletitle{A unified model for real-time systems: Symbolic techniques and implementation}. In \bibinfo{booktitle}{\emph{International Conference on Computer Aided Verification}}. Springer, \bibinfo{pages}{266--288}.
\newblock


\bibitem[Akshay et~al\mbox{.}(2024)]%
        {akshay2024mitl}
\bibfield{author}{\bibinfo{person}{S Akshay}, \bibinfo{person}{Paul Gastin}, \bibinfo{person}{R Govind}, {and} \bibinfo{person}{B Srivathsan}.} \bibinfo{year}{2024}\natexlab{}.
\newblock \showarticletitle{{MITL} Model Checking via Generalized Timed Automata and a New Liveness Algorithm}. In \bibinfo{booktitle}{\emph{CONCUR}}.
\newblock


\bibitem[Alur et~al\mbox{.}(1993)]%
        {alur1993model}
\bibfield{author}{\bibinfo{person}{Rajeev Alur}, \bibinfo{person}{Costas Courcoubetis}, {and} \bibinfo{person}{David Dill}.} \bibinfo{year}{1993}\natexlab{}.
\newblock \showarticletitle{Model-checking in dense real-time}.
\newblock \bibinfo{journal}{\emph{Information and computation}} \bibinfo{volume}{104}, \bibinfo{number}{1} (\bibinfo{year}{1993}), \bibinfo{pages}{2--34}.
\newblock


\bibitem[Alur and Dill(1994)]%
        {AD94}
\bibfield{author}{\bibinfo{person}{Rajeev Alur} {and} \bibinfo{person}{David~L. Dill}.} \bibinfo{year}{1994}\natexlab{}.
\newblock \showarticletitle{A Theory of Timed Automata}.
\newblock \bibinfo{journal}{\emph{Theor. Comput. Sci.}} \bibinfo{volume}{126}, \bibinfo{number}{2} (\bibinfo{year}{1994}), \bibinfo{pages}{183--235}.
\newblock
\href{https://doi.org/10.1016/0304-3975(94)90010-8}{doi:\nolinkurl{10.1016/0304-3975(94)90010-8}}


\bibitem[Alur et~al\mbox{.}(1996)]%
        {AFH96}
\bibfield{author}{\bibinfo{person}{R. Alur}, \bibinfo{person}{T. Feder}, {and} \bibinfo{person}{T. Henzinger}.} \bibinfo{year}{1996}\natexlab{}.
\newblock \showarticletitle{The Benefits of Relaxing Punctuality}.
\newblock \bibinfo{journal}{\emph{J.ACM}}  \bibinfo{volume}{43(1)} (\bibinfo{year}{1996}), \bibinfo{pages}{116--146}.
\newblock


\bibitem[Alur et~al\mbox{.}(1999)]%
        {alur1999event}
\bibfield{author}{\bibinfo{person}{Rajeev Alur}, \bibinfo{person}{Limor Fix}, {and} \bibinfo{person}{Thomas~A Henzinger}.} \bibinfo{year}{1999}\natexlab{}.
\newblock \showarticletitle{Event-clock automata: A determinizable class of timed automata}.
\newblock \bibinfo{journal}{\emph{Theoretical Computer Science}} \bibinfo{volume}{211}, \bibinfo{number}{1-2} (\bibinfo{year}{1999}), \bibinfo{pages}{253--273}.
\newblock


\bibitem[Alur and Henzinger(1992a)]%
        {AlurH92}
\bibfield{author}{\bibinfo{person}{Rajeev Alur} {and} \bibinfo{person}{Thomas~A. Henzinger}.} \bibinfo{year}{1992}\natexlab{a}.
\newblock \showarticletitle{Back to the Future: Towards a Theory of Timed Regular Languages}. In \bibinfo{booktitle}{\emph{33rd Annual Symposium on Foundations of Computer Science, Pittsburgh, Pennsylvania, USA, 24-27 October 1992}}. \bibinfo{publisher}{{IEEE} Computer Society}, \bibinfo{pages}{177--186}.
\newblock
\href{https://doi.org/10.1109/SFCS.1992.267774}{doi:\nolinkurl{10.1109/SFCS.1992.267774}}


\bibitem[Alur and Henzinger(1992b)]%
        {AluHen92}
\bibfield{author}{\bibinfo{person}{Rajeev Alur} {and} \bibinfo{person}{Thomas~A. Henzinger}.} \bibinfo{year}{1992}\natexlab{b}.
\newblock \showarticletitle{Logics and Models of Real Time: {A} Survey}.
\newblock In \bibinfo{booktitle}{\emph{REX}}. \bibinfo{series}{LNCS}, Vol.~\bibinfo{volume}{600}. \bibinfo{publisher}{Springer-Verlag}, \bibinfo{pages}{74--106}.
\newblock


\bibitem[Alur and Henzinger(1993)]%
        {AH93}
\bibfield{author}{\bibinfo{person}{Rajeev Alur} {and} \bibinfo{person}{Thomas~A. Henzinger}.} \bibinfo{year}{1993}\natexlab{}.
\newblock \showarticletitle{Real-Time Logics: Complexity and Expressiveness}.
\newblock \bibinfo{journal}{\emph{Inf. Comput.}} \bibinfo{volume}{104}, \bibinfo{number}{1} (\bibinfo{year}{1993}), \bibinfo{pages}{35--77}.
\newblock
\href{https://doi.org/10.1006/inco.1993.1025}{doi:\nolinkurl{10.1006/inco.1993.1025}}


\bibitem[Alur and Henzinger(1994)]%
        {AH94}
\bibfield{author}{\bibinfo{person}{Rajeev Alur} {and} \bibinfo{person}{Thomas~A. Henzinger}.} \bibinfo{year}{1994}\natexlab{}.
\newblock \showarticletitle{A Really Temporal Logic}.
\newblock \bibinfo{journal}{\emph{J. ACM}} \bibinfo{volume}{41}, \bibinfo{number}{1} (\bibinfo{date}{Jan.} \bibinfo{year}{1994}), \bibinfo{pages}{181–203}.
\newblock
\showISSN{0004-5411}
\href{https://doi.org/10.1145/174644.174651}{doi:\nolinkurl{10.1145/174644.174651}}


\bibitem[Annpureddy et~al\mbox{.}(2011)]%
        {annpureddy2011s}
\bibfield{author}{\bibinfo{person}{Yashwanth Annpureddy}, \bibinfo{person}{Che Liu}, \bibinfo{person}{Georgios Fainekos}, {and} \bibinfo{person}{Sriram Sankaranarayanan}.} \bibinfo{year}{2011}\natexlab{}.
\newblock \showarticletitle{S-taliro: A tool for temporal logic falsification for hybrid systems}. In \bibinfo{booktitle}{\emph{International Conference on Tools and Algorithms for the Construction and Analysis of Systems}}. Springer, \bibinfo{pages}{254--257}.
\newblock


\bibitem[Bae and Lee(2019)]%
        {bae2019bounded}
\bibfield{author}{\bibinfo{person}{Kyungmin Bae} {and} \bibinfo{person}{Jia Lee}.} \bibinfo{year}{2019}\natexlab{}.
\newblock \showarticletitle{Bounded model checking of signal temporal logic properties using syntactic separation}.
\newblock \bibinfo{journal}{\emph{Proceedings of the ACM on Programming Languages}} \bibinfo{volume}{3}, \bibinfo{number}{POPL} (\bibinfo{year}{2019}), \bibinfo{pages}{1--30}.
\newblock


\bibitem[Beg et~al\mbox{.}(2018)]%
        {beg2018signal}
\bibfield{author}{\bibinfo{person}{Omar~Ali Beg}, \bibinfo{person}{Luan~V Nguyen}, \bibinfo{person}{Taylor~T Johnson}, {and} \bibinfo{person}{Ali Davoudi}.} \bibinfo{year}{2018}\natexlab{}.
\newblock \showarticletitle{Signal temporal logic-based attack detection in DC microgrids}.
\newblock \bibinfo{journal}{\emph{IEEE Transactions on Smart Grid}} \bibinfo{volume}{10}, \bibinfo{number}{4} (\bibinfo{year}{2018}), \bibinfo{pages}{3585--3595}.
\newblock


\bibitem[Behrmann et~al\mbox{.}(2007)]%
        {behrmann2007uppaal}
\bibfield{author}{\bibinfo{person}{Gerd Behrmann}, \bibinfo{person}{Agnes Cougnard}, \bibinfo{person}{Alexandre David}, \bibinfo{person}{Emmanuel Fleury}, \bibinfo{person}{Kim~G Larsen}, {and} \bibinfo{person}{Didier Lime}.} \bibinfo{year}{2007}\natexlab{}.
\newblock \showarticletitle{Uppaal-tiga: Time for playing games!}. In \bibinfo{booktitle}{\emph{International Conference on Computer Aided Verification}}. Springer, \bibinfo{pages}{121--125}.
\newblock


\bibitem[Behrmann et~al\mbox{.}(2006)]%
        {Behrmann2006}
\bibfield{author}{\bibinfo{person}{Gerd Behrmann}, \bibinfo{person}{Alexandre David}, \bibinfo{person}{Kim~Guldstrand Larsen}, \bibinfo{person}{John H{\aa}kansson}, \bibinfo{person}{Paul Pettersson}, \bibinfo{person}{Wang Yi}, {and} \bibinfo{person}{Martijn Hendriks}.} \bibinfo{year}{2006}\natexlab{}.
\newblock \showarticletitle{UPPAAL 4.0}. In \bibinfo{booktitle}{\emph{QEST}}. \bibinfo{publisher}{IEEE}, \bibinfo{pages}{125--126}.
\newblock


\bibitem[B{\'e}rard et~al\mbox{.}(1998)]%
        {berard1998characterization}
\bibfield{author}{\bibinfo{person}{B{\'e}atrice B{\'e}rard}, \bibinfo{person}{Antoine Petit}, \bibinfo{person}{Volker Diekert}, {and} \bibinfo{person}{Paul Gastin}.} \bibinfo{year}{1998}\natexlab{}.
\newblock \showarticletitle{Characterization of the expressive power of silent transitions in timed automata}.
\newblock \bibinfo{journal}{\emph{Fundamenta Informaticae}} \bibinfo{volume}{36}, \bibinfo{number}{2-3} (\bibinfo{year}{1998}), \bibinfo{pages}{145--182}.
\newblock


\bibitem[Bersani et~al\mbox{.}(2016)]%
        {bersani2016tool}
\bibfield{author}{\bibinfo{person}{Marcello~M Bersani}, \bibinfo{person}{Matteo Rossi}, {and} \bibinfo{person}{Pierluigi San~Pietro}.} \bibinfo{year}{2016}\natexlab{}.
\newblock \showarticletitle{A tool for deciding the satisfiability of continuous-time metric temporal logic}.
\newblock \bibinfo{journal}{\emph{Acta Informatica}} \bibinfo{volume}{53}, \bibinfo{number}{2} (\bibinfo{year}{2016}), \bibinfo{pages}{171--206}.
\newblock


\bibitem[Bouyer(2009)]%
        {bouyer2009model}
\bibfield{author}{\bibinfo{person}{Patricia Bouyer}.} \bibinfo{year}{2009}\natexlab{}.
\newblock \showarticletitle{Model-checking timed temporal logics}.
\newblock \bibinfo{journal}{\emph{Electronic notes in theoretical computer science}}  \bibinfo{volume}{231} (\bibinfo{year}{2009}), \bibinfo{pages}{323--341}.
\newblock


\bibitem[Bouyer et~al\mbox{.}(2010)]%
        {bouyer2010expressiveness}
\bibfield{author}{\bibinfo{person}{Patricia Bouyer}, \bibinfo{person}{Fabrice Chevalier}, {and} \bibinfo{person}{Nicolas Markey}.} \bibinfo{year}{2010}\natexlab{}.
\newblock \showarticletitle{On the expressiveness of TPTL and MTL}.
\newblock \bibinfo{journal}{\emph{Information and Computation}} \bibinfo{volume}{208}, \bibinfo{number}{2} (\bibinfo{year}{2010}), \bibinfo{pages}{97--116}.
\newblock


\bibitem[Bouyer et~al\mbox{.}(2022)]%
        {bouyer2022zone}
\bibfield{author}{\bibinfo{person}{Patricia Bouyer}, \bibinfo{person}{Paul Gastin}, \bibinfo{person}{Fr{\'e}d{\'e}ric Herbreteau}, \bibinfo{person}{Ocan Sankur}, {and} \bibinfo{person}{B Srivathsan}.} \bibinfo{year}{2022}\natexlab{}.
\newblock \showarticletitle{Zone-based verification of timed automata: extrapolations, simulations and what next?}. In \bibinfo{booktitle}{\emph{International Conference on Formal Modeling and Analysis of Timed Systems}}. Springer, \bibinfo{pages}{16--42}.
\newblock


\bibitem[Bouyer et~al\mbox{.}(2017)]%
        {bouyer2017timed}
\bibfield{author}{\bibinfo{person}{Patricia Bouyer}, \bibinfo{person}{Fran{\c{c}}ois Laroussinie}, \bibinfo{person}{Nicolas Markey}, \bibinfo{person}{Jo{\"e}l Ouaknine}, {and} \bibinfo{person}{James Worrell}.} \bibinfo{year}{2017}\natexlab{}.
\newblock \showarticletitle{Timed temporal logics}.
\newblock \bibinfo{journal}{\emph{Models, Algorithms, Logics and Tools: Essays Dedicated to Kim Guldstrand Larsen on the Occasion of His 60th Birthday}} (\bibinfo{year}{2017}), \bibinfo{pages}{211--230}.
\newblock


\bibitem[Brihaye et~al\mbox{.}(2013)]%
        {brihaye2013mitl}
\bibfield{author}{\bibinfo{person}{Thomas Brihaye}, \bibinfo{person}{Morgane Esti{\'e}venart}, {and} \bibinfo{person}{Gilles Geeraerts}.} \bibinfo{year}{2013}\natexlab{}.
\newblock \showarticletitle{On {MITL} and alternating timed automata}. In \bibinfo{booktitle}{\emph{Formal Modeling and Analysis of Timed Systems: 11th International Conference, FORMATS 2013, Buenos Aires, Argentina, August 29-31, 2013. Proceedings 11}}. Springer, \bibinfo{pages}{47--61}.
\newblock


\bibitem[Brihaye et~al\mbox{.}(2014)]%
        {brihaye2014mitl}
\bibfield{author}{\bibinfo{person}{Thomas Brihaye}, \bibinfo{person}{Morgane Esti{\'e}venart}, {and} \bibinfo{person}{Gilles Geeraerts}.} \bibinfo{year}{2014}\natexlab{}.
\newblock \showarticletitle{On {MITL} and alternating timed automata over infinite words}. In \bibinfo{booktitle}{\emph{Formal Modeling and Analysis of Timed Systems: 12th International Conference, FORMATS 2014, Florence, Italy, September 8-10, 2014. Proceedings 12}}. Springer, \bibinfo{pages}{69--84}.
\newblock


\bibitem[Brihaye et~al\mbox{.}(2017)]%
        {DBLP:conf/cav/BrihayeGHM17}
\bibfield{author}{\bibinfo{person}{Thomas Brihaye}, \bibinfo{person}{Gilles Geeraerts}, \bibinfo{person}{Hsi{-}Ming Ho}, {and} \bibinfo{person}{Benjamin Monmege}.} \bibinfo{year}{2017}\natexlab{}.
\newblock \showarticletitle{MightyL: {A} Compositional Translation from {{MITL}} to Timed Automata}. In \bibinfo{booktitle}{\emph{Computer Aided Verification - 29th International Conference, {CAV} 2017, Heidelberg, Germany, July 24-28, 2017, Proceedings, Part {I}}} \emph{(\bibinfo{series}{Lecture Notes in Computer Science}, Vol.~\bibinfo{volume}{10426})}, \bibfield{editor}{\bibinfo{person}{Rupak Majumdar} {and} \bibinfo{person}{Viktor Kuncak}} (Eds.). \bibinfo{publisher}{Springer}, \bibinfo{pages}{421--440}.
\newblock
\href{https://doi.org/10.1007/978-3-319-63387-9\_21}{doi:\nolinkurl{10.1007/978-3-319-63387-9\_21}}


\bibitem[Burch et~al\mbox{.}(1992)]%
        {burch1992symbolic}
\bibfield{author}{\bibinfo{person}{Jerry~R Burch}, \bibinfo{person}{Edmund~M Clarke}, \bibinfo{person}{Kenneth~L McMillan}, \bibinfo{person}{David~L Dill}, {and} \bibinfo{person}{Lain-Jinn Hwang}.} \bibinfo{year}{1992}\natexlab{}.
\newblock \showarticletitle{Symbolic model checking: 1020 states and beyond}.
\newblock \bibinfo{journal}{\emph{Information and computation}} \bibinfo{volume}{98}, \bibinfo{number}{2} (\bibinfo{year}{1992}), \bibinfo{pages}{142--170}.
\newblock


\bibitem[Cadilhac and P{\'{e}}rez(2023)]%
        {DBLP:conf/tacas/CadilhacP23}
\bibfield{author}{\bibinfo{person}{Micha{\"{e}}l Cadilhac} {and} \bibinfo{person}{Guillermo~A. P{\'{e}}rez}.} \bibinfo{year}{2023}\natexlab{}.
\newblock \showarticletitle{Acacia-Bonsai: {A} Modern Implementation of Downset-Based {LTL} Realizability}. In \bibinfo{booktitle}{\emph{Tools and Algorithms for the Construction and Analysis of Systems - 29th International Conference, {TACAS} 2023, Held as Part of the European Joint Conferences on Theory and Practice of Software, {ETAPS} 2022, Paris, France, April 22-27, 2023, Proceedings, Part {II}}} \emph{(\bibinfo{series}{Lecture Notes in Computer Science}, Vol.~\bibinfo{volume}{13994})}, \bibfield{editor}{\bibinfo{person}{Sriram Sankaranarayanan} {and} \bibinfo{person}{Natasha Sharygina}} (Eds.). \bibinfo{publisher}{Springer}, \bibinfo{pages}{192--207}.
\newblock
\href{https://doi.org/10.1007/978-3-031-30820-8\_14}{doi:\nolinkurl{10.1007/978-3-031-30820-8\_14}}


\bibitem[Chaki et~al\mbox{.}(2011)]%
        {chaki2011time}
\bibfield{author}{\bibinfo{person}{Sagar Chaki}, \bibinfo{person}{Arie Gurfinkel}, {and} \bibinfo{person}{Ofer Strichman}.} \bibinfo{year}{2011}\natexlab{}.
\newblock \showarticletitle{Time-bounded analysis of real-time systems}. In \bibinfo{booktitle}{\emph{FMCAD}}. IEEE, \bibinfo{pages}{72--80}.
\newblock


\bibitem[Cimatti et~al\mbox{.}(2002)]%
        {cimatti2002nusmv}
\bibfield{author}{\bibinfo{person}{Alessandro Cimatti}, \bibinfo{person}{Edmund Clarke}, \bibinfo{person}{Enrico Giunchiglia}, \bibinfo{person}{Fausto Giunchiglia}, \bibinfo{person}{Marco Pistore}, \bibinfo{person}{Marco Roveri}, \bibinfo{person}{Roberto Sebastiani}, {and} \bibinfo{person}{Armando Tacchella}.} \bibinfo{year}{2002}\natexlab{}.
\newblock \showarticletitle{Nusmv 2: An opensource tool for symbolic model checking}. In \bibinfo{booktitle}{\emph{Computer Aided Verification: 14th International Conference, CAV 2002 Copenhagen, Denmark, July 27--31, 2002 Proceedings 14}}. Springer, \bibinfo{pages}{359--364}.
\newblock


\bibitem[Clarke et~al\mbox{.}(1994)]%
        {clarke1994another}
\bibfield{author}{\bibinfo{person}{Edmund Clarke}, \bibinfo{person}{Orna Grumberg}, {and} \bibinfo{person}{Kiyoharu Hamaguchi}.} \bibinfo{year}{1994}\natexlab{}.
\newblock \showarticletitle{Another look at LTL model checking}. In \bibinfo{booktitle}{\emph{Computer Aided Verification: 6th International Conference, CAV'94 Stanford, California, USA, June 21--23, 1994 Proceedings 6}}. Springer, \bibinfo{pages}{415--427}.
\newblock


\bibitem[Courcoubetis et~al\mbox{.}(1992)]%
        {Courcoubetis1992}
\bibfield{author}{\bibinfo{person}{Costas Courcoubetis}, \bibinfo{person}{Moshe~Y. Vardi}, \bibinfo{person}{Pierre Wolper}, {and} \bibinfo{person}{Mihalis Yannakakis}.} \bibinfo{year}{1992}\natexlab{}.
\newblock \showarticletitle{Memory-Efficient Algorithms for the Verification of Temporal Properties}.
\newblock \bibinfo{journal}{\emph{Formal Methods in System Design}} \bibinfo{volume}{1}, \bibinfo{number}{2/3} (\bibinfo{year}{1992}), \bibinfo{pages}{275--288}.
\newblock


\bibitem[Deshmukh et~al\mbox{.}(2017)]%
        {deshmukh2017robust}
\bibfield{author}{\bibinfo{person}{Jyotirmoy~V Deshmukh}, \bibinfo{person}{Alexandre Donz{\'e}}, \bibinfo{person}{Shromona Ghosh}, \bibinfo{person}{Xiaoqing Jin}, \bibinfo{person}{Garvit Juniwal}, {and} \bibinfo{person}{Sanjit~A Seshia}.} \bibinfo{year}{2017}\natexlab{}.
\newblock \showarticletitle{Robust online monitoring of signal temporal logic}.
\newblock \bibinfo{journal}{\emph{Formal Methods in System Design}}  \bibinfo{volume}{51} (\bibinfo{year}{2017}), \bibinfo{pages}{5--30}.
\newblock


\bibitem[Dokhanchi et~al\mbox{.}(2017)]%
        {dokhanchi2017formal}
\bibfield{author}{\bibinfo{person}{Adel Dokhanchi}, \bibinfo{person}{Bardh Hoxha}, {and} \bibinfo{person}{Georgios Fainekos}.} \bibinfo{year}{2017}\natexlab{}.
\newblock \showarticletitle{Formal requirement debugging for testing and verification of cyber-physical systems}.
\newblock \bibinfo{journal}{\emph{ACM Transactions on Embedded Computing Systems (TECS)}} \bibinfo{volume}{17}, \bibinfo{number}{2} (\bibinfo{year}{2017}), \bibinfo{pages}{1--26}.
\newblock


\bibitem[Donz{\'e} et~al\mbox{.}(2013)]%
        {donze2013efficient}
\bibfield{author}{\bibinfo{person}{Alexandre Donz{\'e}}, \bibinfo{person}{Thomas Ferrere}, {and} \bibinfo{person}{Oded Maler}.} \bibinfo{year}{2013}\natexlab{}.
\newblock \showarticletitle{Efficient robust monitoring for STL}. In \bibinfo{booktitle}{\emph{International conference on computer aided verification}}. Springer, \bibinfo{pages}{264--279}.
\newblock


\bibitem[Duret-Lutz et~al\mbox{.}(2022)]%
        {duret2022spot}
\bibfield{author}{\bibinfo{person}{Alexandre Duret-Lutz}, \bibinfo{person}{Etienne Renault}, \bibinfo{person}{Maximilien Colange}, \bibinfo{person}{Florian Renkin}, \bibinfo{person}{Alexandre Gbaguidi~Aisse}, \bibinfo{person}{Philipp Schlehuber-Caissier}, \bibinfo{person}{Thomas Medioni}, \bibinfo{person}{Antoine Martin}, \bibinfo{person}{J{\'e}r{\^o}me Dubois}, \bibinfo{person}{Cl{\'e}ment Gillard}, {et~al\mbox{.}}} \bibinfo{year}{2022}\natexlab{}.
\newblock \showarticletitle{From spot 2.0 to spot 2.10: what’s new?}. In \bibinfo{booktitle}{\emph{CAV}} \emph{(\bibinfo{series}{LNCS}, Vol.~\bibinfo{volume}{13372})}. \bibinfo{publisher}{Springer}, \bibinfo{pages}{174--187}.
\newblock


\bibitem[D’Souza and Matteplackel(2021)]%
        {d2021clock}
\bibfield{author}{\bibinfo{person}{Deepak D’Souza} {and} \bibinfo{person}{Raj~Mohan Matteplackel}.} \bibinfo{year}{2021}\natexlab{}.
\newblock \showarticletitle{A Clock-Optimal Hierarchical Monitoring Automaton for {MITL}}.
\newblock \bibinfo{journal}{\emph{ICLA 2021 Proceedings}} (\bibinfo{year}{2021}), \bibinfo{pages}{83}.
\newblock


\bibitem[Fainekos and Pappas(2006)]%
        {fainekos2006robustness}
\bibfield{author}{\bibinfo{person}{Georgios~E Fainekos} {and} \bibinfo{person}{George~J Pappas}.} \bibinfo{year}{2006}\natexlab{}.
\newblock \showarticletitle{Robustness of temporal logic specifications}. In \bibinfo{booktitle}{\emph{International Workshop on Formal Approaches to Software Testing}}. Springer, \bibinfo{pages}{178--192}.
\newblock


\bibitem[Ferrere et~al\mbox{.}(2019)]%
        {ferrere2019real}
\bibfield{author}{\bibinfo{person}{Thomas Ferrere}, \bibinfo{person}{Oded Maler}, \bibinfo{person}{Dejan Ni{\v{c}}kovi{\'c}}, {and} \bibinfo{person}{Amir Pnueli}.} \bibinfo{year}{2019}\natexlab{}.
\newblock \showarticletitle{From real-time logic to timed automata}.
\newblock \bibinfo{journal}{\emph{Journal of the ACM (JACM)}} \bibinfo{volume}{66}, \bibinfo{number}{3} (\bibinfo{year}{2019}), \bibinfo{pages}{1--31}.
\newblock


\bibitem[Fischer et~al\mbox{.}(2013)]%
        {fischer2013cseq}
\bibfield{author}{\bibinfo{person}{Bernd Fischer}, \bibinfo{person}{Omar Inverso}, {and} \bibinfo{person}{Gennaro Parlato}.} \bibinfo{year}{2013}\natexlab{}.
\newblock \showarticletitle{CSeq: A concurrency pre-processor for sequential C verification tools}. In \bibinfo{booktitle}{\emph{2013 28th IEEE/ACM International Conference on Automated Software Engineering (ASE)}}. IEEE, \bibinfo{pages}{710--713}.
\newblock


\bibitem[Gastin and Oddoux(2003)]%
        {gastin-oddoux}
\bibfield{author}{\bibinfo{person}{Paul Gastin} {and} \bibinfo{person}{Denis Oddoux}.} \bibinfo{year}{2003}\natexlab{}.
\newblock \showarticletitle{{LTL} with Past and Two-Way Very-Weak Alternating Automata}. In \bibinfo{booktitle}{\emph{Mathematical Foundations of Computer Science 2003, 28th International Symposium, {MFCS} 2003, Bratislava, Slovakia, August 25-29, 2003, Proceedings}} \emph{(\bibinfo{series}{Lecture Notes in Computer Science}, Vol.~\bibinfo{volume}{2747})}, \bibfield{editor}{\bibinfo{person}{Branislav Rovan} {and} \bibinfo{person}{Peter Vojt{\'{a}}s}} (Eds.). \bibinfo{publisher}{Springer}, \bibinfo{pages}{439--448}.
\newblock
\href{https://doi.org/10.1007/978-3-540-45138-9\_38}{doi:\nolinkurl{10.1007/978-3-540-45138-9\_38}}


\bibitem[Geilen(2003)]%
        {geilen2003improved}
\bibfield{author}{\bibinfo{person}{Marc Geilen}.} \bibinfo{year}{2003}\natexlab{}.
\newblock \showarticletitle{An improved on-the-fly tableau construction for a real-time temporal logic}. In \bibinfo{booktitle}{\emph{Computer Aided Verification: 15th International Conference, CAV 2003, Boulder, CO, USA, July 8-12, 2003. Proceedings 15}}. Springer, \bibinfo{pages}{394--406}.
\newblock


\bibitem[Godefroid(1990)]%
        {godefroid1990using}
\bibfield{author}{\bibinfo{person}{Patrice Godefroid}.} \bibinfo{year}{1990}\natexlab{}.
\newblock \showarticletitle{Using partial orders to improve automatic verification methods}. In \bibinfo{booktitle}{\emph{International Conference on Computer Aided Verification}}. Springer, \bibinfo{pages}{176--185}.
\newblock


\bibitem[Godefroid and Wolper(1994)]%
        {godefroid1994partial}
\bibfield{author}{\bibinfo{person}{Patrice Godefroid} {and} \bibinfo{person}{Pierre Wolper}.} \bibinfo{year}{1994}\natexlab{}.
\newblock \showarticletitle{A partial approach to model checking}.
\newblock \bibinfo{journal}{\emph{Information and Computation}} \bibinfo{volume}{110}, \bibinfo{number}{2} (\bibinfo{year}{1994}), \bibinfo{pages}{305--326}.
\newblock


\bibitem[Grosen(2025)]%
        {Monitaal}
\bibfield{author}{\bibinfo{person}{Thomas~M{\o}ller Grosen}.} \bibinfo{year}{2025}\natexlab{}.
\newblock \bibinfo{title}{The \textsc{MoniTAal} tool}.
\newblock \bibinfo{howpublished}{\url{https://github.com/DEIS-Tools/MoniTAal}}.
\newblock
\newblock
\shownote{[Accessed Aug 31st, 2025]}.


\bibitem[Henzinger(1998)]%
        {H98}
\bibfield{author}{\bibinfo{person}{Thomas~A. Henzinger}.} \bibinfo{year}{1998}\natexlab{}.
\newblock \showarticletitle{It's about time: Real-time logics reviewed}. In \bibinfo{booktitle}{\emph{CONCUR'98 Concurrency Theory}}, \bibfield{editor}{\bibinfo{person}{Davide Sangiorgi} {and} \bibinfo{person}{Robert de~Simone}} (Eds.). \bibinfo{publisher}{Springer Berlin Heidelberg}, \bibinfo{address}{Berlin, Heidelberg}, \bibinfo{pages}{439--454}.
\newblock
\showISBNx{978-3-540-68455-8}


\bibitem[Henzinger et~al\mbox{.}(1994)]%
        {henzinger1994symbolic}
\bibfield{author}{\bibinfo{person}{Thomas~A Henzinger}, \bibinfo{person}{Xavier Nicollin}, \bibinfo{person}{Joseph Sifakis}, {and} \bibinfo{person}{Sergio Yovine}.} \bibinfo{year}{1994}\natexlab{}.
\newblock \showarticletitle{Symbolic model checking for real-time systems}.
\newblock \bibinfo{journal}{\emph{Information and computation}} \bibinfo{volume}{111}, \bibinfo{number}{2} (\bibinfo{year}{1994}), \bibinfo{pages}{193--244}.
\newblock


\bibitem[Henzinger et~al\mbox{.}(1998)]%
        {icalp-raskin}
\bibfield{author}{\bibinfo{person}{Thomas~A. Henzinger}, \bibinfo{person}{Jean{-}Fran{\c{c}}ois Raskin}, {and} \bibinfo{person}{Pierre{-}Yves Schobbens}.} \bibinfo{year}{1998}\natexlab{}.
\newblock \showarticletitle{The Regular Real-Time Languages}. In \bibinfo{booktitle}{\emph{Automata, Languages and Programming, 25th International Colloquium, ICALP'98, Aalborg, Denmark, July 13-17, 1998, Proceedings}} \emph{(\bibinfo{series}{Lecture Notes in Computer Science}, Vol.~\bibinfo{volume}{1443})}, \bibfield{editor}{\bibinfo{person}{Kim~Guldstrand Larsen}, \bibinfo{person}{Sven Skyum}, {and} \bibinfo{person}{Glynn Winskel}} (Eds.). \bibinfo{publisher}{Springer}, \bibinfo{pages}{580--591}.
\newblock
\href{https://doi.org/10.1007/BFb0055086}{doi:\nolinkurl{10.1007/BFb0055086}}


\bibitem[Herbreteau and Point(2019)]%
        {TChecker}
\bibfield{author}{\bibinfo{person}{Fr{\'e}d{\'e}ric Herbreteau} {and} \bibinfo{person}{G{\'e}rald Point}.} \bibinfo{year}{2019}\natexlab{}.
\newblock \bibinfo{title}{The TChecker tool and libraries}.
\newblock \bibinfo{howpublished}{\url{https://github.com/ticktac-project/tchecker}}.
\newblock
\newblock
\shownote{[Accessed Aug 31st, 2025]}.


\bibitem[Hirshfeld and Rabinovich(2006)]%
        {rabin}
\bibfield{author}{\bibinfo{person}{Y. Hirshfeld} {and} \bibinfo{person}{A. Rabinovich}.} \bibinfo{year}{2006}\natexlab{}.
\newblock \showarticletitle{An Expressive Temporal Logic for Real Time}. In \bibinfo{booktitle}{\emph{MFCS}}. \bibinfo{pages}{492--504}.
\newblock


\bibitem[Hirshfeld and Rabinovich(2007)]%
        {Rabinovich2007}
\bibfield{author}{\bibinfo{person}{Yoram Hirshfeld} {and} \bibinfo{person}{Alexander Rabinovich}.} \bibinfo{year}{2007}\natexlab{}.
\newblock \showarticletitle{Expressiveness of Metric modalities for continuous time}.
\newblock \bibinfo{journal}{\emph{Logical Methods in Computer Science}} \bibinfo{volume}{3}, \bibinfo{number}{1} (\bibinfo{year}{2007}), \bibinfo{pages}{1--11}.
\newblock


\bibitem[Hirshfeld and Rabinovich(1999)]%
        {rabinovichY}
\bibfield{author}{\bibinfo{person}{Yoram Hirshfeld} {and} \bibinfo{person}{Alexander~Moshe Rabinovich}.} \bibinfo{year}{1999}\natexlab{}.
\newblock \showarticletitle{A Framework for Decidable Metrical Logics}. In \bibinfo{booktitle}{\emph{Automata, Languages and Programming, 26th International Colloquium, ICALP'99, Prague, Czech Republic, July 11-15, 1999, Proceedings}} \emph{(\bibinfo{series}{Lecture Notes in Computer Science}, Vol.~\bibinfo{volume}{1644})}, \bibfield{editor}{\bibinfo{person}{Jir{\'{\i}} Wiedermann}, \bibinfo{person}{Peter van Emde~Boas}, {and} \bibinfo{person}{Mogens Nielsen}} (Eds.). \bibinfo{publisher}{Springer}, \bibinfo{pages}{422--432}.
\newblock
\href{https://doi.org/10.1007/3-540-48523-6\_39}{doi:\nolinkurl{10.1007/3-540-48523-6\_39}}


\bibitem[Hirshfeld and Rabinovich(2004)]%
        {Hirshfeld2004}
\bibfield{author}{\bibinfo{person}{Yoram Hirshfeld} {and} \bibinfo{person}{Alexander~Moshe Rabinovich}.} \bibinfo{year}{2004}\natexlab{}.
\newblock \showarticletitle{Logics for Real Time: Decidability and Complexity}.
\newblock \bibinfo{journal}{\emph{Fundamenta Informaticae}} \bibinfo{volume}{62}, \bibinfo{number}{1} (\bibinfo{year}{2004}), \bibinfo{pages}{1--28}.
\newblock


\bibitem[Ho(2019)]%
        {H19}
\bibfield{author}{\bibinfo{person}{Hsi{-}Ming Ho}.} \bibinfo{year}{2019}\natexlab{}.
\newblock \showarticletitle{Revisiting timed logics with automata modalities}. In \bibinfo{booktitle}{\emph{Proceedings of the 22nd {ACM} International Conference on Hybrid Systems: Computation and Control, {HSCC} 2019, Montreal, QC, Canada, April 16-18, 2019}}, \bibfield{editor}{\bibinfo{person}{Necmiye Ozay} {and} \bibinfo{person}{Pavithra Prabhakar}} (Eds.). \bibinfo{publisher}{{ACM}}, \bibinfo{pages}{67--76}.
\newblock
\href{https://doi.org/10.1145/3302504.3311818}{doi:\nolinkurl{10.1145/3302504.3311818}}


\bibitem[Ho et~al\mbox{.}(2025)]%
        {concur25}
\bibfield{author}{\bibinfo{person}{Hsi-Ming Ho}, \bibinfo{person}{Shankara~Narayanan Krishna}, \bibinfo{person}{Khushraj Madnani}, \bibinfo{person}{Rupak Majumdar}, {and} \bibinfo{person}{Paritosh Pandya}.} \bibinfo{year}{2025}\natexlab{}.
\newblock \showarticletitle{{Expressive Equivalence Between Decidable Freeze and Metric Timed Temporal Logics.}}. In \bibinfo{booktitle}{\emph{36th International Conference on Concurrency Theory (CONCUR 2025)}} \emph{(\bibinfo{series}{Leibniz International Proceedings in Informatics (LIPIcs)}, Vol.~\bibinfo{volume}{348})}, \bibfield{editor}{\bibinfo{person}{Patricia Bouyer} {and} \bibinfo{person}{Jaco van~de Pol}} (Eds.). \bibinfo{publisher}{Schloss Dagstuhl -- Leibniz-Zentrum f{\"u}r Informatik}, \bibinfo{address}{Dagstuhl, Germany}, \bibinfo{pages}{24:1--24:24}.
\newblock
\showISBNx{978-3-95977-389-8}
\showISSN{1868-8969}
\href{https://doi.org/10.4230/LIPIcs.CONCUR.2025.24}{doi:\nolinkurl{10.4230/LIPIcs.CONCUR.2025.24}}


\bibitem[Ho and Madnani(2025)]%
        {ho2025metric}
\bibfield{author}{\bibinfo{person}{Hsi-Ming Ho} {and} \bibinfo{person}{Khushraj Madnani}.} \bibinfo{year}{2025}\natexlab{}.
\newblock \showarticletitle{Metric Quantifiers and Counting in Timed Logics and Automata}.
\newblock \bibinfo{journal}{\emph{Information and Computation}} (\bibinfo{year}{2025}), \bibinfo{pages}{105268}.
\newblock


\bibitem[Holte et~al\mbox{.}(1989)]%
        {Holte1989pinwheel}
\bibfield{author}{\bibinfo{person}{Robert Holte}, \bibinfo{person}{Aloysius Mok}, \bibinfo{person}{Louis Rosier}, \bibinfo{person}{Igor Tulchinsky}, {and} \bibinfo{person}{Donald Varvel}.} \bibinfo{year}{1989}\natexlab{}.
\newblock \showarticletitle{The pinwheel: A real-time scheduling problem}. In \bibinfo{booktitle}{\emph{Proceedings of the 22nd Hawaii International Conference of System Science}}. \bibinfo{pages}{693--702}.
\newblock


\bibitem[Holzmann(1997)]%
        {DBLP:journals/tse/Holzmann97}
\bibfield{author}{\bibinfo{person}{Gerard~J. Holzmann}.} \bibinfo{year}{1997}\natexlab{}.
\newblock \showarticletitle{The Model Checker {SPIN}}.
\newblock \bibinfo{journal}{\emph{{IEEE} Trans. Software Eng.}} \bibinfo{volume}{23}, \bibinfo{number}{5} (\bibinfo{year}{1997}), \bibinfo{pages}{279--295}.
\newblock
\href{https://doi.org/10.1109/32.588521}{doi:\nolinkurl{10.1109/32.588521}}


\bibitem[Holzmann et~al\mbox{.}(1992)]%
        {holzmann1992coverage}
\bibfield{author}{\bibinfo{person}{Gerard~J Holzmann}, \bibinfo{person}{Patrice Godefroid}, {and} \bibinfo{person}{Didier Pirottin}.} \bibinfo{year}{1992}\natexlab{}.
\newblock \showarticletitle{Coverage preserving reduction strategies for reachability analysis}. In \bibinfo{booktitle}{\emph{Protocol Specification, Testing and Verification, XII}}. Elsevier, \bibinfo{pages}{349--363}.
\newblock


\bibitem[Hunter(2013)]%
        {hunter}
\bibfield{author}{\bibinfo{person}{P. Hunter}.} \bibinfo{year}{2013}\natexlab{}.
\newblock \showarticletitle{When is Metric Temporal Logic Expressively Complete?}. In \bibinfo{booktitle}{\emph{CSL}}. \bibinfo{pages}{380--394}.
\newblock


\bibitem[Jin et~al\mbox{.}(2014)]%
        {jin2014powertrain}
\bibfield{author}{\bibinfo{person}{Xiaoqing Jin}, \bibinfo{person}{Jyotirmoy~V Deshmukh}, \bibinfo{person}{James Kapinski}, \bibinfo{person}{Koichi Ueda}, {and} \bibinfo{person}{Ken Butts}.} \bibinfo{year}{2014}\natexlab{}.
\newblock \showarticletitle{Powertrain control verification benchmark}. In \bibinfo{booktitle}{\emph{Proceedings of the 17th international conference on Hybrid systems: computation and control}}. \bibinfo{pages}{253--262}.
\newblock


\bibitem[Kant et~al\mbox{.}(2015)]%
        {KanLaa15}
\bibfield{author}{\bibinfo{person}{Gijs Kant}, \bibinfo{person}{Alfons Laarman}, \bibinfo{person}{Jeroen Meijer}, \bibinfo{person}{Jaco van~de Pol}, \bibinfo{person}{Stefan Blom}, {and} \bibinfo{person}{Tom van Dijk}.} \bibinfo{year}{2015}\natexlab{}.
\newblock \showarticletitle{{LTS}min: High-Performance Language-Independent Model Checking}. In \bibinfo{booktitle}{\emph{TACAS}} \emph{(\bibinfo{series}{LNCS}, Vol.~\bibinfo{volume}{9035})}. \bibinfo{publisher}{Springer}, \bibinfo{pages}{692--707}.
\newblock


\bibitem[Kesten et~al\mbox{.}(1998)]%
        {kesten1998algorithmic}
\bibfield{author}{\bibinfo{person}{Yonit Kesten}, \bibinfo{person}{Amir Pnueli}, {and} \bibinfo{person}{Li-on Raviv}.} \bibinfo{year}{1998}\natexlab{}.
\newblock \showarticletitle{Algorithmic verification of linear temporal logic specifications}. In \bibinfo{booktitle}{\emph{Automata, Languages and Programming: 25th International Colloquium, ICALP'98 Aalborg, Denmark, July 13--17, 1998 Proceedings 25}}. Springer, \bibinfo{pages}{1--16}.
\newblock


\bibitem[Koymans(1990)]%
        {koymans}
\bibfield{author}{\bibinfo{person}{Ron Koymans}.} \bibinfo{year}{1990}\natexlab{}.
\newblock \showarticletitle{Specifying Real-Time Properties with Metric Temporal Logic}.
\newblock \bibinfo{journal}{\emph{Real Time Syst.}} \bibinfo{volume}{2}, \bibinfo{number}{4} (\bibinfo{year}{1990}), \bibinfo{pages}{255--299}.
\newblock
\href{https://doi.org/10.1007/BF01995674}{doi:\nolinkurl{10.1007/BF01995674}}


\bibitem[K{\v{r}}et{\'\i}nsk{\`y} et~al\mbox{.}(2018)]%
        {kvretinsky2018owl}
\bibfield{author}{\bibinfo{person}{Jan K{\v{r}}et{\'\i}nsk{\`y}}, \bibinfo{person}{Tobias Meggendorfer}, {and} \bibinfo{person}{Salomon Sickert}.} \bibinfo{year}{2018}\natexlab{}.
\newblock \showarticletitle{Owl: A library for-words, automata, and LTL}. In \bibinfo{booktitle}{\emph{ATVA}} \emph{(\bibinfo{series}{LNCS}, Vol.~\bibinfo{volume}{11138})}. Springer, \bibinfo{pages}{543--550}.
\newblock


\bibitem[Krishna et~al\mbox{.}(2023)]%
        {concur23}
\bibfield{author}{\bibinfo{person}{Shankara~Narayanan Krishna}, \bibinfo{person}{Khushraj~Nanik Madnani}, \bibinfo{person}{Rupak Majumdar}, {and} \bibinfo{person}{Paritosh~K. Pandya}.} \bibinfo{year}{2023}\natexlab{}.
\newblock \showarticletitle{Satisfiability Checking of Multi-Variable {TPTL} with Unilateral Intervals Is PSPACE-Complete}. In \bibinfo{booktitle}{\emph{34th International Conference on Concurrency Theory, {CONCUR} 2023, September 18-23, 2023, Antwerp, Belgium}} \emph{(\bibinfo{series}{LIPIcs}, Vol.~\bibinfo{volume}{279})}, \bibfield{editor}{\bibinfo{person}{Guillermo~A. P{\'{e}}rez} {and} \bibinfo{person}{Jean{-}Fran{\c{c}}ois Raskin}} (Eds.). \bibinfo{publisher}{Schloss Dagstuhl - Leibniz-Zentrum f{\"{u}}r Informatik}, \bibinfo{pages}{23:1--23:18}.
\newblock
\href{https://doi.org/10.4230/LIPICS.CONCUR.2023.23}{doi:\nolinkurl{10.4230/LIPICS.CONCUR.2023.23}}


\bibitem[Laarman et~al\mbox{.}(2013)]%
        {laarman2013multi}
\bibfield{author}{\bibinfo{person}{Alfons Laarman}, \bibinfo{person}{Mads~Chr Olesen}, \bibinfo{person}{Andreas~Engelbredt Dalsgaard}, \bibinfo{person}{Kim~Guldstrand Larsen}, {and} \bibinfo{person}{Jaco Van De~Pol}.} \bibinfo{year}{2013}\natexlab{}.
\newblock \showarticletitle{Multi-core emptiness checking of timed B{\"u}chi automata using inclusion abstraction}. In \bibinfo{booktitle}{\emph{Computer Aided Verification: 25th International Conference, CAV 2013, Saint Petersburg, Russia, July 13-19, 2013. Proceedings 25}}. Springer, \bibinfo{pages}{968--983}.
\newblock


\bibitem[Lal and Reps(2009)]%
        {lal2009reducing}
\bibfield{author}{\bibinfo{person}{Akash Lal} {and} \bibinfo{person}{Thomas Reps}.} \bibinfo{year}{2009}\natexlab{}.
\newblock \showarticletitle{Reducing concurrent analysis under a context bound to sequential analysis}.
\newblock \bibinfo{journal}{\emph{Formal Methods in System Design}}  \bibinfo{volume}{35} (\bibinfo{year}{2009}), \bibinfo{pages}{73--97}.
\newblock


\bibitem[Lamport(1987)]%
        {lamport1987fast}
\bibfield{author}{\bibinfo{person}{Leslie Lamport}.} \bibinfo{year}{1987}\natexlab{}.
\newblock \showarticletitle{A fast mutual exclusion algorithm}.
\newblock \bibinfo{journal}{\emph{ACM Transactions on Computer Systems (TOCS)}} \bibinfo{volume}{5}, \bibinfo{number}{1} (\bibinfo{year}{1987}), \bibinfo{pages}{1--11}.
\newblock


\bibitem[Maler and Nickovic(2004)]%
        {maler2004monitoring}
\bibfield{author}{\bibinfo{person}{Oded Maler} {and} \bibinfo{person}{Dejan Nickovic}.} \bibinfo{year}{2004}\natexlab{}.
\newblock \showarticletitle{Monitoring temporal properties of continuous signals}. In \bibinfo{booktitle}{\emph{International Symposium on Formal Techniques in Real-Time and Fault-Tolerant Systems}}. Springer, \bibinfo{pages}{152--166}.
\newblock


\bibitem[Maler et~al\mbox{.}(2005)]%
        {maler2005real}
\bibfield{author}{\bibinfo{person}{Oded Maler}, \bibinfo{person}{Dejan Nickovic}, {and} \bibinfo{person}{Amir Pnueli}.} \bibinfo{year}{2005}\natexlab{}.
\newblock \showarticletitle{Real time temporal logic: Past, present, future}. In \bibinfo{booktitle}{\emph{International conference on formal modeling and analysis of timed systems}}. Springer, \bibinfo{pages}{2--16}.
\newblock


\bibitem[Maler et~al\mbox{.}(2006)]%
        {maler2006mitl}
\bibfield{author}{\bibinfo{person}{Oded Maler}, \bibinfo{person}{Dejan Nickovic}, {and} \bibinfo{person}{Amir Pnueli}.} \bibinfo{year}{2006}\natexlab{}.
\newblock \showarticletitle{From {MITL} to timed automata}. In \bibinfo{booktitle}{\emph{Formal Modeling and Analysis of Timed Systems: 4th International Conference, FORMATS 2006, Paris, France, September 25-27, 2006. Proceedings 4}}. Springer, \bibinfo{pages}{274--289}.
\newblock


\bibitem[Manna and Pnueli(1989)]%
        {manna1989completing}
\bibfield{author}{\bibinfo{person}{Zohar Manna} {and} \bibinfo{person}{Amir Pnueli}.} \bibinfo{year}{1989}\natexlab{}.
\newblock \showarticletitle{Completing the temporal picture}. In \bibinfo{booktitle}{\emph{International Colloquium on Automata, Languages, and Programming}}. Springer, \bibinfo{pages}{534--558}.
\newblock


\bibitem[Menghi et~al\mbox{.}(2019)]%
        {menghi2}
\bibfield{author}{\bibinfo{person}{Claudio Menghi}, \bibinfo{person}{Christos Tsigkanos}, \bibinfo{person}{Thorsten Berger}, {and} \bibinfo{person}{Patrizio Pelliccione}.} \bibinfo{year}{2019}\natexlab{}.
\newblock \showarticletitle{PsALM: specification of dependable robotic missions}. In \bibinfo{booktitle}{\emph{Proceedings of the 41st International Conference on Software Engineering: Companion Proceedings}} (Montreal, Quebec, Canada) \emph{(\bibinfo{series}{ICSE '19})}. \bibinfo{publisher}{IEEE Press}, \bibinfo{pages}{99–102}.
\newblock
\href{https://doi.org/10.1109/ICSE-Companion.2019.00048}{doi:\nolinkurl{10.1109/ICSE-Companion.2019.00048}}


\bibitem[Menghi et~al\mbox{.}(2018)]%
        {menghi3}
\bibfield{author}{\bibinfo{person}{Claudio Menghi}, \bibinfo{person}{Christos Tsigkanos}, \bibinfo{person}{Thorsten Berger}, \bibinfo{person}{Patrizio Pelliccione}, {and} \bibinfo{person}{Carlo Ghezzi}.} \bibinfo{year}{2018}\natexlab{}.
\newblock \showarticletitle{Poster: Property Specification Patterns for Robotic Missions}. In \bibinfo{booktitle}{\emph{2018 IEEE/ACM 40th International Conference on Software Engineering: Companion (ICSE-Companion)}}. \bibinfo{pages}{434--435}.
\newblock


\bibitem[Menghi et~al\mbox{.}(2021)]%
        {menghi1}
\bibfield{author}{\bibinfo{person}{Claudio Menghi}, \bibinfo{person}{Christos Tsigkanos}, \bibinfo{person}{Patrizio Pelliccione}, \bibinfo{person}{Carlo Ghezzi}, {and} \bibinfo{person}{Thorsten Berger}.} \bibinfo{year}{2021}\natexlab{}.
\newblock \showarticletitle{Specification Patterns for Robotic Missions}.
\newblock \bibinfo{journal}{\emph{IEEE Transactions on Software Engineering}} \bibinfo{volume}{47}, \bibinfo{number}{10} (\bibinfo{year}{2021}), \bibinfo{pages}{2208--2224}.
\newblock
\href{https://doi.org/10.1109/TSE.2019.2945329}{doi:\nolinkurl{10.1109/TSE.2019.2945329}}


\bibitem[Ni{\v{c}}kovi{\'c} and Piterman(2010)]%
        {nivckovic2010mtl}
\bibfield{author}{\bibinfo{person}{Dejan Ni{\v{c}}kovi{\'c}} {and} \bibinfo{person}{Nir Piterman}.} \bibinfo{year}{2010}\natexlab{}.
\newblock \showarticletitle{From MTL to deterministic timed automata}. In \bibinfo{booktitle}{\emph{International Conference on Formal Modeling and Analysis of Timed Systems}}. Springer, \bibinfo{pages}{152--167}.
\newblock


\bibitem[Ni{\v{c}}kovi{\'c} and Yamaguchi(2020)]%
        {nivckovic2020rtamt}
\bibfield{author}{\bibinfo{person}{Dejan Ni{\v{c}}kovi{\'c}} {and} \bibinfo{person}{Tomoya Yamaguchi}.} \bibinfo{year}{2020}\natexlab{}.
\newblock \showarticletitle{RTAMT: Online robustness monitors from STL}. In \bibinfo{booktitle}{\emph{International Symposium on Automated Technology for Verification and Analysis}}. Springer, \bibinfo{pages}{564--571}.
\newblock


\bibitem[Ouaknine and Worrell(2007)]%
        {ouaknine2007decidability}
\bibfield{author}{\bibinfo{person}{Jo{\"e}l Ouaknine} {and} \bibinfo{person}{James Worrell}.} \bibinfo{year}{2007}\natexlab{}.
\newblock \showarticletitle{On the decidability and complexity of metric temporal logic over finite words}.
\newblock \bibinfo{journal}{\emph{Logical Methods in Computer Science}}  \bibinfo{volume}{3} (\bibinfo{year}{2007}).
\newblock


\bibitem[Pandya and Shah(2011)]%
        {concur11}
\bibfield{author}{\bibinfo{person}{P.~K. Pandya} {and} \bibinfo{person}{S. Shah}.} \bibinfo{year}{2011}\natexlab{}.
\newblock \showarticletitle{On Expressive Powers of Timed Logics: Comparing Boundedness, Non-punctuality, and Deterministic Freezing}. In \bibinfo{booktitle}{\emph{CONCUR}}. \bibinfo{pages}{60--75}.
\newblock


\bibitem[Peled(1993)]%
        {peled1993all}
\bibfield{author}{\bibinfo{person}{Doron Peled}.} \bibinfo{year}{1993}\natexlab{}.
\newblock \showarticletitle{All from one, one for all: on model checking using representatives}. In \bibinfo{booktitle}{\emph{International Conference on Computer Aided Verification}}. Springer, \bibinfo{pages}{409--423}.
\newblock


\bibitem[Pnueli(1977)]%
        {4567924}
\bibfield{author}{\bibinfo{person}{Amir Pnueli}.} \bibinfo{year}{1977}\natexlab{}.
\newblock \showarticletitle{The temporal logic of programs}. In \bibinfo{booktitle}{\emph{18th Annual Symposium on Foundations of Computer Science (sfcs 1977)}}. \bibinfo{pages}{46--57}.
\newblock
\href{https://doi.org/10.1109/SFCS.1977.32}{doi:\nolinkurl{10.1109/SFCS.1977.32}}


\bibitem[Pnueli and Zaks(2008)]%
        {pnueli2008merits}
\bibfield{author}{\bibinfo{person}{Amir Pnueli} {and} \bibinfo{person}{Aleksandr Zaks}.} \bibinfo{year}{2008}\natexlab{}.
\newblock \showarticletitle{On the merits of temporal testers}.
\newblock \bibinfo{journal}{\emph{25 Years of Model Checking: History, Achievements, Perspectives}} (\bibinfo{year}{2008}), \bibinfo{pages}{172--195}.
\newblock


\bibitem[Rabinovich(2010)]%
        {Rabinovich}
\bibfield{author}{\bibinfo{person}{Alexander Rabinovich}.} \bibinfo{year}{2010}\natexlab{}.
\newblock \showarticletitle{Complexity of metric temporal logics with counting and the Pnueli modalities}.
\newblock \bibinfo{journal}{\emph{Theor. Comput. Sci.}} \bibinfo{volume}{411}, \bibinfo{number}{22-24} (\bibinfo{year}{2010}), \bibinfo{pages}{2331--2342}.
\newblock
\href{https://doi.org/10.1016/j.tcs.2010.03.017}{doi:\nolinkurl{10.1016/j.tcs.2010.03.017}}


\bibitem[Raman et~al\mbox{.}(2014)]%
        {raman2014model}
\bibfield{author}{\bibinfo{person}{Vasumathi Raman}, \bibinfo{person}{Mehdi Maasoumy}, {and} \bibinfo{person}{Alexandre Donz{\'e}}.} \bibinfo{year}{2014}\natexlab{}.
\newblock \showarticletitle{Model predictive control from signal temporal logic specifications: A case study}. In \bibinfo{booktitle}{\emph{Proceedings of the 4th ACM SIGBED International Workshop on Design, Modeling, and Evaluation of Cyber-Physical Systems}}. \bibinfo{pages}{52--55}.
\newblock


\bibitem[Raskin(1999)]%
        {raskin-thesis}
\bibfield{author}{\bibinfo{person}{Jean~Francois Raskin}.} \bibinfo{year}{1999}\natexlab{}.
\newblock \emph{\bibinfo{title}{Logics, Automata and Classical Theories for Deciding Real Time}}.
\newblock \bibinfo{thesistype}{Ph.\,D. Dissertation}. \bibinfo{school}{Universit\'e de Namur}.
\newblock


\bibitem[Raskin and Schobbens(1997)]%
        {raskin1997state}
\bibfield{author}{\bibinfo{person}{Jean-Fran{\c{c}}ois Raskin} {and} \bibinfo{person}{Pierre-Yves Schobbens}.} \bibinfo{year}{1997}\natexlab{}.
\newblock \showarticletitle{State clock logic: A decidable real-time logic}. In \bibinfo{booktitle}{\emph{International Workshop on Hybrid and Real-Time Systems}}. Springer, \bibinfo{pages}{33--47}.
\newblock


\bibitem[Raskin and Schobbens(1999)]%
        {raskin1999logic}
\bibfield{author}{\bibinfo{person}{Jean-Fran{\c{c}}ois Raskin} {and} \bibinfo{person}{Pierre-Yves Schobbens}.} \bibinfo{year}{1999}\natexlab{}.
\newblock \showarticletitle{The Logic of Event Clocks - Decidability, Complexity and Expressiveness}.
\newblock \bibinfo{journal}{\emph{Journal of Automata, Languages and Combinatorics}} \bibinfo{volume}{4}, \bibinfo{number}{3} (\bibinfo{year}{1999}), \bibinfo{pages}{247--286}.
\newblock


\bibitem[Roohi et~al\mbox{.}(2018)]%
        {roohi2018parameter}
\bibfield{author}{\bibinfo{person}{Nima Roohi}, \bibinfo{person}{Ramneet Kaur}, \bibinfo{person}{James Weimer}, \bibinfo{person}{Oleg Sokolsky}, {and} \bibinfo{person}{Insup Lee}.} \bibinfo{year}{2018}\natexlab{}.
\newblock \showarticletitle{Parameter invariant monitoring for signal temporal logic}. In \bibinfo{booktitle}{\emph{Proceedings of the 21st International Conference on Hybrid Systems: Computation and Control (part of CPS Week)}}. \bibinfo{pages}{187--196}.
\newblock


\bibitem[Tripakis et~al\mbox{.}(2005)]%
        {DBLP:journals/fmsd/TripakisYB05}
\bibfield{author}{\bibinfo{person}{Stavros Tripakis}, \bibinfo{person}{Sergio Yovine}, {and} \bibinfo{person}{Ahmed Bouajjani}.} \bibinfo{year}{2005}\natexlab{}.
\newblock \showarticletitle{Checking Timed B{\"{u}}chi Automata Emptiness Efficiently}.
\newblock \bibinfo{journal}{\emph{Formal Methods Syst. Des.}} \bibinfo{volume}{26}, \bibinfo{number}{3} (\bibinfo{year}{2005}), \bibinfo{pages}{267--292}.
\newblock
\href{https://doi.org/10.1007/S10703-005-1632-8}{doi:\nolinkurl{10.1007/S10703-005-1632-8}}


\bibitem[{Universit\'e Libre de Bruxelles}(2025)]%
        {MightyLVM}
\bibfield{author}{\bibinfo{person}{{Universit\'e Libre de Bruxelles}}.} \bibinfo{year}{2025}\natexlab{}.
\newblock \bibinfo{title}{MightyL Virtual Machine}.
\newblock \bibinfo{howpublished}{\url{https://verif.ulb.ac.be/mightyl/}}.
\newblock
\newblock
\shownote{[Accessed Aug 31st, 2025]}.


\bibitem[Valmari(1990)]%
        {valmari1990stubborn}
\bibfield{author}{\bibinfo{person}{Antti Valmari}.} \bibinfo{year}{1990}\natexlab{}.
\newblock \showarticletitle{A stubborn attack on state explosion}. In \bibinfo{booktitle}{\emph{International Conference on Computer Aided Verification}}. Springer, \bibinfo{pages}{156--165}.
\newblock


\bibitem[Valmari(1993)]%
        {valmari1993fly}
\bibfield{author}{\bibinfo{person}{Antti Valmari}.} \bibinfo{year}{1993}\natexlab{}.
\newblock \showarticletitle{On-the-fly verification with stubborn sets}. In \bibinfo{booktitle}{\emph{International Conference on Computer Aided Verification}}. Springer, \bibinfo{pages}{397--408}.
\newblock


\bibitem[Vardi(1995)]%
        {DBLP:conf/banff/Vardi95}
\bibfield{author}{\bibinfo{person}{Moshe~Y. Vardi}.} \bibinfo{year}{1995}\natexlab{}.
\newblock \showarticletitle{An Automata-Theoretic Approach to Linear Temporal Logic}. In \bibinfo{booktitle}{\emph{Logics for Concurrency - Structure versus Automata (8th Banff Higher Order Workshop, Banff, Canada, August 27 - September 3, 1995, Proceedings)}} \emph{(\bibinfo{series}{Lecture Notes in Computer Science}, Vol.~\bibinfo{volume}{1043})}, \bibfield{editor}{\bibinfo{person}{Faron Moller} {and} \bibinfo{person}{Graham~M. Birtwistle}} (Eds.). \bibinfo{publisher}{Springer}, \bibinfo{pages}{238--266}.
\newblock
\href{https://doi.org/10.1007/3-540-60915-6\_6}{doi:\nolinkurl{10.1007/3-540-60915-6\_6}}


\bibitem[Waga(2020)]%
        {waga2020falsification}
\bibfield{author}{\bibinfo{person}{Masaki Waga}.} \bibinfo{year}{2020}\natexlab{}.
\newblock \showarticletitle{Falsification of cyber-physical systems with robustness-guided black-box checking}. In \bibinfo{booktitle}{\emph{Proceedings of the 23rd International Conference on Hybrid Systems: Computation and Control}}. \bibinfo{pages}{1--13}.
\newblock


\bibitem[Wilke(1994)]%
        {Wilke}
\bibfield{author}{\bibinfo{person}{Thomas Wilke}.} \bibinfo{year}{1994}\natexlab{}.
\newblock \showarticletitle{Specifying Timed State Sequences in Powerful Decidable Logics and Timed Automata}. In \bibinfo{booktitle}{\emph{Formal Techniques in Real-Time and Fault-Tolerant Systems, Third International Symposium Organized Jointly with the Working Group Provably Correct Systems - ProCoS, L{\"{u}}beck, Germany, September 19-23, Proceedings}}. \bibinfo{pages}{694--715}.
\newblock
\href{https://doi.org/10.1007/3-540-58468-4\_191}{doi:\nolinkurl{10.1007/3-540-58468-4\_191}}


\end{thebibliography}

\appendix
\clearpage

\section{Tester $\ta{}$ for $\phi_1 \until_{\geq l} \phi_2$}
\label{app:until.l.infty.tester}

\begin{figure}[!htbp]
\centering
\begin{tikzpicture}[->, node distance=5cm, transform shape, scale=0.7]
    \node[initial left, state, accepting](0) {$s_0$};
    \node[state,right=5.5cm of 0] (1) {$s_1$};
    \node[state,accepting,right=5.5cm of 1] (2) {$s_2$};

    \path[->] (0) edge[above,bend left=10]
    node{$p_\psi$, $x:=0$} (1)%
    (0) edge[loopbelow,below,looseness=20]
    node[align=center]{$\lnot p_\psi$, $x:=0$} (0)%
    (1) edge[below,bend left=10]
    node{$\lnot p_\psi \land \overline{\phi_2} \land x\geq l$, $x:=0$}
    (0)%
    (1) edge[loopbelow,below,looseness=20]
    node[align=center] {$\lnot p_\psi \land \overline{\phi_1} \land \newneg{\phi_2}$\\
            $\lnot p_\psi \land \simplify{\phi_1} \land x< l$\\
            $p_\psi \land \simplify{\phi_1} \land \newneg{\phi_2}$, $x:=0$ \\
				$p_\psi \land \simplify{\phi_1} \land x<l$, $x:=0$
				} (1)%

    (1) edge[above,bend left=10]
    node[align=center]{$\lnot p_\psi \land \simplify{\phi_1} \land \simplify{\phi_2}$ \\
    $p_\psi \land \simplify{\phi_1} \land \simplify{\phi_2}$, $x := 0$ \\
    $p_\psi\land \simplify{\phi_2}\land x\geq l$, $x:=0$
    } (2)%
    (2) edge[above,bend right=55]
    node{$\lnot p_\psi \land \simplify{\phi_2}\land x\geq l$, $x:=0$}
    (0)%
    (2) edge[below,bend left=10]
    node[align=center] {$\lnot p_\psi \land \overline{\phi_1} \land \newneg{\phi_2}$\\
            $\lnot p_\psi \land \simplify{\phi_1} \land x< l$\\
            $p_\psi \land \simplify{\phi_1} \land \newneg{\phi_2}$, $x:=0$ \\
				$p_\psi \land \simplify{\phi_1} \land x<l$, $x:=0$
				}
	(1)%
    (2) edge[loopbelow,below,looseness=20]
    node[align=center]{$\lnot p_\psi \land \simplify{\phi_1} \land \simplify{\phi_2}$ \\
    $p_\psi \land \simplify{\phi_1} \land \simplify{\phi_2}$, $x := 0$ \\
    $p_\psi\land \simplify{\phi_2}\land x\geq l$, $x:=0$
    } (2);
  \end{tikzpicture}
\caption{The tester $\ta{}$ for $\phi_1 \until_{[l, \infty)} \phi_2$.}
  \label{fig:until_a_infty}
\end{figure}
  
\section{Component $\ta{s}$ for past modalities}
\label{app:past.components}

\begin{figure}[!htbp]
\centering
\begin{tikzpicture}[->, node distance=5cm, transform shape, scale=0.7]
   \node[initial left,state, accepting](0){$s_0$};
   \node[state, right of=0, accepting](1){$s_1$};
   
   \path
   (0) edge[loopabove] node[above=1mm, align=center, looseness=20, out=120, in=90]{$\lnot p_\psi$, $x := 0$} (0)
   (0) edge[->, bend left=10] node[above=1mm, align=center]{$\lnot p_\psi \land \overline{\phi_2}$, $x := 0$} (1)
   (1) edge[loopabove] node[above=1mm,align=center]{$\overline{\phi_1} \land \lnot \overline{\phi_2}$, $x < u$ \\ $p_\psi \land \overline{\phi_2}$, $x < u$, $x := 0$} (1)
   (1) edge[->, bend left=10] node[below=1mm, align=center]{$p_\psi$ \\ $x < u$, $x := 0$} (0);
 \end{tikzpicture}
\caption{The tester $\ta{}$ for $\phi_1 \since_{[0, u)} \phi_2$.}
\label{fig:since.0.u}
\end{figure}

\begin{figure}[!htbp]
\centering
\begin{tikzpicture}[->, node distance=7cm, transform shape, scale=0.7]
   \node[initial left,state, accepting](0){$s_0$};
   \node[initial below, state, right of=0, accepting](1){$s_1$};
   \node[initial right, state, right of=1, accepting](2){$s_2$};
   
   \path
   (0) edge[loopabove] node[above=1mm, align=center, looseness=20, out=120, in=90]{$\lnot p_\psi$, $x := 0$} (0)
   (0) edge[->, bend left=10] node[above=1mm, align=center]{$\lnot p_\psi \land \overline{\phi_1} \land \overline{\phi_2}$, $x := 0$} (1)
   (0) edge[->, bend right=40] node[below=1mm, align=center]{$\lnot p_\psi$, $x := 0$} (2)
   (1) edge[loopabove] node[above=1mm,align=center]{$\neg \overline{\phi_1} \land  \overline{\phi_2}$, $x < u$ \\ $p_\psi \land \overline{\phi_1} \land \overline{\phi_2}$, $x < u$, $x := 0$} (1)
   
   (1) edge[->, bend left=10] node[above=1mm,align=center]{$p_\psi \land \neg \overline{\phi_1} \land  \overline{\phi_2}$, $x < u$ \\ $p_\psi$, $x < u$, $x := 0$} (2)
   (2) edge[loopabove] node[above=1mm, align=center]{$p_\psi \land \lnot \overline{\phi_1} \land \overline{\phi_2}$, $x \geq u$ \\ $\lnot p_\psi \land \lnot \overline{\phi_1} \land \overline{\phi_2}$ \\ $p_\psi$, $x \geq u$, $x := 0$} (2)
   (2) edge[->, bend right = 40] node[above=1mm, align=center]{$p_\psi$, $x \geq u$, $x := 0$} (0)
   (2) edge[->, bend left =10] node[below=1mm, align=center]{$p_\psi \land \overline{\phi_1} \land \overline{\phi_2}$, $x \geq u$, $x := 0$} (1)
   (1) edge[->, bend left=10] node[below=1mm, align=center]{$p_\psi$, $x < u$, $x := 0$} (0);
 \end{tikzpicture}
\caption{The tester $\ta{}$ for $\phi_1 \trigger_{[0, u)} \phi_2$.}
\label{fig:trigger.0.u}
\end{figure}

\begin{figure}[!htbp]
\centering
\begin{tikzpicture}[->, node distance=7cm, transform shape, scale=0.7]
   \node[initial left,state, accepting](0){$s_0$};
   \node[initial below, state, right of=0, accepting](1){$s_1$};
   \node[initial right, state, right of=1, accepting](2){$s_2$};
   
   \path
   (0) edge[loopabove] node[above=1mm, align=center, looseness=20, out=120, in=90]{$\lnot p_\psi$, $x := 0$} (0)
   (0) edge[->, bend left=10] node[above=1mm, align=center]{$\lnot p_\psi \land \overline{\phi_1} \land \overline{\phi_2}$, $x := 0$} (1)
   (0) edge[->, bend right=40] node[below=1mm, align=center]{$\lnot p_\psi \land \overline{\phi_1}$, $x := 0$} (2)
   
   (1) edge[loopabove] node[above=1mm,align=center]{$\neg \overline{\phi_1} \land \overline{\phi_2}$ \\ $p_\psi \land \overline{\phi_1} \land \overline{\phi_2}$, $x > l$, $x := 0$} (1)
   
   (1) edge[->, bend left=10] node[above=1mm,align=center]{$\lnot \overline{\phi_1}$, $x := 0$ \\ $p_\psi \land \overline{\phi_1}$, $x > l$, $x := 0$} (2)
   (2) edge[loopabove] node[above=1mm, align=center]{$\lnot \overline{\phi_1}$, $x \leq l$ \\ $p_\psi \land \land \overline{\phi_1}$, $x \leq l$, $x := 0$} (2)
   (2) edge[->, bend right = 40] node[above=1mm, align=center]{$p_\psi$, $x \leq l$, $x := 0$} (0)
   (2) edge[->, bend left =10] node[below=1mm, align=center]{$p_\psi \land \overline{\phi_1} \land \overline{\phi_2}$, $x \leq l$, $x := 0$} (1)
   (1) edge[->, bend left=10] node[below=1mm, align=center]{$p_\psi$, $x > l$, $x := 0$} (0);
 \end{tikzpicture}
\caption{The tester $\ta{}$ for $\phi_1 \trigger_{(l, \infty)} \phi_2$.}
\label{fig:trigger.l.infty}
\end{figure}

  \begin{figure}[!htbp]
    \centering
    \begin{tikzpicture}[->, node distance=6cm, transform shape, scale=0.55]
        \node[initial left,state, accepting](0){$s_0$};
        \node[initial below, state, below right = 2cm and 2cm of 0, accepting](0'){$s_0'$};
        \node[initial below, state, right of=0', accepting](1){$s_1$};
        \node[initial below, state, right of=1, accepting](2){$s_2$};
        \node[initial below, state, right of=2, accepting](3){$s_3$};
        
        \path
        (0) edge[loopabove, ->, looseness=20, out=135, in=90] node[above=1mm, align=center]{$\neg p_\psi^1$, $x_1 := 0$} (0)
        (0') edge[->] node[left=1mm, align=center]{$p_\psi^1$, $x_1 \geq u$, $x_1 := 0$} (0)                    
        (0') edge[loopabove, ->, out=80, in=50, looseness=10] node[above right=1mm and 1mm, align=center]{$\neg p^1_\psi \wedge  \overline{\chi_1}$ \\ $p^1_\psi \wedge \overline{\chi_1}$, $x_1 \geq u$} (0')
        (1) edge[loopabove, ->, out=90, in=60, looseness=10] node[above right=1mm and 1mm, align=center]{$\neg  p^1_\psi \wedge  \overline{\chi_2}$ \\ $p^1_\psi \wedge \overline{\chi_2}$, $x_1 \geq u$} (1)
        (2) edge[loopabove, ->, out=90, in=60, looseness=10] node[above right=1mm and 1mm, align=center]{$\neg  p^1_\psi \wedge  \overline{\chi_3}$ \\ $p^1_\psi \wedge \overline{\chi_3}$, $x_1 \geq u$} (2)
        (3) edge[loopabove, ->, out=90, in=60, looseness=10] node[above right=1mm and 1mm, align=center]{$\neg  p^1_\psi \wedge  \overline{\chi_4}$ \\ $p^1_\psi \wedge \overline{\chi_4}$, $x_1 \geq u$} (3)
        
        (1) edge[->] node[below=1mm, align=center]{ $\neg  p^1_\psi \land \neg \overline{\chi_1}$ \\ $p^1_\psi \land \neg \overline{\chi_1}$, $x_1 \geq u$} (0')                    
        (2) edge[->] node[below=1mm, align=center]{ $\neg  p^1_\psi \land \neg \overline{\chi_2}$ \\ $p^1_\psi \land \neg \overline{\chi_2}$, $x_1 \geq u$} (1)                    
        (3) edge[->] node[below=1mm, align=center]{ $\neg  p^1_\psi \land \neg \overline{\chi_3}$ \\ $p^1_\psi \land \neg \overline{\chi_3}$, $x_1 \geq u$} (2)                    
        (0) edge[loopabove,->,looseness=1.5,out=60,in=110] node[above=1mm, align=center]{$\neg p_\psi^1$, $x_1 := 0$} (3)
        (0) edge[loopabove,->,looseness=1.5,out=60,in=110] node[above=1mm, align=center]{$\neg p_\psi^1$, $x_1 := 0$} (2)
        (0) edge[loopabove,->,looseness=1.5,out=60,in=110] node[above=1mm, align=center]{$\neg p_\psi^1$, $x_1 := 0$} (1)
        (0) edge[loopabove,->,looseness=1.5,out=60,in=110] node[right=1mm, align=center]{$\neg p_\psi^1$, $x_1 := 0$} (0')
        
        (0') edge[loopleft,->,looseness=20,out=150,in=170] node[below left=1mm and 1mm, align=center]{$p_\psi^1$, $x_1 \geq u$, $x_1 := 0$} (0')
        (0') edge[loopbelow,->,looseness=1.5,out=-115,in=-120] node[below right=1mm and -2mm, align=center]{$p_\psi^1$, $x_1 \geq u$, $x_1 := 0$} (1)
        (0') edge[loopbelow,->,looseness=1.5,out=-115,in=-120] node[below right=1mm and 1mm, align=center]{$p_\psi^1$, $x_1 \geq u$, $x_1 := 0$} (2)
        (0') edge[loopbelow,->,looseness=1.5,out=-115,in=-120] node[below right=1mm and 1mm, align=center]{$p_\psi^1$, $x_1 \geq u$, $x_1 := 0$} (3);
      \end{tikzpicture}                    
    \caption{A component $\ta{}$ of the tester $\ta{}$ for  $\dualPnO_{< u}(\phi_1,\ldots, \phi_n)$.}
    \label{fig:pndual.past}                    
  \end{figure}

\end{document}